\newtheorem{defn}{Definition}
\newtheorem{thm}{Theorem}[section]
\newtheorem{cor}[thm]{Corollary}
\newtheorem{prop}{Proposition}
\newtheorem{lem}[thm]{Lemma}
\newtheorem{conj}[thm]{Conjecture}
\newtheorem{constr}[thm]{Construction}
\newtheorem{note}{Remark}
\newtheorem{example}{Example}
\newcommand{\bit}{\begin{itemize}}
\newcommand{\eit}{\end{itemize}}
\newcommand{\bcor}{\begin{cor}}
\newcommand{\ecor}{\end{cor}}
\newcommand{\beq}{\begin{equation}}
\newcommand{\eeq}{\end{equation}}
\newcommand{\beqn}{\begin{equation*}}
\newcommand{\eeqn}{\end{equation*}}
\newcommand{\bea}{\begin{eqnarray}}
\newcommand{\eea}{\end{eqnarray}}
\newcommand{\bean}{\begin{eqnarray*}}
\newcommand{\eean}{\end{eqnarray*}}
\newcommand{\ben}{\begin{enumerate}}
\newcommand{\een}{\end{enumerate}}
\newcommand{\bdefn}{\begin{defn}}
\newcommand{\edefn}{\end{defn}}
\newcommand{\bnote}{\begin{note}}
\newcommand{\enote}{\end{note}}
\newcommand{\bprop}{\begin{prop}}
\newcommand{\eprop}{\end{prop}}
\newcommand{\blem}{\begin{lem}}
\newcommand{\elem}{\end{lem}}
\newcommand{\bthm}{\begin{thm}}
\newcommand{\ethm}{\end{thm}}
\newcommand{\bconj}{\begin{conj}}
\newcommand{\econj}{\end{conj}}
\newcommand{\bconstr}{\begin{constr}}
\newcommand{\econstr}{\end{constr}}
\newcommand{\bpf}{\begin{proof}}
\newcommand{\epf}{\end{proof}}
\begin{document}

\title{Codes with Local Regeneration}
 \author{Govinda M. Kamath, N. Prakash, V. Lalitha and P. Vijay Kumar
\thanks{Govinda M. Kamath,  N. Prakash, V. Lalitha and P. Vijay Kumar are with the Department of ECE, Indian Institute of Science, Bangalore,
560 012 India (email: \{govinda,  prakashn, lalitha, vijay\}@ece.iisc.ernet.in).}
\thanks{The results in this paper were presented in part at the 2012 IEEE International Symposium on Information Theory \cite{PraKamLalKum}, NSF Workshop on Frontiers in Stochastic Systems, Networks and Control, Texas A\&M University, College Station, TX, October 27, 2012 as well as the Workshop on Trends in Coding Theory, Ascona, Switzerland, October 28-November 2, 2012.}
\thanks{This research is supported in part by the National Science Foundation under Grant 0964507 and in part by the NetApp
Faculty Fellowship program. The work of V. Lalitha is supported by a TCS Research Scholarship. }
\thanks{A part of work in Section \ref{sec:scalar_local} of this paper has appeared in an
earlier arXiv submission, see \cite{PraKamLalKum_arxiv}.}
}
\date{\today}
\maketitle

\begin{abstract}
Regenerating codes and codes with locality are two schemes that have
recently been proposed to ensure data collection and reliability in a distributed storage network.  In a situation where one
is attempting to repair a failed node, regenerating codes seek to minimize the amount of data downloaded for node repair,
while codes with locality attempt to minimize the number of helper nodes accessed.  In this paper, we provide several
constructions for a class of vector codes with locality in which the local codes are regenerating codes, that enjoy both
advantages.   We derive an upper bound on the minimum distance of this class of codes and show that the proposed
constructions achieve this bound. The constructions include both the cases where the local regenerating codes correspond to
the MSR as well as the MBR point on the storage-repair-bandwidth tradeoff curve of regenerating codes.   Also included is a performance comparison of various code constructions for fixed block length and minimum distance.
\end{abstract}

\section{Introduction}\label{sec:intro}

Apart from ensuring reliability, the principal goals in a distributed storage network relate to data collection and node
repair.  We will seek architectures which store the data across $n$ nodes in such a way that a data collector can recover the
data by connecting to a small number $k$ of nodes in the network.     Node repair will be accomplished by connecting to a
subset of $d$ nodes and downloading a uniform amount of data from each node for a total download of $W$.  Here $W$ is termed
the repair bandwidth and it is of interest to minimize both $W$ as well as the repair degree, defined as the number $d$ of
nodes accessed during repair. It is also desirable to have multiple options for both data collection and node repair in terms
of the set of $k$ or $d$ nodes that one connects to.

Distributed storage systems found in practice, include Windows Azure Storage \cite{HuaSimXu_etal_azure} and the Hadoop-based
systems~\cite{hadoop} used in Facebook and Yahoo. In Facebook data centers, a $[14,10]$ maximum-distance separable (MDS)
code is used in a coding scheme referred to as HDFS RAID~\cite{hdfs_raid}. Here data can be downloaded by connecting to any
$10$ nodes. The coding scheme is however, inefficient in terms of node repair, as the repair degree as well as repair
bandwidth both equal $10$. Regenerating codes~\cite{DimGodWuWaiRam} and codes with locality~\cite{GopHuaSimYek} are two
alternative approaches proposed to address the situation.

Two alternative approaches to coding have recently been advocated to enable more efficient node repair, namely, regenerating codes~\cite{DimGodWuWaiRam} and codes with locality~\cite{GopHuaSimYek}.

\subsection{Regenerating Codes} \label{sec:intro_regen}

In the regenerating-code framework, there are $n$ nodes in the network, with each node storing $\alpha$ code symbols drawn from a finite field $\mathbb{F}_q$.  A data collector should be able to download the data by connecting to any $k$ nodes (see Fig. \ref{fig:regen_framework}.).   Node repair is required to be accomplished by connecting to any $d$ nodes and downloading $\beta \leq \alpha$ symbols from each node.  Thus the repair bandwidth is given by $d\beta$.  A regenerating code may be regarded as a vector code, i.e., a code of block length $n$ over the vector alphabet $\mathbb{F}_q^{\alpha}$.  The parameter set of a regenerating code will be listed in one of two forms: $((n,k,d), (\alpha,\beta),B)$ if the file size or number of message symbols $B$ is known and relevant and $((n,k,d), (\alpha,\beta))$ otherwise.

\begin{figure}[h]
  \centering
  \subfigure[Data Collection]{\label{fig:regen_data_collection}\includegraphics[height=2.5in]{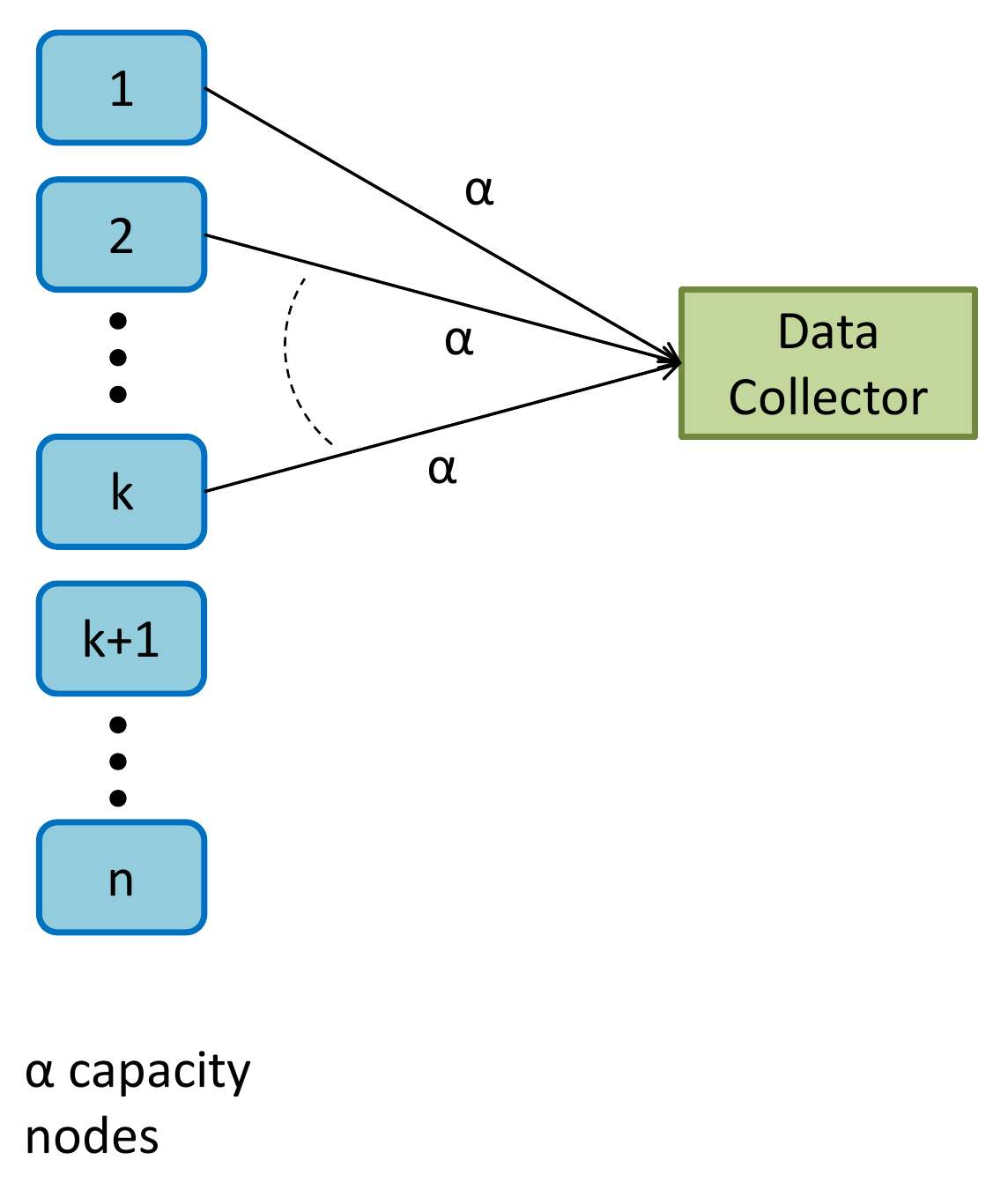}}
  \hspace{0.5in}
  \subfigure[Node Repair]{\label{fig:regen_node_repair}\includegraphics[height=2.5in]{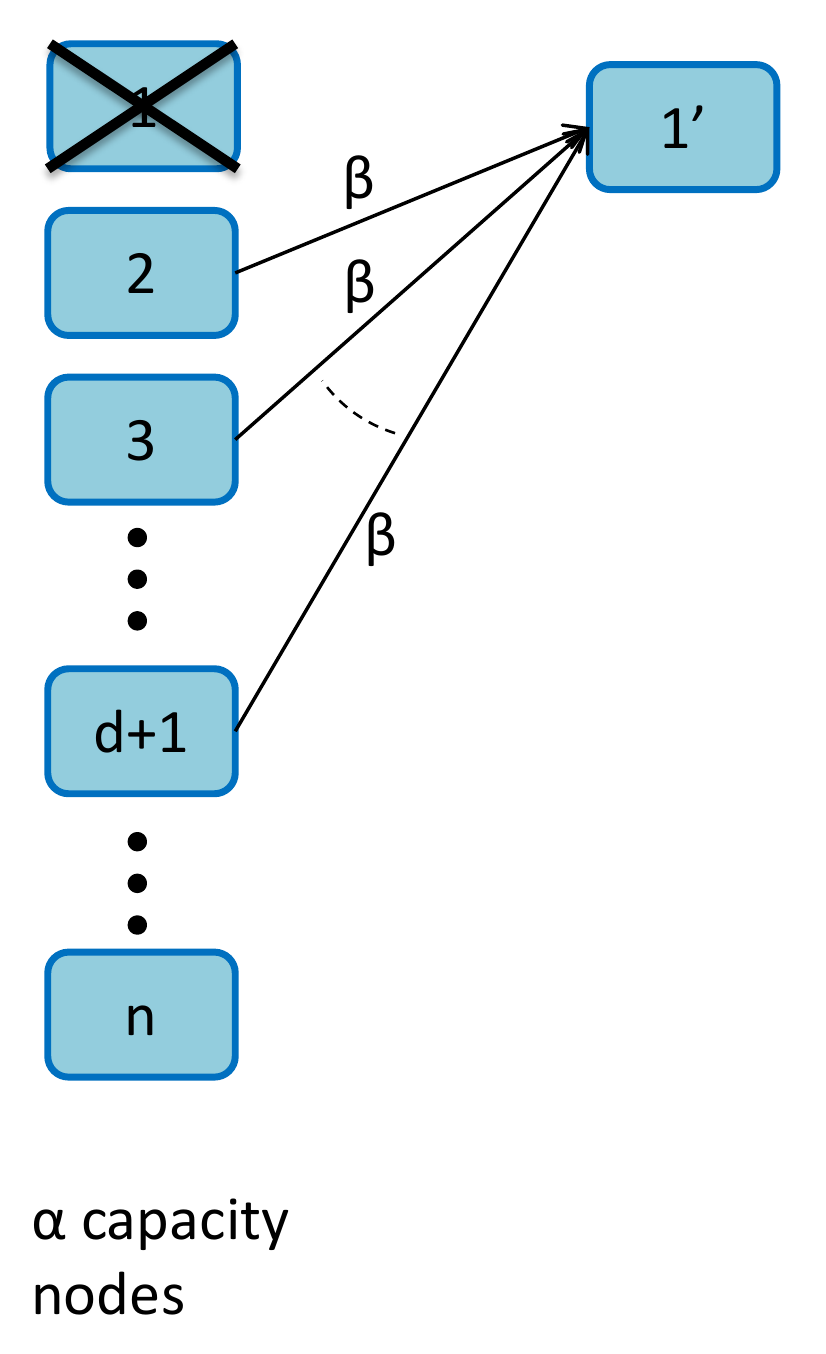}}
  \caption{The Regenerating Code Framework.}
  \label{fig:regen_framework}
\end{figure}

A cut-set bound based on network-coding concepts, tells us that given code parameters $((n,k,d), (\alpha,\beta),B)$ the size $B$ of the data file is upper bounded~\cite{DimGodWuWaiRam} by
\bea \label{eq:cut_set_bd}
B & \leq & \sum_{i=0}^{k-1} \min\{\alpha,(d-i)\beta\} .
\eea
A regenerating code is considered as being optimal if
\ben
\item the file size $B$ satisfies \eqref{eq:cut_set_bd}
\item the bound is violated if either $\alpha$ or $\beta$ is reduced.
\een

Given the file size $B$ as well as regenerating-code parameters $(k,d)$, there are multiple pairs $(\alpha,\beta)$ that satisfy \eqref{eq:cut_set_bd}.  This leads to the storage-repair-bandwidth trade-off shown in Fig.~\ref{fig:trade-off}.   The two extremal points in the trade-off are the Minimum Storage Regeneration (MSR) and Minimum Bandwidth Regeneration (MBR) points. At the MSR point, we have  $\alpha = \frac{B}{k} = (d - k + 1)\beta$ and at the MBR point, $\alpha = d\beta$. The remaining points on the trade-off curve will be referred to as interior points.  A regenerating code is said to be exact if the replacement of a failed node stores the same data as did the failed node, and functional otherwise.

\begin{figure}[h]
  \centering
  \subfigure[Storage-repair-bandwidth trade-off for fixed values of $B=7500, k=10, d=12$.]{\label{fig:trade-off}\includegraphics[height=2.8in]{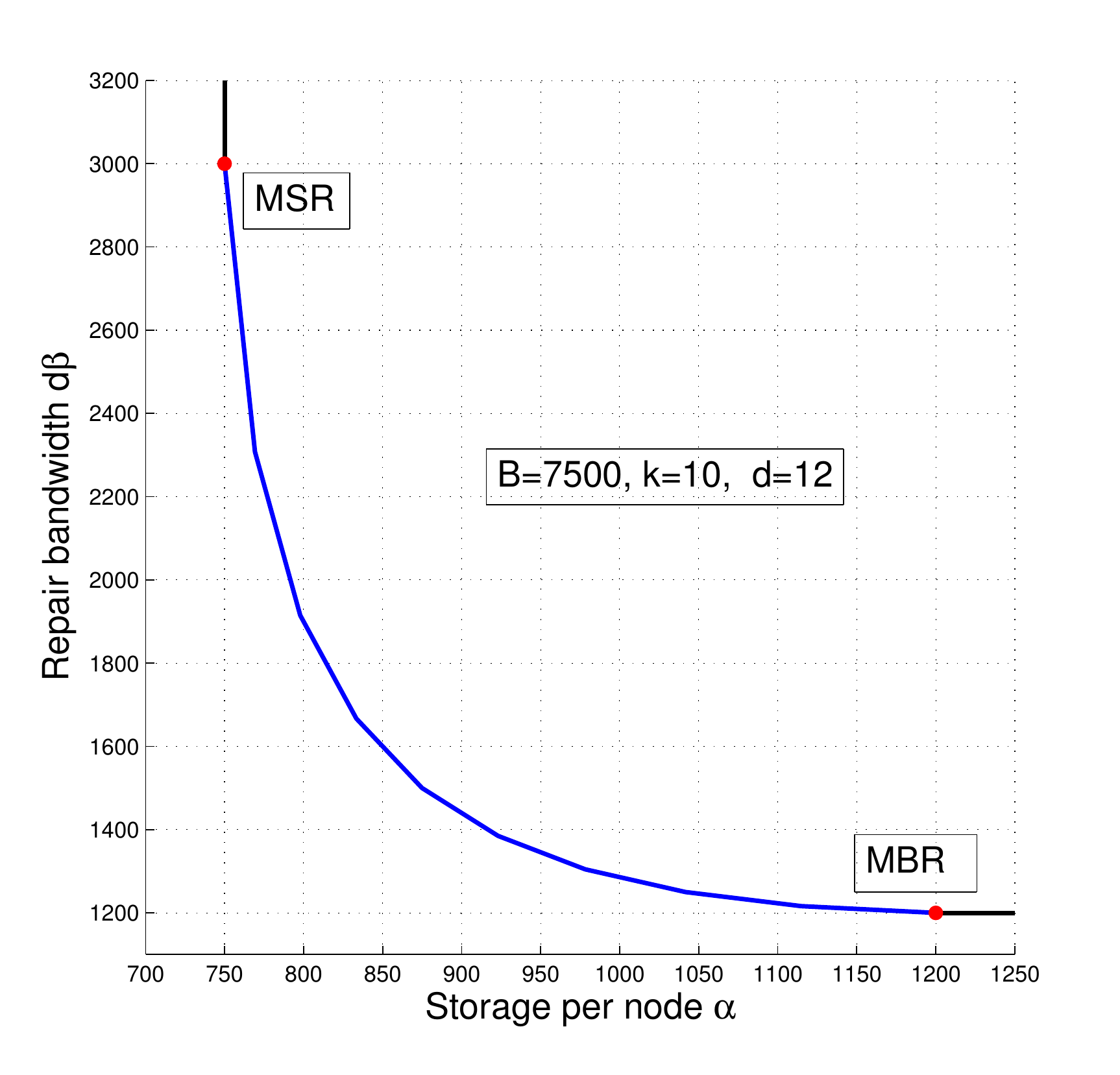}}
  \hspace{0.5in}
  \subfigure[Pictorial depiction of the repair-by-transfer MBR code in Eg~\ref{eg:pentagon}.]{\label{fig:pentagon}\includegraphics[height=2in]{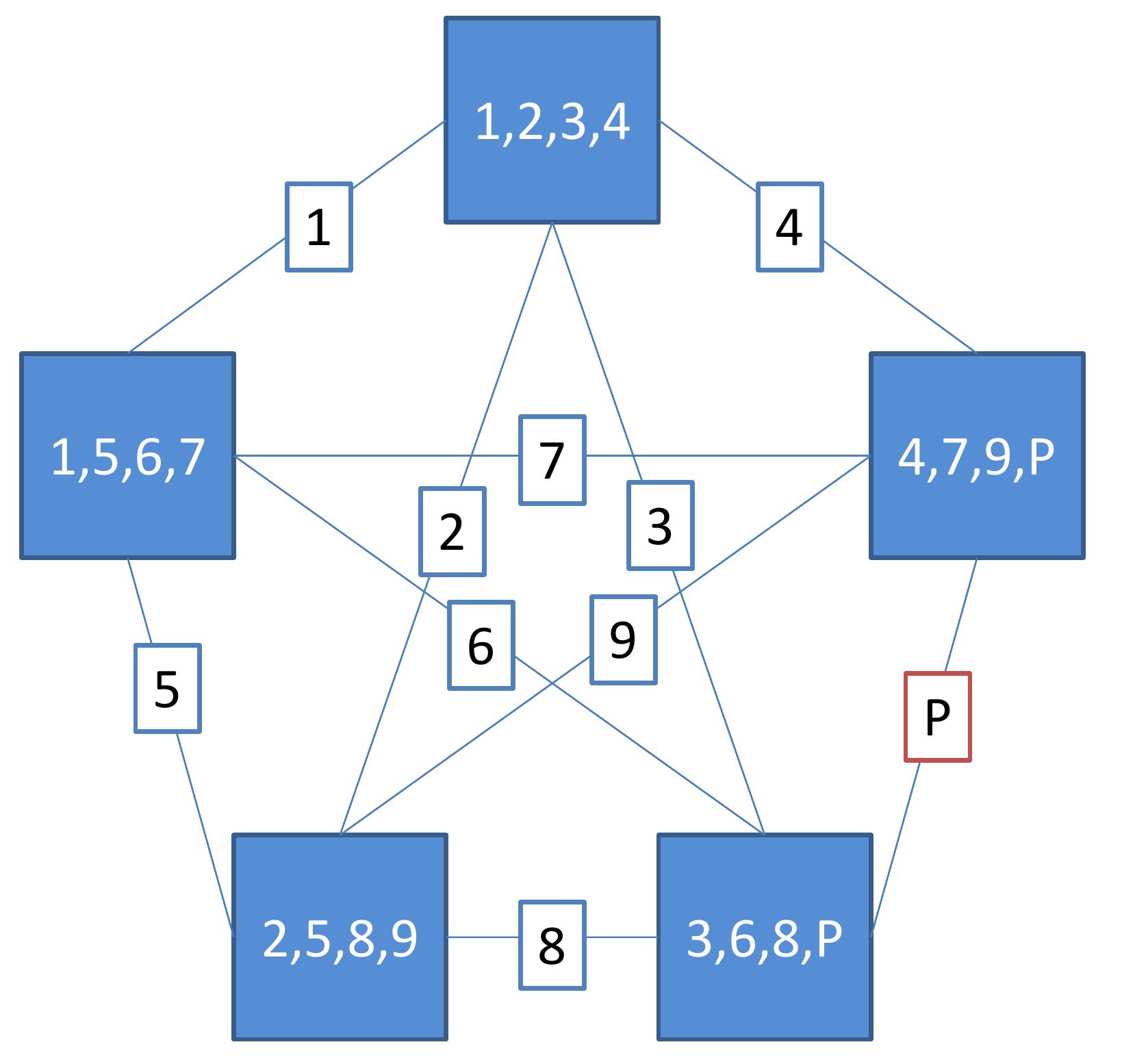}}
  \caption{Storage-Bandwidth Tradeoff and an example construction.}
 \end{figure}

\vspace{0.1in}

\subsubsection{Regenerating Code Constructions} \label{sec:regen_code_constns}

It has been shown in \cite{ShaRasKumRam_rbt} that the interior points on the trade-off are not achievable using exact-repair
regenerating codes. We summarize below the constructions known in literature for the MSR and MBR points. Except where
otherwise noted, all results described below, pertain to exact-repair regenerating codes.
\ben[a)]
\item MBR Point: There are two principal families of MBR codes:
\ben [(i)]
\item The repair-by-transfer family discussed in Example~\ref{eg:pentagon}, \item MBR codes constructed using the product
matrix construction, see~\cite{RasShaKum_pm}.  This construction can be used to generate MBR codes for any value of code
parameters
\bean
\left((n,k,d),(\alpha=d\beta,\beta=1),B= dk - {k \choose 2}\right).
\eean
\een

 \item MSR Point: At the MSR point, we have $\alpha=(d-k+1)\beta$. There are several families of MSR codes:
\ben[(i)]
\item MSR codes constructed using the product-matrix construction, see~\cite{RasShaKum_pm}.  This construction can be used to
generate MSR codes for any value of code parameters
\bean
((n,k,d \geq 2k-2),(\alpha,\beta=1),B= k\alpha ).
\eean
\item MSR codes with parameters
\bean
((n,k,d = n-1\geq 2k-1),(\alpha,\beta=1),B= k\alpha ),
\eean
described in \cite{ShaRasKumRam_ia} and \cite{SuhRam}.
 \item The Hadamard-design-based construction \cite{PapDimCad} of high-rate MSR codes with parameters
 \bean
((n,k=n-2,d = n-1),(\alpha,\beta=2^k),B= k\alpha ).
\eean
\item The Zigzag code construction \cite{TamWanBru} of high-rate MSR codes with parameters
 \bean
((n,k=n-m,d = n-1),(\alpha,\beta=m^{k-1}),B= k\alpha ),
\eean
that are guaranteed to only repair systematic nodes.
\item An explicit, functional-repair MSR code with parameters
\bean
((n,k,d = k+1),(\alpha=2,\beta=1),B= 2\alpha ),
\eean
can be found in \cite{ShaRasKumRam_ia}.
\item Apart from these explicit constructions, the existence of MSR codes for all $(n,k,d), \ n > d \geq k,$ is shown in
\cite{CadJafMalRamSuh}.
\een
\een

\vspace{0.1in}

\subsubsection{Other Work Related to Regenerating Codes}

\ben[a)]

\item Fractional repetition codes, a framework studied in \cite{ElrRam}, is related to the repair-by-transfer MBR code
discussed above.   Under this framework, node repair is required to be carried out without any computations, i.e, by mere
transfer of data.  The requirement on node repair is relaxed in the sense that, one needs to be able to recover from failure
of a node by connecting to any one of several subsets of $d$ nodes rather than by connecting to any $d$ nodes.
\item The framework of cooperative regenerating codes where multiple node repairs are carried out simultaneously and in a
cooperative manner has been studied in \cite{ShuHu}. A cut-set based bound is derived and two explicit class of constructions
are presented there.
\een

Studies on implementation and performance evaluation of regenerating codes in distributed storage settings can be found in
\cite{HuYuLiLeeLui, HuCheLeeTan, DumBie}.

\vspace{0.1in}

An example construction of a regenerating code taken from \cite{ShaRasKumRam_rbt}, is given below.
\vspace{0.1in}

\begin{example} \label{eg:pentagon}
In the example (see Fig.~\ref{fig:pentagon}), the regenerating code has parameters $((n=5, k=3,d=4), (\alpha=4,\beta=1),B=9)$. The collection of $B=9$ message symbols are first encoded using a $[10,9,2]$ MDS code of block length $10$.  Each code symbol is then placed on a distinct edge of a fully-connected graph with $5$ nodes.  The code symbols stored in a node are the symbols associated to edges incident on the particular node.  It follows that every pair of nodes share exactly one code symbol.  A data collector connects to $k=3$ nodes and thus has access to $\alpha k - {k \choose 2}=12-3=9$ distinct code symbols of the MDS code and can hence decode the message symbols.

Node repair is easily accomplished by the simple means of symbol transfer. Thus the replacement of a failed node simply receives from each of the neighbors of the failed node, the symbol the two nodes share in common.  The code can be verified to achieve the upper bound in  \eqref{eq:cut_set_bd} corresponding to the MBR point, i.e., corresponding to $\alpha=d\beta$ and for this reason, these codes are referred to as repair-by-transfer MBR codes (RBT-MBR).
The construction generalizes to any parameter set of the form $((n,k,d=n-1),(\alpha=n-1,\beta=1))$ and the file size $B$ is then given by
\bean
B & = & dk-{k \choose 2},
\eean
and can be shown to achieve the cut-set bound at the MBR point.
\end{example}

\vspace{0.1in}

\subsection{Codes with Locality}

In ~\cite{GopHuaSimYek}, Gopalan et al introduced the interesting notion of locality of information. This was also in part, motivated by applications to distributed storage, where the aim was to design codes in such a way that the number of remaining nodes accessed to repair a failed node is much smaller than the block length of the code.  The $i$th code-symbol $c_i$, $1 \leq i \leq n$, of an $[n,k,d]$ linear code $\mathcal{C}$ over the field $\mathbb{F}_q$ is said to have locality $r$ if this symbol can be recovered by accessing at most $r$ other code symbols of code $\mathcal{C}$. Equivalently, for any coordinate $i$, there exists a row in the parity-check matrix of the code of Hamming weight at most $r + 1$, whose support includes $i$. An $(r, d)$ code was defined as a systematic linear code $\mathcal{C}$ having minimum distance $d$, where all $k$ message symbols have locality $r$.  It was shown that the minimum distance of an $(r, d)$ code is upper bounded by
\bean \label{eq:gopalan_bound}
d & \leq &  n- k -\left\lceil \frac{k}{r} \right\rceil + 2.
\eean
A class of codes constructed earlier and known as pyramid codes \cite{HuaCheLi} are shown to be $(r, d)$ codes that are optimal with respect to this bound.  The structure of an optimal code is deduced for the case when $r|k$ and $d<r+3$ and it is shown that the local codes must necessarily be MDS and support disjoint.  The paper also introduces the notion of all-symbol locality in which all the code symbols, not just the message symbols have locality $r$.  The existence of all-symbol locality was established for the case when $(r+1) |n$.

\vspace{0.1in}

\subsubsection{Other Work on Codes with Locality}

A class of codes with locality known as {\em Homomorphic Self-Repairing Codes}, that
makes use of linearized polynomials, was introduced in an earlier work by the authors of \cite{OggDat}. These codes have
all-symbol locality and an example provided in \cite{OggDat} turns out to be optimal with respect to the bound in
\eqref{eq:gopalan_bound}.  A general construction of explicit and optimal codes with all-symbol locality is provided in
\cite{SilRawVis} that is based on Gabidullin maximum rank-distance codes.

Locality in vector codes is considered in \cite{PapDim}. The authors derive an upper bound on the minimum
distance of a vector code under the assumption of all-symbol locality and also provide an explicit construction of a class of
codes which achieve the bound for certain code-parameter sets. This construction is related to an earlier
construction of codes with locality (see \cite{PapLuoDimHuaLi}), involving the same authors.

The notion of locality in scalar codes was subsequently extended by the authors of the present paper in \cite{PraKamLalKum},
to the case when the local codes have minimum distance greater than $2$.  An analogous bound on minimum distance and code
constructions are provided and these results are described in detail in Section~\ref{sec:scalar_local}.  An earlier,
parity-splitting construction appearing in \cite{HanLas} turns out to provide an example of such an extension of the notion
of locality. A similar construction was subsequently presented in \cite{BlaHafHet} in the context of solid-state storage
drives.  These results will be revisited in Section~\ref{sec:scalar_local}.    The results in \cite{RawKoySilVis} are
described in Section~\ref{sec:overview_results}.

Studies on implementation and performance evaluation of codes with locality in distributed storage settings can be found in
\cite{HuaSimXu_etal_azure, sathiamoorthy}. In \cite{HuaSimXu_etal_azure}, a class of code termed as local reconstruction code
and related to the pyramid code has been employed in a distributed storage code solution known as Windows Azure Storage, see
Fig.~\ref{fig:azure}.  This code has block length $16$ and by puncturing the code in two coordinates $P_1$, $P_2$, one will
obtain a code that is the concatenation of two support disjoint single-parity-check $[7,6,2]$ MDS codes (corresponding to
code symbols labeled using $X$ and $Y$ respectively), which provide locality. The two global parity symbols $P_1,P_2$ ensure
that the minimum distance of the overall code equals $4$.
\begin{center}
\hspace*{2.0in}
\begin{figure} [h!]
\begin{center}
\includegraphics[width=4.0in]{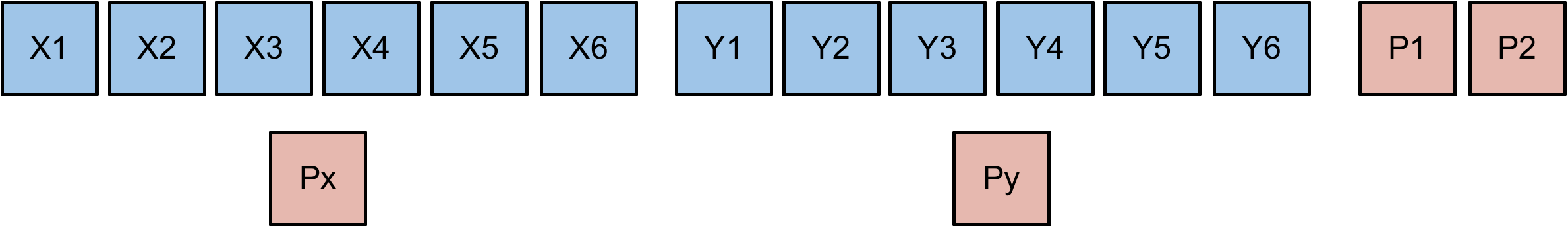}
\end{center}
\caption{The pyramid code employed in Windows Azure Storage.}
\label{fig:azure}
\end{figure}
\end{center}
In \cite{sathiamoorthy}, the authors discuss implementation of a class of codes with locality (called locally repairable
codes) in Hadoop Distributed File System and compare the performance with Reed Solomon codes.

\subsection{Array Codes}

Regenerating codes are examples of vector codes, by which we mean codes over a vector alphabet, $\mathbb{F}_q^{m}$ for some
integer $m$.   In the case of regenerating codes, $m=\alpha$.    Any vector code may also be regarded as an array code in
which each codeword corresponds to an array of size $(m \times n)$.  A survey of array codes can be found in
\cite{BlaFarTil}.   Array codes have found extensive application in storage systems and examples include the EVENODD code
constructed in \cite{BlaBraBruMen} and later extended in \cite{BlaBraBruMenVar} as well as the Row-Diagonal Parity code
presented in \cite{CorEngGoeGrcKleLeoSan}.
Section~\ref{sec:vec_code_prelims}.

Section~\ref{sec:overview_results} of the paper provides a brief overview of the results of the present paper.  A comparison
of some coding options for distributed storage also appears here.
The extended notion of scalar locality is discussed in Section~\ref{sec:scalar_local}.  Section~\ref{sec:vec_code_prelims}
introduces vector codes and Section~\ref{sec:locality_vec_codes} discusses locality in the context of vector codes and
provides bounds on minimum distance and code size assuming the local codes to be identical.   Locality in the context of
vector codes permits one to consider codes in which the local codes are regenerating codes. Optimal constructions of vector
codes with locality, where the local codes are MSR and MBR codes are presented in Section~\ref{sec:msr_local_codes}
and~\ref{sec:mbr_local_codes} respectively. In Section~\ref{sec:kappa_bound}, additional bounds on minimum distance are
derived that take into account the particular structure of the code and which do not require the local codes to be identical.
Most proofs are relegated to the Appendix.

\section{Overview of Results} \label{sec:overview_results}

\subsection{Results in Summary}

In terms of coding options for distributed storage, regenerating codes aim to minimize the download bandwidth during node
repair, whereas, codes with locality seek to reduce the number of helper nodes contacted.  This raises the question as to
whether it is possible to design codes that combine the desirable features of both classes of codes, i.e.,
construct codes with locality, in which the local codes are regenerating codes.  The present paper
answers this in the affirmative.  We term such codes as codes with local regeneration or equivalently, local
regenerating codes.  We develop bounds on the minimum distance of local regenerating codes as well as several constructions
of codes that achieve these bounds with equality and are hence, optimal.

In an independent and parallel work\footnote{Both papers were presented at the Workshop on {\em Trends in Coding
Theory}, Ascona, Oct. 29-Nov. 2, 2012.}, the authors of \cite{RawKoySilVis} also consider codes with all-symbol locality
where the local codes are regenerating codes. Bounds on minimum distance are provided and a construction for optimal codes
with MSR all-symbol locality based on rank-distance codes are presented.

We now briefly state the various results contained in this paper.

\bit
\item {\em Extension of Notion of Scalar Locality} The paper begins by extending the
notion of locality in scalar codes, where we allow the local codes to be more general codes, instead of just single parity
check codes (and hence can have local minimum distance $\delta > 2$). An upper bound to the minimum distance is derived and
the structure of a code that achieves this bound is derived for the case when the dimension of the local code divides the
dimension of the overall or global code.  It is shown that pyramid codes achieve the upper bound on minimum distance.   The
existence of optimal codes with all-symbol locality when the local code length divides the global code length is also shown.
The explicit construction of codes with all-symbol locality contained in \cite{HanLas} called the parity-splitting
construction is also presented and shown to be optimal for certain parameter sets. It is noted that concatenated codes are
examples of codes with all-symbol locality and this is used to obtain a new upper bound on the minimum distance of a
concatenated code. We note that most of the results on the extension of scalar locality have appeared in \cite{PraKamLalKum}.

\item {\em Vector Codes} The discussion of codes with local regeneration necessitates a discussion of vector codes of which
they are an example. As such, some basic observations about codes possessing a vector alphabet are made here and it is shown
that exact-repair MBR and MSR regenerating codes naturally fall into a particular class of vector codes, which we term as
uniform rank-accumulation (URA) codes.

\item {\em Vector Codes with Locality} This is followed by an extension of the notion of locality to vector codes and a bound on the minimum distance is derived for the case when the local codes have identical parameters and belong to the class of URA codes.  The structure of the code is determined under additional assumptions.  These assumptions hold for the case when (a) the local codes are MBR codes and (b) the local codes are MSR codes and the scalar dimension of the local code divides the scalar dimension of the global code.  The scalar dimension of a code over the vector alphabet $\mathbb{F}_q^{\alpha}$ is its dimension as a vector space over $\mathbb{F}_q$.

\item {\em Codes with Local Regeneration} We then provide several constructions for the class of codes with local
regeneration, which are optimal with respect to the upper bound on the minimum distance. The constructions include both the
cases the local codes belong to the MSR and the MBR family of regenerating codes.

\item {\em Bounds on Minimum Distance of a General Vector Code with Locality} Finally, we also provide additional bounds on
minimum distance that take into account the particular structure of the vector code and which do not require the
local codes to have identical parameters.
\eit

\vspace{0.1in}
A summary of the bounds on minimum distance derived in this paper is given in Table~\ref{tab:dmin_bounds}. An overview of
various constructions (appearing in this paper) of codes with local regeneration is presented next.

\begin{table}
 \caption{Bounds on Minimum Distance Appearing in the Paper} \label{tab:dmin_bounds}
\centering
\begin{tabular}{||c|c|c||} \hline
 \hline
  & & \\
  Theorem & Bound & Comments \\
  &  &  \\
  \hline
  \hline
  && \\
 Theorem \ref{thm:scalar_info_locality}   & $d_{\min} \ \leq \ n - k + 1 - \left(\left\lceil{\frac{k}{r}}\right\rceil - 1\right)(\delta - 1)$  & Bound for scalar codes with $(r,\delta)$ information locality  \\
  && \\
    \hline
    && \\
  Theorem \ref{thm:URA_bound} & $d_{\min}  \ \leq \  n-P^{(\text{inv})}(K)+1$ &    Bound for vector codes  with exact
$(r,\delta)$ information locality  \\
  URA Bound &&  with URA local codes\\
    && \\
    \hline
      && \\
     Theorem \ref{thm:kappa_bound} & $d_{\text{min}} \  \leq \ n - |\mathcal{I}_0| + 1 - \left(\left \lceil \frac{|\mathcal{I}_0|}{r}\right \rceil - 1\right)(\delta - 1)$  & Bound for vector codes  with $(r,\delta)$ information locality \\
     $\mathcal{I}_0$ bound && \\
      && \\
      \hline
      \hline
\end{tabular}

\end{table}

\subsection{Overview of Constructions of Codes with Local Regeneration} \label{sec:constructions}

The constructions presented in this section are optimal with respect to the bound on minimum distance of a code with local URA codes as well as the bound on scalar dimension presented in Theorem \ref{thm:URA_bound} of Section~\ref{sec:locality_vec_codes}.
\ben[(a)]
\item {\em Sum-Parity MSR-Local Code:}   The construction is illustrated in Fig.~\ref{fig:sum_parity}.   The construction begins with a parent MSR code whose generator matrix is of the form $[I \mid P_1 \mid P_2]$ and which is moreover, such that the punctured code having generator matrix $[I \mid P_1]$ is also an MSR code. The codewords in the constructed local regenerating code are then of the form
\bean
[\mathbf{m}_a^t \mid  \mathbf{m}_a^tP_1 \mid \mathbf{m}_b^t \mid  \mathbf{m}_b^tP_1 \mid  (\mathbf{m}_a+\mathbf{m}_b)^t P_2],
\eean
where $\mathbf{m}_a, \mathbf{m}_b$ are the message vectors associated with the two constituent, local regenerating codes.
This construction turns out to yield optimal codes regardless of the number of constituent local codes, provided that the global minimum distance $d_{\min}$ does not exceed twice the local minimum distance $\delta$.

\begin{center}
\begin{figure}[h!]
\begin{center}
\includegraphics[width=3.3in]{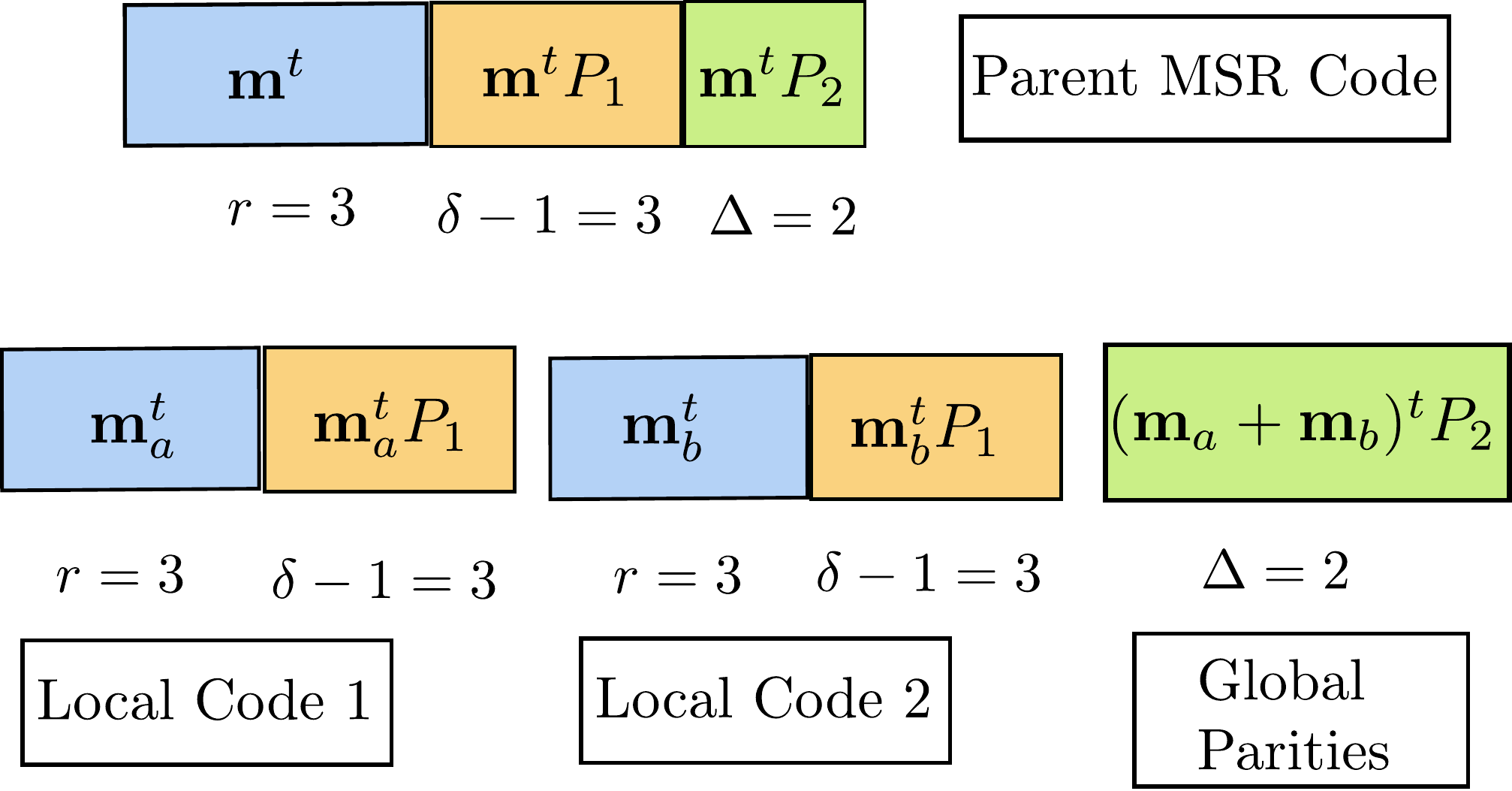}
\end{center}
\caption{The Sum-Parity MSR-Local Code Construction.}\label{fig:sum_parity}
\end{figure}
\end{center}

\item {\em Pyramid-like MSR-Local Code:}   This construction mimics the construction of pyramid codes, with the difference that we are now dealing with vector symbols in place of scalars, and local MSR codes in place of local MDS codes. If we puncture $\Delta$ thick columns, and  the repair degree of the MSR  code that we start out with is less than $n-\Delta $, then the construction will result in an optimal MSR-Local code.

\item {\em Repair-by-Transfer MBR-Local Codes:}  In a repair-by-transfer MBR code, the vector MBR code may be regarded as being built on top of a scalar MDS code.  A scalar pyramid code has constituent local codes which are scalar MDS codes.  The scalar pyramid code also possess a certain number $p$ of global parity symbols.   The present construction begins with a scalar pyramid code in which there are $\ell$ local MDS codes and where the number of global parity symbols $p$ is a multiple of $\alpha$, say $p=\Delta\alpha $.  The next step is the building of a separate repair-by-transfer MBR code on top of each of the $\ell$ constituent local MDS codes.  In the final step, $\Delta$ global-parity nodes are added, each containing a disjoint set of $\alpha$ scalar global parities of the scalar pyramid code.  The construction is illustrated in Fig.~\ref{fig:mbr_local}.

\begin{center}
\begin{figure} [h!]
\begin{center}
\includegraphics[width=5in]{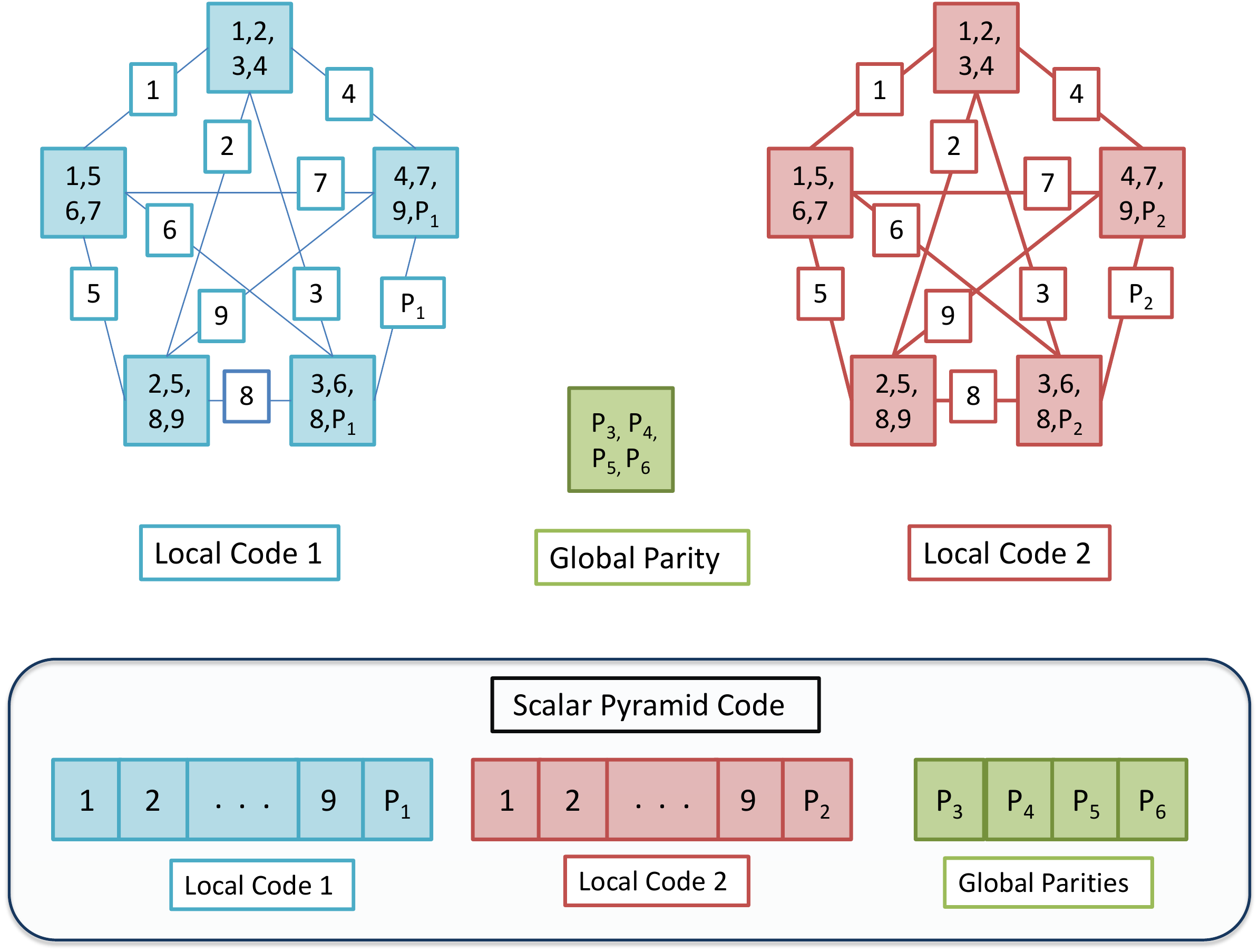}
\end{center}
\caption{The Repair-by-Transfer MBR-Local code is shown on top.  The code below is the underlying scalar pyramid code used to construct the MBR-Local code.}
\label{fig:mbr_local}
\end{figure}
\end{center}

\item {\em Repair-by-Transfer MBR-Local Codes with All-Symbol Locality}.  The difference between this and the immediately previous Repair-by-Transfer Local-MBR code construction is that the scalar pyramid code employed in that construction is replaced here by a scalar all-symbol locality code. Thus the construction begins with a scalar all-symbol locality code in which there are $\ell$ local MDS codes.  The next step is the building of a separate repair-by-transfer MBR code on top of each of the $\ell$ constituent local MDS codes.  The construction is illustrated in Fig.~\ref{fig:mbr_local_all_symbol}.

\begin{center}
\begin{figure} [h!]
\begin{center}
\includegraphics[width=5in]{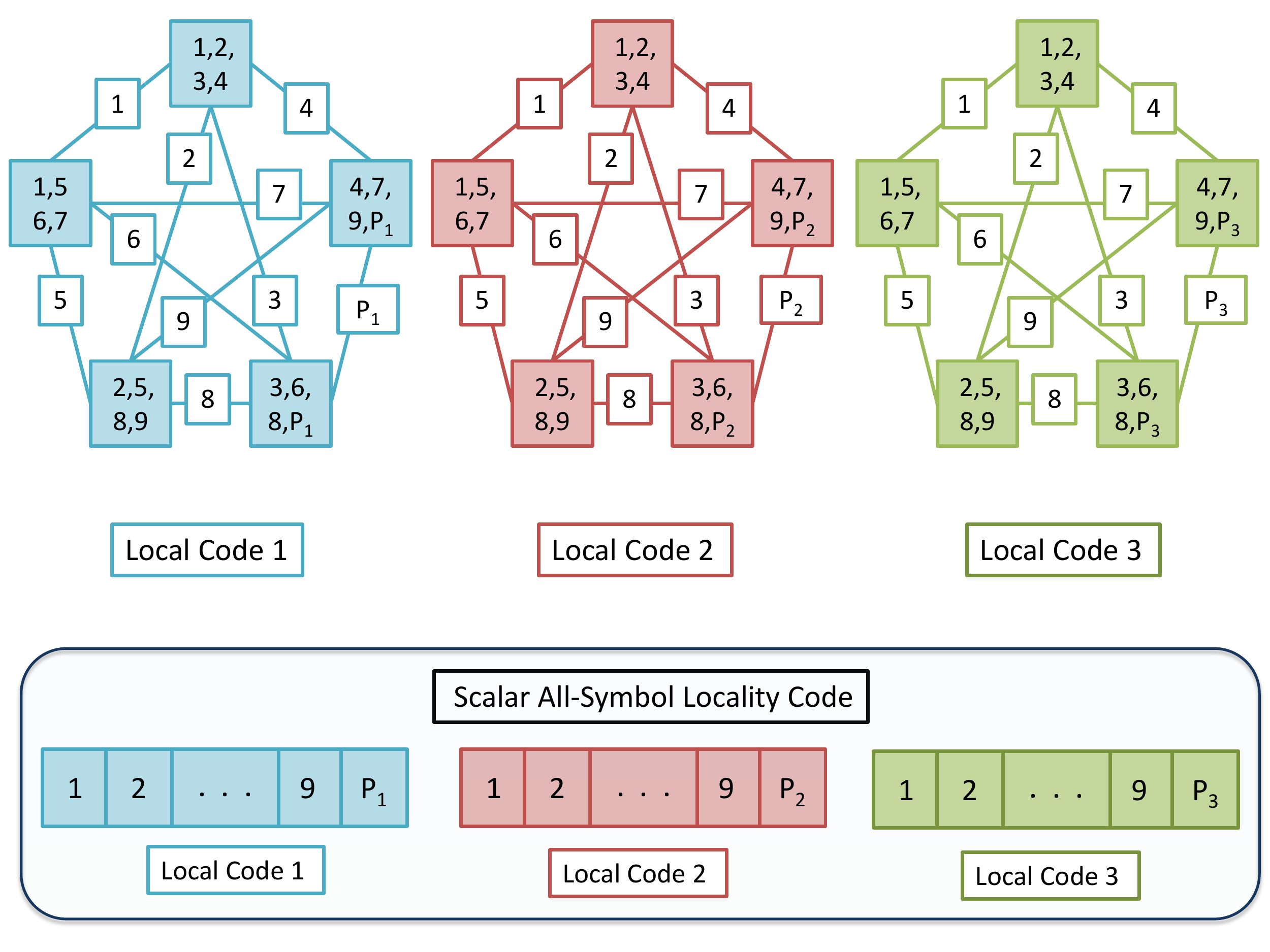}
\end{center}
\caption{The Repair-by-Transfer MBR-Local Code with All-Symbol Locality.}
\label{fig:mbr_local_all_symbol}
\end{figure}
\end{center}

\item We also show the existence, using counting arguments, of
\bit \item {\em MSR-Local codes with information locality} and of
\item {\em MSR-Local codes with all-symbol locality}, \eit
whenever the field size $q$ is sufficiently large. The existence results hold for a larger set of parameters than what we
present using explicit constructions.
\een

A tabular summary of the various constructions contained in this paper is presented in Table \ref{tab:compare1}.  This table summarizes the constructions of vector codes with locality whose local codes are regenerating codes. A performance comparison of the various classes of codes discussed so far is given in the next subsection.

\begin{table}[h]
\caption{Summary of Constructions of Codes with Local Regeneration} \label{tab:compare1}

 \begin{minipage}{7 in}
 \centering
\begin{tabular}{||c|c|c|c|c|c||} \hline
\hline
 & & & &  &  \\
  Construction & Construction Type & Locality Type & Rate & Field Size   & Restrictions on \\
 & & & Optimality  & & parameters \\
  & & & &  &  \\
\hline
\hline
 & & & &  &  \\
 Sum-Parity & Explicit & MSR &  Optimal  & Field size of & $d_{\min} \leq 2\delta$ \\
 Constr. \ref{constr:sum_msr} & & Information& &  Underlying MSR Code & \\
  & &  & &  &  \\
\hline
  & & & &  &  \\
Pyramid-Like & Explicit & MSR &  Optimal  & Field size of &  \\
Constr. \ref{constr:pyramid_msr} & & Information &  & Underlying MSR Code & \\
 & & & &  &  \\
\hline
 & & & &  &  \\
 & Existence & MSR &  Optimal   & ${n \choose mr}$& $K = mr\alpha$ \\
Thm. \ref{thm:msr_info_locality_existence} & & Information & & &  \\
  & & & &  &  \\
\hline
  & & & &  &  \\
 & Existence & MSR &  Optimal  & ${n  \choose \ell }$& $n = m(r+\delta-1), K = \ell \alpha$ \\
Thm. \ref{thm:msr_all_symbol} & & All-Symbol & & & \\
   & & & &  &  \\
\hline
   & & & &  &  \\
RBT-based & Explicit & MBR &  Optimal   & $n \alpha$&  $K_L \mid K$ \footnote{where $K_L$ is the size of the local MBR code and $K$
is the total file size} \\
Constr. \ref{constr:mbr_info_locality} & & Information & & &   \\
    & & & &  &  \\
\hline
 & & & &  &  \\
 RBT-based & Existence & MBR & Optimal & ${n \choose \kappa}$& $K_L\mid K$ and \\
Constr. \ref{constr:mbr_allsymbol_existence} & & All-symbol &  &   & $(r+\delta-1)\mid n$ \\
 & & & &  &  \\
\hline
\hline
\end{tabular}\par
 \vspace{-0.75\skip\footins}
 \renewcommand{\footnoterule}{}
 \end{minipage}
\end{table}

\subsection{Performance Comparison}

We now provide a method for approximately comparing the performance of codes with local regeneration with those of
regenerating codes and scalar codes with locality. The parameters against which
comparison is made are as follows:
\ben
\item the storage overhead $\Omega$ which is the inverse of the code rate
\item the normalized average bandwidth, $\xi$, needed to carry out node repair; the normalization is carried out both with
respect to the amount of data stored as well as the code-length $n$ since the number of node failures will typically be
proportional to $n$, as is the case for example, under a Poisson model of node failures
\item the repair degree, i.e., the number $h$ of helper nodes that a failed node needs to access.
\een
We assume that all codes are designed to offer roughly the same level of reliability which we will translate to mean that
codes having the same block length $n$ must have the same value of minimum distance $d_{\min}$.   Note that the repair degree
$h$ is given by \bit
\item $h=d$ in the case of a regenerating code
\item $h \leq r$ in the case of a scalar local code
\item $h=d$ in the case of a local regenerating code where $d$ is in this case, the repair degree of the constituent local
regenerating codes.
\eit
In general, codes with locality offer a smaller value of repair degree for a given block length of the code.  The challenge
therefore, is to construct codes with locality, which compare favorably with regenerating codes in terms of the two other
performance metrics, namely, storage overhead and repair bandwidth.

To compare the storage overhead and repair bandwidth of the various code constructions, we proceed as follows.   We assume
that a user desires to store a file of size $K$ across $n$ nodes for a time period $T$ with each node storing $\alpha$
symbols. A cost is associated with both node storage as well as for bandwidth consumed during node
repair. We also assume a Poisson-process model of node failures for the whole system. Under this model, the number of
failures in time $T$ is proportional to the product of $T$ and the number of nodes $n$ (for large $n$). For simplicity, we
only consider the case of single-node repairs in the plots, although a similar analysis can be carried out under the
assumption of multiple node failures.  The average cost of a single repair for a coding scheme is taken as the average amount
of data download to repair a node which we denote by $\bar{\omega} $. The cost of storage is assumed to be proportional to
the amount of data stored, i.e., to $n\alpha$.

With this, it follows that if $\gamma(K,T)$ denotes the average cost incurred to store a file of size $K$ for a time period
$T$ using a particular coding scheme, then
\bea
 \gamma(K,T) &=& \left( \gamma_K n\bar{\omega} + \gamma_S n\alpha \right)T
\eea
for some proportionality constants $\gamma_K, \gamma_S$. Hence the average cost incurred in storing one symbol for one unit
of time is given by
\bea
 \frac{\gamma(K,T)}{KT} &=& \gamma_K \frac{n\bar{\omega}}{K} + \gamma_S \frac{n\alpha}{K}.
\eea
We will refer to the  quantity $ \frac{n\bar{\omega}}{B} $ as the normalized repair bandwidth $\xi$ of the code.  Thus the
average cost is a linear combination of the normalized repair bandwidth $\xi=\frac{n\bar{\omega}}{K} $ as well as the storage
overhead $\Omega= \frac{n\alpha}{K}$.

In Fig. \ref{fig:60plot}, the performance of a representative set of codes with local regeneration (obtained via both
explicit constructions and existential arguments) having common length $n = 60$ and common minimum distance $d_{\min} = 8$
are plotted, for the case of a single node failure. Also included, are plot of the family of regenerating codes with
parameters $(n = 60, k = 53, d = 59)$. The repair degree is chosen as $d = 59$, since this results in the
best possible normalized repair-bandwidth vs storage-over tradeoff for this class of codes. In the plots, the $X$-axis
denotes the storage overhead $\Omega $. In the first plot, the $Y$-axis denotes the normalized repair bandwidth $\xi$, while
in the second plot, the $Y$-axis denotes the average number of nodes accessed during repair. We see
that codes with local regeneration, not only have better access, but also are comparable to regenerating codes in terms of
storage overhead and repair bandwidth. Such plots could be drawn for the case of multiple node failures as well.
\begin{figure}[h!]
\begin{center}
\includegraphics[width=13cm]{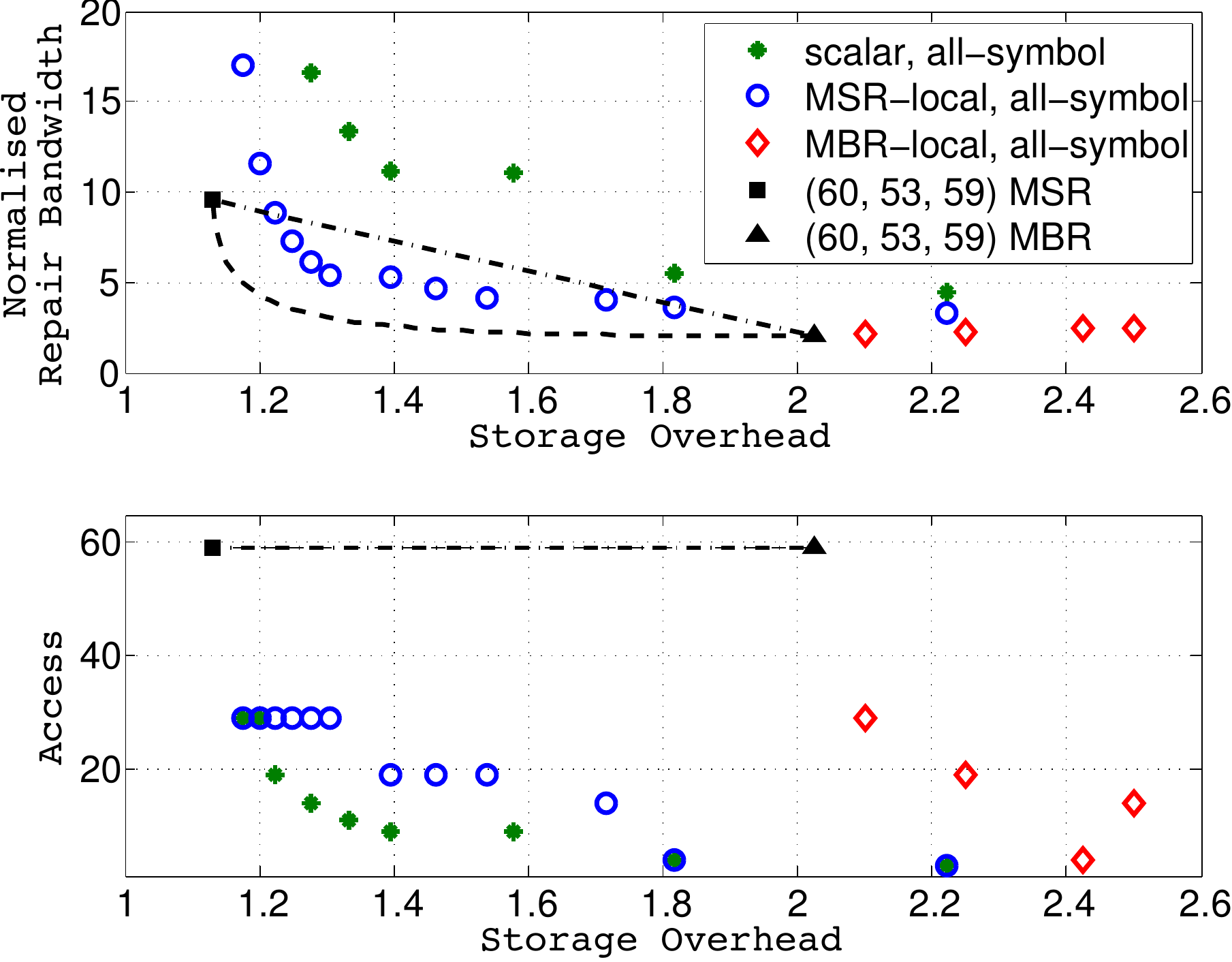}
\caption{The performance of various code constructions presented in this paper as well as that of regenerating codes, all
having common length $60$ and minimum distance $8$ are plotted.  Taken together, the two plots permit a comparison of the
various codes in terms of normalized repair bandwidth, storage overhead and access (i.e., repair degree). }
\label{fig:60plot}
\end{center}
\end{figure}

\section{Scalar Codes with Locality} \label{sec:scalar_local}

In this section, we extend the notion of locality in \cite{GopHuaSimYek} to the case when the local codes are allowed to be
more general codes than just single parity check codes. We derive an upper bound on the minimum distance of such
codes and also deduce the structure of the global code when
\ben[(a)]
\item the bound on minimum distance is achieved with equality and
\item the maximum possible dimension of a local code divides the dimension of the global code.
\een
We then discuss three code constructions, all of which are optimum with respect to the upper bound on minimum distance.
Finally, we end by making a comparison with concatenated codes as concatenated codes may be regarded as special cases of
scalar codes with locality. This viewpoint leads us to an upper bound on the minimum distance of concatenated codes that is
often tighter than what is currently known.

\vspace*{0.1in}

Let $\mathcal{C}$ denote an $[n,k,d_{\min}]$ linear code over $\mathbb{F}_q$ and let $G$ denote a generator matrix
of $\mathcal{C}$. Also, let $\mathbf{c} = (c_1, \ldots, c_n)$ denote a codeword of $\mathcal{C}$. The code $\mathcal{C}$
will also be referred to as a scalar code (considering elements of $\mathbb{F}_q$ as scalars).

\vspace*{0.1in}

\begin{defn}[$(r, \delta)$ code symbol locality]
The $i^{\text{th}}$ code symbol, $c_i, \ i \in [n]$, of $\mathcal{C}$ is said to have $(r, \delta)$ locality, $\delta \geq
2$, if there exists a punctured code of ${\mathcal C}$ with support containing $i$, whose length is at most $r + \delta - 1$,
and whose minimum distance is at least $\delta$, i.e., there exists a subset $S_i \subseteq [n]$ such that
\begin{itemize}
\item  $i \in S_i,  \ |S_i| \leq r + \delta - 1$ and
\item  $d_{\text{min}}\left(\mathcal{C}|_{S_i}\right) \geq \delta$, where $\mathcal{C}|_{S_i}$ denotes the code
obtained when $\mathcal{C}$ is punctured to the set of co-ordinates corresponding to $S_i$.
\end{itemize}
\end{defn}

\vspace*{0.1in}

It follows from the Singleton bound that $\text{dim}(\mathcal{C}|_{S_i}) \leq r$.

\vspace*{0.1in}

\begin{defn}[$(r, \delta)$ information locality] \label{defn:scalar_locality_info}
The code $\mathcal{C}$ is said to have  $(r,\delta)$ information locality if $\mathcal{C}$ has a set of
punctured codes  $\{ \mathcal{C}_i \}_{i \in \mathcal{L}}$ with supports $\{S_i\}_{i \in \mathcal{L}}$,  respectively, such
that, for all $i \in \mathcal{L}$, we have
\begin{itemize}
\item  $|S_i| \leq r + \delta - 1$,
\item  $d_{\text{min}}\left(\mathcal{C}_i\right) \geq \delta$, and
\item  $\displaystyle\text{Rank}(G|_{\cup_{i \in \mathcal{L}}S_i})=k$.
\end{itemize}
Here $\mathcal{L}$ denotes the index set for the local codes and by $G|_S$ we denote the restriction of $G$ to the set of
columns indexed by the set $S$.
\end{defn}

\vspace*{0.1in}
If further $\cup_{i \in \mathcal{L}} S_i=[n]$, then the code is said to have $(r,\delta)$ all-symbol locality.

The above definition for information locality is equivalent to saying that there exists a set of $k$ independent
columns of $G$, indexed by the $\mathcal{I} \subseteq [n], |\mathcal{I}| = k$,  such that all the $k$ code symbols $c_i,
i \in \mathcal{I}$ have $(r, \delta)$ locality.  The $(r, d)$ codes introduced by Gopalan \textit{et al} correspond to $(r,
\delta = 2)$ in the present notation. We also note that if $\mathcal{C}$ has $(r, \delta)$ information
locality, then it must be true that $d_{\min} \geq \delta$.

\vspace{0.1in}

\subsection{Upper Bound on Minimum Distance and Structure of Optimal Codes} \label{sec:scalar_dmin_bound}

An upper bound on the minimum distance of codes with $(r,\delta)$ information locality, was established in \cite{GopHuaSimYek} for the case $\delta=2$ and subsequently extended in \cite{PraKamLalKum} to the general case.  The general result is presented in Theorem~\ref{thm:scalar_info_locality} below.

\vspace{0.2in}

\begin{thm} \label{thm:scalar_info_locality}
Let $\mathcal{C}$ be an $[n,k,d_{\min}]$ scalar code with $(r,\delta)$ information locality. Then the minimum distance
$d_{\min}$ of code ${\cal C}$  is upper bounded by
\begin{eqnarray} \label{eq:bound_info_locality}
d_{\min} \ \leq \ n - k + 1 - \left(\left\lceil{\frac{k}{r}}\right\rceil - 1\right)(\delta - 1). \label{eq:bound_scalar_locality}
\end{eqnarray}
\end{thm}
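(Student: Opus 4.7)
The plan is to construct a subset $S \subseteq [n]$ satisfying $\mathrm{rank}(G|_S) \le k-1$ and $|S| \ge (k-1) + (\lceil k/r \rceil - 1)(\delta - 1)$. Once such an $S$ is exhibited, the $k \times |S|$ matrix $G|_S$ has nontrivial left kernel, so there is a nonzero $u \in \mathbb{F}_q^k$ with $u^\top G|_S = 0$; the codeword $u^\top G$ vanishes on $S$ and therefore has weight at most $n - |S|$, giving $d_{\min} \le n - |S|$, which rearranges to the claimed bound.

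To construct $S$, I would greedily accumulate local supports. Let $\{T_i\}_{i \in \mathcal{L}}$ be the supports furnished by the information-locality hypothesis, so $\mathrm{rank}(G|_{\cup_i T_i}) = k$, $|T_i| \le r + \delta - 1$, and each local code has minimum distance at least $\delta$. Beginning with $S_0 = \emptyset$, at each step $j$ I would choose some $i_j \in \mathcal{L}$ with $\mathrm{rank}(G|_{S_{j-1} \cup T_{i_j}}) > \mathrm{rank}(G|_{S_{j-1}})$, set $S_j = S_{j-1} \cup T_{i_j}$, and continue until either the rank reaches $k-1$ or we have performed $t = \lceil k/r \rceil - 1$ steps; a rank-increasing choice exists at each step because $\mathrm{rank}(G|_{\cup_i T_i}) = k$. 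In the latter case the rank is at most $tr \le k-1$, and one tops up $S_t$ with single coordinates (each incrementing the rank by exactly $1$) until the rank reaches $k-1$.

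The critical inductive invariant is
\[ |S_j| - \mathrm{rank}(G|_{S_j}) \ge j(\delta - 1). \]
The motivating intuition is that each local code carries a Singleton-type surplus $|T_{i_j}| - \dim\mathcal{C}|_{T_{i_j}} \ge \delta - 1$ of coordinates beyond its own rank, and this surplus should translate into growth of the defect $|S| - \mathrm{rank}(G|_S)$ as local supports are absorbed. Granted the invariant, after topping up we have $\mathrm{rank}(G|_S) = k - 1$ and $|S| \ge (k - 1) + t(\delta - 1)$, finishing the proof.

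The main obstacle I anticipate is verifying this invariant in the presence of dependent overlaps between $T_{i_j}$ and $S_{j-1}$. Writing $U_{j-1} = \mathrm{span}(G|_{S_{j-1}})$ and $V_{i_j} = \mathrm{span}(G|_{T_{i_j}})$, the per-step gain $\sigma_j - \rho_j$ decomposes as
\[ \bigl(|T_{i_j}| - \dim V_{i_j}\bigr) \;+\; \bigl(\dim(U_{j-1} \cap V_{i_j}) - |T_{i_j} \cap S_{j-1}|\bigr). \]
The first bracket is at least $\delta - 1$ by Singleton, but the second can turn negative when the overlap columns $G|_{T_{i_j} \cap S_{j-1}}$ are linearly dependent. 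Closing this gap, either by refining the greedy rule to pick $i_j$ whose overlap has ``honest'' rank, by adding a judiciously chosen sub-support of $T_{i_j}$ in place of the full support at step $j$, or by amortizing across the remaining steps, is where the real technical work of the proof lies.
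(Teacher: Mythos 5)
You have the right skeleton --- the greedy accumulation of local supports and the invariant $|S_j| - \mathrm{rank}(G|_{S_j}) \ge j(\delta - 1)$ are exactly what the paper does (its Appendix A, Algorithm 2). The obstacle you flag is a genuine gap in your write-up: your decomposition of $\sigma_j - \rho_j$ into a Singleton surplus plus an overlap term does not by itself yield the per-step bound, and you correctly observe that the overlap term can be negative. But none of your three proposed fixes is needed. The resolution is to abandon that decomposition altogether and not reason about the overlap columns at all.

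First note that necessarily $\sigma_j = |T_{i_j}\setminus S_{j-1}| \ge \delta$ at every rank-increasing step: if $|T_{i_j}\setminus S_{j-1}| \le \delta - 1$, then since $\mathcal C|_{T_{i_j}}$ has minimum distance at least $\delta$, the columns of $G$ indexed by $T_{i_j}\setminus S_{j-1}$ lie in the span of $G|_{T_{i_j}\cap S_{j-1}} \subseteq G|_{S_{j-1}}$, so $V_{i_j}\subseteq U_{j-1}$, contradicting the choice of $i_j$. Now pick any $E\subseteq T_{i_j}\setminus S_{j-1}$ with $|E| = \delta - 1$. Again because $d_{\min}(\mathcal C|_{T_{i_j}})\ge\delta$, the columns $G|_{E}$ lie in the span of $G|_{T_{i_j}\setminus E}$, and since $T_{i_j}\setminus E\subseteq S_{j-1}\cup(T_{i_j}\setminus E)$, we get $\mathrm{rank}(G|_{S_{j-1}\cup T_{i_j}})=\mathrm{rank}(G|_{S_{j-1}\cup(T_{i_j}\setminus E)})$, hence
$\rho_j\le |(T_{i_j}\setminus E)\setminus S_{j-1}| = \sigma_j-(\delta-1)$.
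The $\delta - 1$ ``virtually dropped'' coordinates are taken from the new part $T_{i_j}\setminus S_{j-1}$, not the overlap, so the rank of the overlap columns is simply irrelevant. This is the content of the paper's terse remark that ``the last $\delta - 1$ code symbols of a local code are known given the rest.''

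With $\sigma_j - \rho_j \ge \delta - 1$ established, your invariant sums as intended, the top-up phase preserves the defect $|S| - \mathrm{rank}(G|_S)$, and the bound follows. The rest of your argument is sound; in particular one can check via $tr \le k-1$ that the rank cannot exceed $k-1$ during the first $t$ local-support steps, so your stopping rule is well posed.
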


\vspace{0.1in}

\begin{proof}
See Appendix \ref{app:scalar_info_locality_proof}.
\end{proof}

Any code achieving the bound in Theorem \ref{thm:scalar_info_locality} with equality will be referred to as an optimal code
having $(r, \delta)$ information locality (or all-symbol locality if $\mathcal{C}$ has all-symbol locality).

\vspace{0.1in}

We next deduce the structure of the code $\mathcal{C}$ for the case when $r\mid k$ and $\mathcal{C}$ is an optimal
code having $(r, \delta)$ information locality.

\vspace{0.1in}

\begin{thm} \label{thm:scalar_info_locality_equality}
Let $r \mid k$, set $\frac{k}{r}=t$ and let $\mathcal{C}$ be an optimal code having $(r, \delta)$ information locality.
Also, as in Definition \ref{defn:scalar_locality_info}, let $\mathcal{L}$ denote the index set for all the
local codes of $\mathcal{C}$. Then
\ben[(a)]
\item the local code $\mathcal{C}_i$ must be an $[r+\delta-1, r, \delta]$ MDS code, $\forall \ i \in \mathcal{L}$.
\item the local codes must all have disjoint supports, i.e., $S_i \cap S_j = \phi, \ \forall i,j \in \mathcal{L}, \ i \neq
j$, and
\item for any set of distinct indices $i_1, i_2, \cdots, i_{t} \in \mathcal{L}$ it must be that
 \bea
\text{dim}\left(V_{i_{t}} \cap \left( \displaystyle\sum_{j=1 }^{t-1}V_{i_j} \right)\right) = 0.
\eea
From this it follows that, up to permutation of columns, the $(k\times n)$ generator matrix $G$ of $\mathcal{C}$ can be
expressed in the form
\bea \label{eq:scalar_structure_of_G}
G= \left[ \begin{array}{cccc|c}
           G_1 &&&& \\
           && \ddots&& A \\
           &&&G_{t}&
          \end{array}
 \right],
\eea
where $G_i$ is the $(r  \times  r+\delta-1)$ generator matrix of an $[r+\delta-1,r,\delta]$ MDS code $\forall \ 1\leq i \leq t$, and $A$ is some $((n-t(r+\delta-1))\times n)$ matrix.
\een
\end{thm}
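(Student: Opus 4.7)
The plan is to revisit the proof of Theorem \ref{thm:scalar_info_locality} and observe that every inequality used to establish the bound must collapse to equality when $d_{\min}$ meets the bound; translating these tight inequalities into conditions on the generator matrix $G$ will yield (a), (b), (c) and the block form displayed in (\ref{eq:scalar_structure_of_G}).

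I would first recall the construction underlying the bound. Using the information-locality hypothesis one greedily selects local supports $S_{i_1}, \ldots, S_{i_{t-1}}$, each of which strictly increases $\text{Rank}(G|_{U_\ell})$ with $U_\ell = S_{i_1} \cup \cdots \cup S_{i_\ell}$. This produces a set $U_{t-1}$ with $|U_{t-1}| \leq (t-1)(r+\delta-1)$ and $\text{Rank}(G|_{U_{t-1}}) \leq (t-1)r$. One then appends $k - 1 - \text{Rank}(G|_{U_{t-1}})$ additional columns so that the total rank is exactly $k-1$, producing $U$ with $|U| = |U_{t-1}| + k - 1 - \text{Rank}(G|_{U_{t-1}})$. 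Since there must exist a nonzero codeword supported on $[n]\setminus U$, one gets $d_{\min}\leq n - |U|$, which yields the bound.

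Assuming equality in the bound, $|U|$ must achieve its maximum, forcing $|U_{t-1}| = (t-1)(r+\delta-1)$ and $\text{Rank}(G|_{U_{t-1}}) = (t-1)r$ along every greedy chain that is extremal. For any selected $i_j$, this forces $|S_{i_j}| = r+\delta-1$ and $\dim(\mathcal{C}_{i_j}) = r$; combined with $d_{\min}(\mathcal{C}_{i_j})\geq\delta$ and the Singleton bound on $\mathcal{C}_{i_j}$, this gives the $[r+\delta-1,r,\delta]$ MDS property. By running the greedy in different orders, and in particular by forcing any prescribed $i^* \in \mathcal{L}$ to appear first, the same conclusion extends to every local code, giving (a). For (b), suppose $S_i \cap S_j \neq \emptyset$ for some pair; after discarding any zero column of $G$ (which can be done without loss of generality), a shared column contributes a nonzero element of $V_i \cap V_j$, so $\dim(V_i+V_j) < 2r$. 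Running a greedy chain that places $i,j$ first then shows $|U|$ strictly exceeds $k-1 + (t-1)(\delta-1)$, contradicting equality; hence all supports are pairwise disjoint.

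For (c), fix any $t$ distinct indices $i_1,\ldots,i_t \in \mathcal{L}$ and set $U' = S_{i_1}\cup\cdots\cup S_{i_t}$. Using (a) and (b), $|U'| = t(r+\delta-1)$ and $\text{Rank}(G|_{U'}) = \dim(V_{i_1}+\cdots+V_{i_t})$. If the sum were not direct, $\text{Rank}(G|_{U'}) \leq tr - 1 = k-1$, so $U'$ qualifies in the Singleton-type argument; but $t(r+\delta-1) - [k-1+(t-1)(\delta-1)] = \delta \geq 2$, so the resulting bound $d_{\min} \leq n - t(r+\delta-1)$ is strictly tighter than the assumed equality—a contradiction. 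Hence the sum is direct, which is (c). The block form (\ref{eq:scalar_structure_of_G}) follows at once: permute columns so that $S_{i_1},\ldots,S_{i_t}$ occupy consecutive column-blocks (using (b)), and choose a basis of $\mathbb{F}_q^k$ aligned with the decomposition $V_{i_1}\oplus\cdots\oplus V_{i_t}$ (using (c)); each block $G|_{S_{i_j}}$ then becomes the generator matrix of the $[r+\delta-1,r,\delta]$ MDS code $\mathcal{C}_{i_j}$ sitting in its own row-block, while the remaining $n - t(r+\delta-1)$ columns form the arbitrary matrix $A$.

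The main obstacle I anticipate is the careful bookkeeping in the proof of (b): one must verify that every deviation from disjointness and from the maximal rank growth strictly increases the attainable $|U|$, rather than trading off a loss in $|U_{t-1}|$ against a compensating gain in $k-1-\text{Rank}(G|_{U_{t-1}})$. Properly handling the case $|\mathcal{L}| > t$—by observing that the structural conclusions must hold for any $t$-subset that can be placed at the head of a greedy chain—is a secondary but important subtlety.
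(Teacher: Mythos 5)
Your proposal correctly identifies the strategy — force every inequality in the proof of the minimum-distance bound to be tight and read off the structural consequences — and your treatment of part (c) is actually cleaner than the paper's: rather than showing that any ordering $i_1,\dots,i_t$ can be realized by the greedy algorithm (as the paper does), you simply apply the ``$d_{\min}\leq n-|T|$ for any $T$ with $\text{rank}(G|_T)\leq k-1$'' fact to $U'=\cup_j S_{i_j}$ and compute that non-directness yields a strictly smaller bound on $d_{\min}$. That works and is pleasantly direct.

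However, there are two genuine gaps, both anticipated but not resolved in your final paragraph.

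\emph{Gap in (a).} The greedy you describe picks $t-1$ whole local supports and then appends individual rank-increasing columns to bring the rank to $k-1$. Equality in the bound then only forces $s_j-\nu_j=\delta-1$ for each of those $t-1$ codes, i.e.\ $|S_{i_j}|-\dim V_{i_j}=\delta-1$. Combined with the Singleton bound this gives $d_{\min}(\mathcal{C}_{i_j})=\delta$, so $\mathcal{C}_{i_j}$ is MDS, but it does not force $\dim V_{i_j}=r$: a chain with $\dim V_{i_1}=r'<r$ and $|S_{i_1}|=r'+\delta-1$ still satisfies your equalities and produces $|U|=k-1+(t-1)(\delta-1)$. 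Your claimed forcing of $|U_{t-1}|=(t-1)(r+\delta-1)$ and $\text{Rank}(G|_{U_{t-1}})=(t-1)r$ does not follow from the constraints you write down. What \emph{does} force $\dim V_{i_j}=r$ is the paper's version of the algorithm, which continues to pick whole local codes until the rank would overshoot $k-1$: the iteration count $J$ then satisfies $J\geq\lceil k/r\rceil=t$ and, since $J=t$ is forced by equality and $\sum_{j=1}^J\nu_j=k-1$ with $\nu_j\leq r$ for $j<J$ and $\nu_J\leq r-1$, one gets $\nu_j=r$ for $j<t$. A code of dimension $r'<r$ placed first would force $J>t$, and hence $|T_J|>k-1+(t-1)(\delta-1)$, a contradiction. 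Your ``append columns'' step does not detect this.

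\emph{Gap in (b).} Your disjointness argument runs a greedy chain that ``places $i,j$ first.'' This is only legal when $V_j\nsubseteq V_i$, i.e.\ when $V_i\neq V_j$ (both being $r$-dimensional). When $V_i=V_j$ but $S_i\neq S_j$ — two distinct local codes with the same column span — the algorithm cannot pick $j$ at step two, and your contradiction is unavailable. The paper devotes a dedicated argument to exactly this case in Appendix~\ref{app:scalar_info_locality_equality_proof}: run the chain from $i$, note that $T_J\cup S_j$ still has rank $k-1$ (since $V_j=V_i\subseteq\text{Col}(G|_{T_J})$), conclude $S_j\subseteq T_J$ from optimality, and then replay the chain starting from $j$ to force $S_i=S_j$, contradicting the distinctness of $\mathcal{C}_i$ and $\mathcal{C}_j$. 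This case is a missing idea, not mere bookkeeping.
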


\vspace{0.2 in}

\begin{proof}
See Appendix \ref{app:scalar_info_locality_equality_proof}.
\end{proof}

\subsection{Constructions of Optimal Codes with Locality}

Three constructions, all optimal with respect to the bound on $d_{\min}$ in \eqref{eq:bound_info_locality} are discussed
here.  We begin by showing that the Pyramid codes of \cite{HuaCheLi} are optimal with respect to $(r,\delta)$ information
locality.    We then study codes with $(r,\delta)$ all-symbol locality for the case when $(r+\delta-1)\mid n$.
For the case when the block length is of the form $n=\lceil \frac{k}{r} \rceil  (r+\delta-1)$, we provide an explicit
construction of a code with all-symbol locality by splitting the rows of the parity check matrix of an appropriate MDS code.
We will refer to this as the parity-splitting construction. Finally, the existence of optimal codes with all-symbol locality
is shown for the case when $(r+\delta-1) \mid n$.

\subsubsection{Optimality of the Pyramid Code Construction} \hspace*{\fill}  \label{sec:pyramid_codes}
  \vspace{0.1 in}

We will now show that  under a suitable choice of parameters, the Pyramid code construction appearing in \cite{HuaCheLi}, achieves the bound in Theorem \ref{thm:scalar_info_locality} with equality.  For the sake of completeness, the construction is reproduced below.

Consider an $[n',k,d_{\min}]$ systematic MDS code over $\mathbb{F}_q$, where $n'=k+d_{\min}-1$, having generator matrix of the form
\begin{equation}
G = \left [ \begin{array}{c|c}\underbrace{I}_{(k \times k)} & \underbrace{Q}_{(k \times (d-1))} \end{array} \right].
\end{equation}
The pyramid-code construction will then proceed to modify $G$ to obtain the generator matrix of the desired optimal code. Let $k = \alpha r + \beta$, with $0 \leq \beta \leq (r-1)$.  First the matrix $Q$ is partitioned into submatrices as shown below:
\begin{equation}
Q = \left [ \begin{array}{c|c} Q_1 & \\ \vdots & Q' \\ Q_{\alpha} & \\ Q_{\alpha+1} & \end{array} \right],
\end{equation}
where
$Q_i, 1 \leq i \leq \alpha$  are matrices of size $ (r \times (\delta - 1))$, $Q_{\alpha+1}$ is of size $ (\beta \times (\delta-1))$ and $Q'$ is a $(k \times (d_{\min}-\delta))$ matrix. Next, consider a second generator matrix $G'$ obtained by splitting the first $(\delta -1)$ columns of $Q$ as shown below:
\begin{equation}
G' = \left [ \begin{array}{cccc|cccc|c} I_r &&&& Q_1 &&&& \\ & \ddots &&& & \ddots &&&Q' \\ & & I_r & & & & Q_{\alpha} & &\\&&&I_{\beta} &&&&Q_{\alpha+1} & \end{array} \right],
\end{equation}
Note that $G'$ is a $(k \times n)$ full rank matrix, where
\begin{equation} \label{eq:meets_req}
n = k+d_{\min}-1 + \left(\left \lceil \frac{k}{r} \right \rceil - 1\right)(\delta - 1).
\end{equation}
Clearly, by comparing the matrices $G$ and $G^{'}$, it follows that the code ${\cal C}$ generated by $G'$, has minimum
distance no smaller than $d_{\min}$.  Furthermore, $\mathcal{C}$ is a code with $(r,\delta)$ information locality.  Hence, it
follows from \eqref{eq:meets_req} that ${\cal C}$ is an optimal code having $(r,\delta)$ information locality.

\vspace*{0.2in}

\begin{example} \label{eg:pyramid}
Let $G$ be the generator matrix of a $[7,4,4]$, systematic MDS code:
\bean
G & = &  \left[ \begin{array}{ccccccc}
1 &  &  &  & g_{11} & g_{12}  &g_{13}  \\
 & 1 &  &  & g_{21} & g_{22} &  g_{23}  \\
 &  & 1  & &   g_{31} & g_{32}& g_{33}  \\
 &  &  & 1 &   g_{41} & g_{42} & g_{43}\\
  \end{array} \right].
  \eean
We construct a revised generator matrix $G_{\text{pyr}}$ by splitting the first two parity columns and then rearranging columns:
\bean
G_{\text{pyr}} & = & \left[ \begin{array}{cccc|cccc|c}
1 &  &  g_{11} & g_{12} & & &  & & g_{13}  \\
 & 1 &  g_{21} & g_{22} & & & & &  g_{23}  \\
 \hline
 & &  & & 1   & &  g_{31} & g_{32}& g_{33}  \\
 & &  &  & & 1  &  g_{41} & g_{42} & g_{43} \\
  \end{array} \right].
  \eean
The pyramid code ${\cal C}$ is then the code with generator matrix $G_{\text{pyr}}$.  It can be verified that for $L_1 =\{ 1,2,3,4 \}$ and $L_2=\{ 5, 6, 7, 8 \}$, $G_{\text{pyr}}|_{L_1}$ and  $G_{\text{pyr}}|_{L_2}$ are  both generator matrices of  $[4,2,3]$ MDS codes. Further it is easy to see that $\text{Rank}(G_{\text{pyr}}|_{L_1\cup L_2})=4$.
 It is also straightforward to show that the minimum distance of the pyramid code is no smaller than that of the parent $[7,4,4]$ MDS code.  It turns out that the pyramid code is optimal with respect to the bound in \eqref{eq:bound_info_locality} and hence has code parameters $[9,4,4]$.
\end{example}

\vspace*{0.2in}

 \subsubsection{Optimality of a Parity-Splitting Code Construction}
  \vspace{0.1 in}
\begin{thm}
 Let $n=\lceil \frac{k}{r} \rceil  (r+\delta-1)$. Then, for $q > n$, there exists an explicit and optimal $[n,k,d_{\min}]$
linear code over $\mathbb{F}_q$, having $(r,\delta)$ all-symbol locality .
\end{thm}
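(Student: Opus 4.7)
The plan is to exhibit an explicit parity-splitting construction whose codewords are forced into an ambient Reed--Solomon code of the desired distance. Let $t = \lceil k/r \rceil$, set $d = tr - k + \delta$ (which equals the target upper bound $n - k + 1 - (t-1)(\delta-1)$ after simplification using $n = t(r+\delta-1)$), and choose distinct $\alpha_1, \ldots, \alpha_n \in \mathbb{F}_q$, available since $q > n$. Partition $[n]$ into $t$ disjoint blocks $S_1, \ldots, S_t$ each of size $r+\delta-1$. Define $\mathcal{C}$ as the null space of the matrix $H$ whose rows are the local parities $(\alpha_j^\ell \mathbf{1}_{j \in S_i})_{j=1}^n$ for $i \in [t]$, $\ell \in \{0, \ldots, \delta-2\}$, together with the global parities $(\alpha_j^\ell)_{j=1}^n$ for $\ell \in \{\delta-1, \ldots, d-2\}$. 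This gives $t(\delta-1) + (tr-k) = n-k$ rows. On each block $S_i$, the local rows form a Vandermonde matrix, so $\mathcal{C}|_{S_i}$ is contained in an $[r+\delta-1, r, \delta]$ generalized Reed--Solomon code; this yields $(r,\delta)$ all-symbol locality.

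The minimum distance bound then follows from a clean subcode argument. The key observation is that, for each $\ell \in \{0, \ldots, \delta-2\}$, summing the $\ell$-th local parity across the $t$ blocks yields exactly the global Vandermonde parity $\sum_{j=1}^n \alpha_j^\ell c_j = 0$. Combined with the explicitly included global parities for $\ell \in \{\delta-1, \ldots, d-2\}$, the row span of $H$ contains all $d-1$ vectors $(\alpha_j^\ell)_{j=1}^n$ for $\ell = 0, \ldots, d-2$, which are precisely the parity checks of the $[n, n-d+1, d]$ MDS Reed--Solomon code $\mathcal{C}^*$. Hence $\mathcal{C} \subseteq \mathcal{C}^*$ and $d_{\min}(\mathcal{C}) \geq d$.

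What remains is to verify that the $n-k$ rows of $H$ are linearly independent, so that $\dim \mathcal{C} = k$ and the bound of Theorem \ref{thm:scalar_info_locality} applies with equality. Suppose a vanishing linear combination with scalars $a_{i,\ell}$ (local) and $b_s$ (global); restricting to the coordinates in $S_i$ yields the polynomial identity $Q_i(x) + R(x) = \sum_{\ell=0}^{\delta-2} a_{i,\ell} x^\ell + \sum_{s=0}^{tr-k-1} b_s x^{\delta-1+s}$ vanishing at all $r+\delta-1$ distinct points $\alpha_j$, $j \in S_i$. The heart of the argument is the degree count: since $t = \lceil k/r \rceil$ forces $k > (t-1)r$, one obtains $tr - k \leq r-1$ and hence $\deg(Q_i + R) \leq d - 2 = tr - k + \delta - 2 \leq r + \delta - 3 < r + \delta - 1 = |S_i|$. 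This forces $Q_i + R \equiv 0$ as a polynomial, and since $Q_i$ and $R$ have disjoint monomial supports, every $a_{i,\ell}$ and $b_s$ must vanish.

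The main obstacle is precisely this final degree count, since it is what ties the construction to the ceiling identity $t = \lceil k/r \rceil$ and explains why exactly $t$ blocks (not fewer) must be used. Once the independence is established, combining $d_{\min}(\mathcal{C}) \geq d$ with the upper bound of Theorem \ref{thm:scalar_info_locality} gives equality, and the code is optimal with $(r,\delta)$ all-symbol locality as required.
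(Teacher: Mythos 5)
Your construction coincides exactly with the paper's parity-splitting scheme: the local parity block $Q_i$ is precisely the restriction to $S_i$ of the top $\delta-1$ Vandermonde rows, the global rows are the remaining $d-\delta$ Vandermonde rows, and both proofs establish $d_{\min}(\mathcal{C}) \geq d$ in the identical way, by observing that the parity checks of the ambient $[n,n-d+1,d]$ Reed--Solomon code lie in the row span of $H$ (each low-degree global check is the sum of the $t$ corresponding local checks). Where you genuinely diverge is in pinning down $\dim \mathcal{C} = k$. The paper only counts rows of $H$ to conclude $\dim \mathcal{C} \geq k$, and then forces equality indirectly: it combines $d_{\min}(\mathcal{C}) \geq d$ with the bound of Theorem~\ref{thm:scalar_info_locality} (applied to the as-yet-unknown dimension $K$) to get $K + \lceil K/r\rceil(\delta-1) \leq k + \lceil k/r\rceil(\delta-1)$, and since the map $x \mapsto x + \lceil x/r\rceil(\delta-1)$ is increasing, this sandwiches $K$ to exactly $k$. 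You instead prove that $H$ has full rank $n-k$ directly, via the polynomial identity and the degree count $\deg(Q_i + R) \leq d-2 = tr - k + \delta - 2 \leq r + \delta - 3 < |S_i|$, where the crucial estimate $tr - k \leq r-1$ comes precisely from $t = \lceil k/r\rceil$. Your route is more self-contained and makes explicit exactly where the ceiling condition enters, whereas the paper's route avoids any Vandermonde-specific rank computation by leaning on the upper-bound theorem twice (once for the distance bound, once to trap the dimension). Both arguments are correct; yours trades a small amount of extra algebra for a cleaner logical structure.
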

\begin{proof}
 Let $H'$ be the parity check matrix of an $[n,k',d]$ Reed-Solomon code over $\mathbb{F}_q$, where
$k'= k+ (\lceil \frac{k}{r} \rceil  - 1)(\delta-1)$ and minimum distance, $d= n-k'+1  = n-k+1- (\lceil \frac{k}{r} \rceil - 1)(\delta-1)$.  Such codes exist if $q \geq n$.
We choose $H'_{(n-k') \times n}$ to be a Vandermonde  matrix.  Let
\begin{equation}
H'= \left[ \begin{array}{c}
     Q_{(\delta-1) \times n} \\
     A_{(d-\delta)\times n}
    \end{array} \right].
\end{equation}
 We next partition the matrix $Q$ into submatrices as shown below:
\begin{equation}
Q = \left [ Q_1 \mid Q_2 \mid \ldots \mid Q_{\lceil \frac{k}{r} \rceil} \right],
\end{equation}
in which the matrices $\{Q_i, i = 1, \ldots, \lceil \frac{k}{r} \rceil \}$ are of uniform size $ ((\delta-1) \times (r+\delta
- 1))$. Now, consider the code $\mathcal{C}$ whose parity check matrix, $H$, is obtained by splitting the first $\delta -1$
rows of $H'$ as follows:
\begin{equation}
 H=\left[ \begin{array}{ccc}
            Q_1 &&\\
	    & \ddots &\\
	    &&Q_{\lceil \frac{k}{r} \rceil}\\
	    \hline \\
	    & A & \end{array}
\right].
\end{equation}
It is clear from the construction that code $\mathcal{C}$ has $(r,\delta)$ all-symbol locality. Let $K$ denote the dimension
of the code $\mathcal{C}$. We will now show that $\mathcal{C}$ is an optimal  $[n,k,d_{\min}]$
code, having $(r,\delta)$ all-symbol locality, by showing that
\begin{itemize}
\item $K = k$ and
\item the minimum distance $d_{\min}$of $\mathcal{C}$ is given by the equality condition in
\eqref{eq:bound_scalar_locality}.
\end{itemize}

To see this, first of all note that the dimension of $\mathcal{C}^{\perp}$ is upper bounded as
\begin{eqnarray}
 \text{dim}\left(\mathcal{C}^{\perp}\right) & \stackrel{(a)}{\leq} & \left \lceil \frac{k}{r} \right \rceil(\delta - 1) + d -
\delta \nonumber \\
& \stackrel{(b)}{=} & \left \lceil \frac{k}{r} \right \rceil (\delta - 1) + \left( n - k - \left(\left \lceil \frac{k}{r}
\right \rceil - 1 \right)(\delta - 1) + 1 \right) - \delta \nonumber \\
& = & n - k \label{eq:proof_parity_splitting_temp1},
\end{eqnarray}
where $(a)$ follows by counting the number of rows of $H$ and $(b)$ follows since $d = n - k' + 1$. Thus, from
\eqref{eq:proof_parity_splitting_temp1}, we get that
\begin{eqnarray}
K & \geq & k \label{eq:proof_parity_splitting_temp2}.
\end{eqnarray}

Next, we note by inspection of the matrices $H, H'$ that any vector which is in the null-space of $H$
is also in the null-space of $H'$.  It follows that the minimum distance $d_{\min}$ of the parity-splitting construction is
at least that of the parent Reed-Solomon code having parity-check matrix
$H'$.  Thus
\bea
d_{\min} & \geq & d \ = \  n-k+1- \left(\left \lceil \frac{k}{r} \right \rceil - 1\right)(\delta-1).
\label{eq:proof_parity_splitting_temp3}
\eea
But, since $\mathcal{C}$ has $(r, \delta)$ locality, from \eqref{eq:bound_info_locality}, we must
also have that
\bea
d_{\min} & \leq & n - K + 1 - \left(\left \lceil \frac{K}{r} \right \rceil - 1\right)(\delta-1).
\label{eq:proof_parity_splitting_temp4}
\eea
From \eqref{eq:proof_parity_splitting_temp3} and \eqref{eq:proof_parity_splitting_temp4}, we get that
\bea
k + \left \lceil \frac{k}{r} \right \rceil(\delta-1) & \geq & K + \left \lceil \frac{K}{r} \right \rceil(\delta-1),
\label{eq:proof_parity_splitting_temp5}
\eea
which together with \eqref{eq:proof_parity_splitting_temp2} implies that $K =k$ and also that $\mathcal{C}$ is optimal with
respect to \eqref{eq:bound_scalar_locality}.
\end{proof}

\vspace{0.1in}

\subsubsection{Existence of Optimal $(r,\delta)$ codes with All-Symbol Locality}\hspace*{\fill}
 \vspace{0.1in}

\begin{thm} \label{thm:scalar_allsymbol_existence}
 Let $q > kn^k$ and $(r+\delta-1)\mid n$ . Then there exists an optimal $[n,k,d_{\min}]$ code with  $(r,\delta)$ all-symbol locality  code over $\mathbb{F}_q$.
\end{thm}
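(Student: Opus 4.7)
The plan is to invoke the probabilistic method, realising the desired code as a random $k$-dimensional subcode of a direct sum of local MDS codes. Let $m = n/(r+\delta-1)$ and partition $[n] = S_1 \sqcup \cdots \sqcup S_m$ into disjoint blocks of size $r+\delta-1$. On each $S_i$, fix an $[r+\delta-1,r,\delta]$ MDS code $C_i$ (which exists since $q > r+\delta-1$), and let $G_V \in \mathbb{F}_q^{mr \times n}$ be the block-diagonal generator of $V = \bigoplus_i C_i$. Since $V$ already has $(r,\delta)$ all-symbol locality by construction, any $k$-dimensional subcode of $V$ inherits that property automatically.

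I would then parameterise the candidate code by a $k \times mr$ matrix $A$ whose entries are drawn independently and uniformly from $\mathbb{F}_q$, and set $\mathcal{C}$ to be the row space of $G := A G_V$. Writing $d^* = n - k + 1 - (\lceil k/r \rceil - 1)(\delta - 1)$ for the target minimum distance, $d_{\min}(\mathcal{C}) \geq d^*$ is equivalent to $\text{rank}(G|_T) = k$ for every $T \subseteq [n]$ with $|T| = n - d^* + 1 = k + (\lceil k/r \rceil - 1)(\delta-1)$, a standard characterisation that I would use without further comment. Setting $t_i := |T \cap S_i|$, the MDS property of each $C_i$ gives $\text{rank}(G_V|_T) = \sum_i \min(r, t_i) = |T| - \sum_i \max(t_i - r, 0)$.

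The crucial combinatorial step is to establish the uniform inequality $\sum_i \max(t_i - r, 0) \leq (\lceil k/r \rceil - 1)(\delta - 1)$, equivalently $\text{rank}(G_V|_T) \geq k$, for every admissible $T$. I would prove this by splitting on $g := |\{i : t_i \geq r\}|$: if $g \leq \lceil k/r \rceil - 1$, the bound follows from $\max(t_i - r, 0) \leq \delta - 1$; and if $g \geq \lceil k/r \rceil$, one uses $\sum_{t_i \geq r} t_i \geq gr \geq k$ together with $\sum_i t_i = k + (\lceil k/r \rceil - 1)(\delta - 1)$ to derive the same conclusion. This is the all-symbol-locality analogue of the structural argument behind Theorem \ref{thm:scalar_info_locality_equality}.

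With $\text{rank}(G_V|_T) \geq k$ in hand, the event $\text{rank}(G|_T) < k$ is the vanishing of a nontrivial polynomial of degree at most $k$ in the entries of $A$, and therefore has probability at most $k/q$ by Schwartz--Zippel. A union bound over the admissible subsets $T$, together with the auxiliary event $\text{rank}(A) < k$, yields a total failure probability bounded by a polynomial in $n$ times $k/q$; the hypothesis $q > k n^k$ comfortably makes this strictly less than $1$, producing a valid $A$ and hence a code achieving the stated parameters. The main obstacle is the combinatorial covering inequality of the previous paragraph, which has to hold uniformly over all distributions of the $t_i$'s; once that is settled, the remainder is a routine genericity argument.
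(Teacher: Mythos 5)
Your construction is essentially the paper's in disguise: the paper builds the code as a $k$-dimensional subcode of the direct sum $V = \bigoplus_i C_i$ by finding columns of the generator matrix in general position with respect to $L = V^\perp$ (Lemma 14 of Gopalan et al.), and what they call a $k$-core of $L$ is exactly a size-$k$ set $T'$ with $\mathrm{rank}(G_V|_{T'}) = k$. Your parameterisation by a random $k\times mr$ matrix $A$ and Schwartz--Zippel is a concrete re-derivation of that general-position lemma, and your combinatorial step (case split on $g = |\{i : t_i \geq r\}|$) is a correct, contrapositive-free restatement of the paper's argument bounding $|S|$ via the quantities $b_\ell$. So the decomposition, the key combinatorial inequality, and the probabilistic device all match.

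However, there is a genuine gap in the final union bound. You bound the failure probability by $\bigl(\#\{T\}\bigr)\cdot k/q$ where $T$ ranges over all subsets of size $n - d^* + 1 = k + (\lceil k/r\rceil - 1)(\delta - 1)$, and you assert that $q > kn^k$ ``comfortably'' controls this. But $\binom{n}{k + (\lceil k/r\rceil -1)(\delta -1)}$ generally exceeds $n^k$ --- e.g.\ with $r=1$, $\delta = 2$, $n = 100$, $k = 10$ one has $\binom{100}{19} \approx 6.7\times 10^{20} > 10^{20} = n^k$ --- so your union bound as written requires a strictly larger field. The fix is the one implicit in the paper: since you have already shown $\mathrm{rank}(G_V|_T) \geq k$ for every admissible $T$, each such $T$ contains a size-$k$ subset $T'$ with $\mathrm{rank}(G_V|_{T'}) = k$, and $\mathrm{rank}(G|_T) = k$ is implied by $\mathrm{rank}(G|_{T'}) = k$. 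There are at most $\binom{n}{k} \leq n^k$ such $T'$, each giving a degree-$k$ determinant polynomial in $A$ that is not identically zero (by Cauchy--Binet, since $G_V|_{T'}$ has a nonzero $k\times k$ minor), so the union bound gives failure probability at most $kn^k/q < 1$. With that adjustment your argument is complete and equivalent to the paper's.
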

\begin{proof}
See Appendix \ref{app:scalar_allsymbol_existence_proof}.
\end{proof}

%
%
%
%

\subsection{An Upper Bound to the Minimum Distance of Concatenated Codes} \label{sec:concatenated_codes}

Consider a (serially) concatenated code (see \cite{For}, \cite{Dum}) having an $[n_1, k_1, d_1]$ code $\mathcal{A}$ as the
inner code and an $[n_2, k_2, d_2]$ code $\mathcal{B}$ as the outer code.  Clearly, a concatenated code falls into the
category of an code with $(r,\delta)$ all-symbol locality with $\delta = d_1$, $r = n_1 - d_1 + 1$.   Hence, the bound in
\eqref{eq:bound_scalar_locality} applies to concatenated codes as well. Using the fact that a concatenated code has length $n
= n_1n_2$, dimension $k = k_1k_2$, we obtain from Theorem \ref{thm:scalar_info_locality} the following upper bound
on minimum distance $d_{\min}$:
\begin{equation} \label{eq:bound_concatenated}
d_{\min} \ \leq \ n_1n_2 - k_1k_2  +1 -\left(\left\lceil{\frac{k_1k_2}{n_1-d_1+1}}\right\rceil - 1\right)(d_1 - 1 ).
\end{equation}
Well known bounds on the minimum distance of a concatenated codes are
\begin{equation} \label{eq:distance_concatenated_upperbound_known}
d_1d_2 \leq d_{\min} \ \leq \ n_1d_2.
\end{equation}
In practice, concatenated codes often employ an interleaver between the inner and outer codes in order to increase the
minimum distance \cite{BenDivMonPol}.  In this case, while the upper bound in
\eqref{eq:distance_concatenated_upperbound_known} no longer holds, the bound in \eqref{eq:bound_concatenated} continues to
hold, since the code continues to possess all-symbol locality even after interleaving.

We observe that even when an interleaver is not used, \eqref{eq:bound_concatenated} is tighter than
\eqref{eq:distance_concatenated_upperbound_known} if both the  codes are MDS and the dimension of the first code $k_1 > 1$.
In this case $k_1=n_1-d_1+1$ and $k_2=n_2-d_2+1$. Then \eqref{eq:bound_concatenated} gives us that
\begin{eqnarray}
 d_{\min} \ & \leq & \ n_1n_2 - k_1k_2  +1 -\left(\left\lceil{\frac{k_1k_2}{k_1}}\right\rceil - 1\right)(d_1 - 1 ) \\ \nonumber
     & = & \ n_1(k_2+ d_2-1 ) - k_1k_2  +1 - (k_2 -1)(d_1)+k_2-1\\ \nonumber
     & = & \ n_1k_2+ n_1d_2 - n_1 - k_1k_2 + 1 - (k_2)(n_1-k_1+1) + n_1 -k_1 +1 + k_2 -1 \\ \nonumber
     & = & \ n_1d_2 -(k_1 -1) \\
     & < & n_1d_2,
\end{eqnarray}
when  $k_1 > 1$.

\section{Vector Codes} \label{sec:vec_code_prelims}

Our aim here is to extend the notions of locality to include codes whose local codes are codes such as regenerating codes.  Vector codes, by which we mean codes over a vector symbol alphabet, provide the appropriate setting for such an extension\footnote{As noted in Section~\ref{sec:intro}, these codes can equivalently also be regarded as array codes.}.
\vspace{0.1in}

\begin{defn}
An $\mathbb{F}_q$-linear \textit{vector code} of block length $n$ is a code ${\mathcal{C}}$ having a symbol alphabet $\mathbb{F}_q^{\alpha}$ for some $\alpha>1$, i.e.,
\begin{eqnarray}
\mathcal{C}& = & \left\{ \mathbf{c} \ = \ (\mathbf{c}_1, \mathbf{c}_2 \ldots, \mathbf{c}_n), \ \mathbf{c}_i \in
\mathbb{F}_q^{\alpha}, \text{  all  }  i \in [n] \right\},
\end{eqnarray}
satisfying the additional property that given $\mathbf{c}, \mathbf{c}' \in \mathcal{C}$ and $a,b \in \mathbb{F}_q$, \begin{eqnarray}
 a\mathbf{c} + b\mathbf{c}' & \triangleq &  (a\mathbf{c}_1 + b\mathbf{c}'_1, a\mathbf{c}_2 + b\mathbf{c}'_2 \ldots,
a\mathbf{c}_n + b\mathbf{c}'_n)
\end{eqnarray}
also belongs to $\mathcal{C}$, in which $a\mathbf{c}_i$ is simply the scalar multiplication of the vector $\mathbf{c}_i$.
\end{defn}

\vspace{0.1in}

We will refer to symbols from $\mathbb{F}_q$, $\mathbb{F}_q^{\alpha}$ as scalar and vector symbols respectively.  The
field $\mathbb{F}_q$ will be termed as the base field and the parameter $\alpha$ as the vector-size parameter\footnote{In the distributed storage context, $\alpha$ is also the node-size parameter as it denotes the number of symbols contained in a node. }.  Associated with the vector code $\mathcal{C}$ is an $\mathbb{F}_q$-linear scalar code $\mathcal{C}^{(s)}$ of length $N= n\alpha$, where $\mathcal{C}^{(s)}$ is obtained by expanding each vector symbol within a codeword into $\alpha$ scalar symbols (in some prescribed order).  Conversely, the scalar code $\mathcal{C}^{(s)}$ also uniquely
determines the vector code if one is given a-priori, the manner in which sets of $\alpha$ scalar code symbols are to be grouped together, to obtain the corresponding vector symbols.  We will assume the canonical grouping, in which the first $\alpha$ scalar symbols form the first vector code symbol etc. We also use $K$ to denote the dimension of the scalar code $\mathcal{C}^{(s)}$ and often refer to  it as the scalar dimension of the code $\mathcal{C}$.

Given  a generator matrix $G$ for the scalar code ${\cal C}^{(s)}$, the first code symbol in the vector code is naturally associated with the first $\alpha$ columns of $G$ etc.  We will refer to the collection of $\alpha$ columns of $G$ associated with the $i^{\text{th}}$ code symbol ${\bf c}_i$ as the $i^{\text{th}}$ thick column.   To avoid confusion, we will refer to the columns of $G$ themselves as thin columns and hence there are $\alpha$ thin columns per thick column of the generator matrix.   We will assume that the $\alpha$ thin columns comprising any thick
column are linearly independent which is equivalent to saying that as the codewords run through the code ${\cal C}$, the $i^{\text{th}}$ code symbol ${\bf c}_i$, takes on all possible values from $\mathbb{F}_q^{\alpha}$.    We will also use $W_i$ to denote the ($\alpha$-dimensional) subspace of $\mathbb{F}_q^K$ associated with the $\alpha$ thin columns making up the $i^{\text{th}}$ thick column.

Given a subset $\mathcal{I} \subseteq [n]$, we use $G|_{\mathcal{I}}$ to denote the restriction of $G$ to
the set of thick columns with indices lying in $\mathcal{I}$.  We will declare $\mathcal{I}$ to be an information set for ${\cal C}$, if
\bea \label{eq:info_set}
\text{rank}(G|_{\mathcal{I}}) & = & K
\eea
and if further, no proper subset of $\mathcal{I}$ possesses this property.   The requirement in \eqref{eq:info_set} is equivalent to stating that
\begin{eqnarray}
\sum_{i \in \mathcal{I}} W_i & = & \mathbb{F}_q^{K}.
\end{eqnarray}
Since the subspaces $\{W_i, i = 1, \ldots, n\}$ can have non-trivial intersection, it follows that information sets
can be of different cardinality.   We use $\kappa$ to denote the minimum cardinality of an information set:
\begin{eqnarray}
\kappa & \triangleq& \min_{\text{information sets $\mathcal{I}$
of $\mathcal{C}$}} |\mathcal{I}|,
\end{eqnarray}
and will refer to $\kappa$ as the quasi-dimension of the code $\mathcal{C}$ or $\text{q-dim}(\mathcal{C})$.  Any $\mathcal{I}$ such that $|\mathcal{I}| = \kappa$ will be referred to as a minimum cardinality information set.
{ \begin{note}
              Scalar codes correspond to vector codes with $\alpha=1$.  The quasi-dimension $\kappa$ of a scalar code equals its dimension $k$.
             \end{note}
}

The (Hamming) distance between any two codewords $\mathbf{c}$ and $\mathbf{c}'$ of $\mathcal{C}$ is the number of vector symbols in which $\mathbf{c}$ and $\mathbf{c}'$ differ.  Since $\mathcal{C}$ is $\mathbb{F}_q$-linear,
it follows that the minimum distance, $d_{\text{min}}$, of $\mathcal{C}$ is equal to the minimum Hamming weight of a non-zero codeword in $\mathcal{C}$.

We will refer to a vector code of block length $n$, scalar dimension $K$, minimum distance $d_{\min}$, vector-size parameter $\alpha$ and  quasi-dimension $\kappa$ as  an $[n , K, d_{\text{min}}, \alpha, \kappa]$ code. This notation will be simplified to $[n , K, d_{\text{min}}]$, whenever the vector-size parameter $\alpha$ and the quasi-dimension $\kappa$ is either clear from the context or else is not relevant to the discussion.    If $[N,K,D_{\text{min}}]$ are the parameters of the scalar code $\mathcal{C}^{(s)}$, it is easily verified that
\bea \label{eq:parameters} \nonumber
\left\lceil \frac{K}{\alpha}\right\rceil & \ \leq \ \kappa \ \leq & \left\lceil\frac{N-D_{\text{min}}+1}{\alpha}\right\rceil , \\
 \left\lceil\frac{D_{\text{min}}}{\alpha}\right\rceil & \ \leq \ d_{\text{min}} \ \leq & D_{\min}.
\eea
We define the {\em rate} $\rho(\mathcal{C})$ of an $[n,K,d_{\min},\alpha,\kappa]$ vector code $\mathcal{C}$ as the quantity
\begin{eqnarray}
 \rho(\mathcal{C})  & = & \frac{K}{n\alpha} .
 \end{eqnarray}
Since, $\left \lceil \frac{K}{\alpha}\right\rceil \ \leq \ \kappa$, it follows that
\bea
 \rho(\mathcal{C})  & \leq & \frac{\kappa}{n}.
\eea

\subsection{Singleton and Erasure Bounds} \label{sec:MDS_erasure_bounds}

The Singleton bound on the size $q^K$ of the vector code $\mathcal{C}$ yields
\begin{eqnarray} \label{eq:MDS_bound}
q^K  & \leq & \left(q^{\alpha}\right)^{n-d_{\text{min}}+1}  \ \ \ \ \text{ (Singleton bound on code size)},
\end{eqnarray}
which gives us:
\begin{eqnarray} \label{eq:MDS_bound_distance}
 d_{\text{min}} & \leq & n - \left\lceil \frac{K}{\alpha}\right\rceil + 1 \ \ \ \ \text{ (Singleton bound on minimum distance)}.
\end{eqnarray}
We will refer to codes achieving the Singleton bound \eqref{eq:MDS_bound} with equality as vector MDS codes. Several constructions of vector MDS codes are known in literature, for example see \cite{BlaFarTil, BlaBraBruMen, BlaBraBruMenVar, BlaRot, BlaBruVar}.

A second bound arises from noting that, given any information set $\mathcal{I}$, the minimum distance is upper bounded by
\begin{eqnarray}
d_{\text{min}} \leq n - |\mathcal{I}| + 1 \label{eq:erasure_bound_general_infoset}.
\end{eqnarray}
This follows since the minimality inherent in our definition of an information set implies the existence of a non-zero codeword which is zero on $(|\mathcal{I}|-1)$ symbols.  In particular, since $\kappa$ is the smallest possible size of an information set, we have that \begin{eqnarray}\label{eq:erasure_bound}
 d_{\text{min}} & \leq & n - \kappa + 1   \ \ \ \ \text{ (erasure bound)}.
\end{eqnarray}
The converse implication of \eqref{eq:erasure_bound_general_infoset} is that $n-d_{\min}+1$ is the largest possible size of  an information set for the code.  We will refer to \eqref{eq:erasure_bound}
as the \textit{erasure bound} for vector codes.
Note that since $\kappa \geq \left\lceil \frac{K}{\alpha}\right\rceil$ the erasure bound in
\eqref{eq:erasure_bound} is in general tighter than the Singleton bound in \eqref{eq:MDS_bound_distance}.

Equality in \eqref{eq:MDS_bound} holds only if $ K=\kappa \alpha$ and  in this case $d_{\text{min}} = n -  \frac{K}{\alpha}
+ 1$, whereas, equality in \eqref{eq:MDS_bound_distance} can hold even if $\alpha \nmid K$.   We will say that a vector code is systematic if through a sequence of elementary row
operations and thick-column permutations, the generator matrix $G$ can be reduced to the form $G = [I_{\kappa\alpha} \mid
P_{\kappa\alpha \times (n-\kappa)\alpha}]$.    Without loss of generality, one can assume that the generator matrix $G$ of a vector MDS code is in systematic form i.e., is of the form $G = [I \mid P]$, where $I$ is an identity matrix of size $K$ and
corresponds to
$\kappa$ thick columns, while $P$ is a $K \times (N - K)$ matrix. A characterization of a vector MDS code in terms of its generator matrix is presented next.  The proof is analogous to the scalar case and hence omitted.

\vspace{0.1in}

\begin{lem}\label{lem:MDS}
Any $[n,K,d_{\min},\alpha,\kappa]$ vector code $\mathcal{C}$ is MDS if and only if the generator matrix
can be represented in the form $G = [I \mid P]$, where the $K \times (N - K)$ matrix
\begin{equation*}
 P= \begin{bmatrix}
      G_{1,1} & G_{1,2} & \dots & G_{1,n-\kappa} \\
      G_{2,1} & G_{2,2} & \dots & G_{2,n-\kappa} \\
      \vdots & & \ddots  &\vdots \\
      G_{\kappa,1} & G_{\kappa,2} & \dots & G_{\kappa,n-\kappa} \\
    \end{bmatrix}
\end{equation*}
possesses the property that every square block submatrix of $P$ is invertible. Here, the $\{G_{i,j}\}$ are square sub-matrices of size $\alpha  \times \alpha$, and by a \text{block} submatrix, we mean a submatrix whose entries belong to the $\{G_{i, j}\}$.
\end{lem}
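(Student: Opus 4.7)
The plan is to mimic the classical scalar MDS characterization, with scalar entries of $P$ replaced by $\alpha \times \alpha$ blocks. First, I would note that equality in the Singleton bound \eqref{eq:MDS_bound} forces $K = \kappa \alpha$ and $d_{\min} = n - \kappa + 1$, as observed in the paragraph preceding the lemma. Combined with the systematic form $G = [I \mid P]$, every codeword may be written as $[u \mid uP]$ for some $u \in \mathbb{F}_q^{\kappa \alpha}$, which we interpret as $\kappa$ vector symbols of length $\alpha$; the matrix $P$ is then viewed as a $\kappa \times (n-\kappa)$ array of $\alpha \times \alpha$ blocks $\{G_{i,j}\}$.

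For the ``only if'' direction I would argue by contradiction. Suppose some $t \times t$ block submatrix $M$ of $P$, indexed by block rows $T_1 \subseteq [\kappa]$ and block columns $T_2 \subseteq [n-\kappa]$, were singular. Then there exists a non-zero row vector $w \in \mathbb{F}_q^{t\alpha}$ with $w M = 0$. Define $u$ to equal $w$ on the $t$ information blocks indexed by $T_1$ and zero elsewhere. The resulting codeword $[u \mid uP]$ is then zero on the $\kappa - t$ information blocks outside $T_1$ and on the $t$ parity blocks indexed by $T_2$, giving a vector Hamming weight of at most $t + (n - \kappa - t) = n - \kappa$. Since $u \neq 0$ implies $[u \mid uP] \neq 0$, this contradicts $d_{\min} = n - \kappa + 1$.

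For the ``if'' direction I would assume every square block submatrix of $P$ is invertible and establish $d_{\min} \geq n - \kappa + 1$; together with the erasure bound \eqref{eq:erasure_bound} and $K = \kappa \alpha$ (built into the systematic form), this delivers the MDS property. Suppose for contradiction that a non-zero codeword $[u \mid uP]$ had vector Hamming weight at most $n - \kappa$. Let $T_1 \subseteq [\kappa]$ be the support of $u$ viewed as $\kappa$ vector symbols, and let $T_2 \subseteq [n-\kappa]$ index the parity blocks on which $uP$ vanishes. Counting zero blocks gives $(\kappa - |T_1|) + |T_2| \geq \kappa$, hence $|T_2| \geq |T_1|$. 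Select any $T_2' \subseteq T_2$ with $|T_2'| = |T_1|$; the identity $u|_{T_1}\, P|_{T_1, T_2'} = 0$, combined with invertibility of the square block submatrix $P|_{T_1, T_2'}$, forces $u|_{T_1} = 0$, hence $u = 0$, contradicting $u \neq 0$ (the degenerate case $|T_1| = 0$ gives $u = 0$ immediately).

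The main obstacle is the two-sided bookkeeping of zero blocks on the information side and the parity side, and ensuring that the count $|T_2| \geq |T_1|$ is tight enough to extract a genuine \emph{square} block submatrix whose invertibility hypothesis can be invoked. Once this combinatorial count is set up, the argument is essentially the scalar MDS proof lifted verbatim, with scalar coordinates replaced by $\alpha \times \alpha$ blocks and vector symbols taking the role of scalar symbols in the Hamming weight counts.
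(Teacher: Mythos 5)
Your proof is correct and is precisely the block-matrix lift of the classical scalar MDS characterization, which is exactly the argument the paper has in mind (it omits the proof with the remark ``analogous to the scalar case''). Both directions are handled cleanly: the weight count $(\kappa-|T_1|)+|T_2|\geq\kappa$ giving $|T_2|\geq|T_1|$ is the right bookkeeping to extract a square block submatrix, and the degenerate case is properly dispatched.
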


%
%

\vspace{0.1in}

\subsection{Puncturing and Shortening of a Vector Code}
Given any set $S \subseteq [n]$, we use $\mathcal{C}|_S$ to denote the restriction of the code to the set $S$ and will refer to this code as the code $\mathcal{C}$ punctured to set $S$.  Unlike in the scalar case, the quasi-dimension
of a punctured code, $\mathcal{C}|_S$, can be either larger or smaller than $\text{q-dim}(\mathcal{C})$.
\vspace{0.1in}

We define the shortened code $\mathcal{C}|^{S}$ as the code obtained by first restricting the attention to those codewords whose code symbols are zero on the complement $S^c$ of $S$ and then deleting the coordinates associated to $S^c$ leaving behind a code of length  $|S|$ \footnote{The generator matrix of the shortened code may not have the property that all thin columns associated with a thick column are linearly independent.  This issue, however, does not arise in the case of vector MDS codes.}. The lemma below describes the effect of shortening a vector MDS code. The proof is identical to that of the scalar case and is omitted.

\vspace{0.1in}

\begin{lem} \label{lem:short_MDS}
Given an $[n, K, d_{\text{min}} = n - \kappa + 1, \alpha, \kappa =\frac{K}{\alpha}]$ vector MDS code $\mathcal{C}$, and a set $S
\subseteq [n]$ such that $n - |S| < \kappa$, the shortened vector code $\mathcal{C}|^{S}$ is also vector MDS with parameters
$[|S|, \kappa' \alpha, d_{\text{min}}, \alpha, \kappa' = \kappa - (n - |S|)]$.
\end{lem}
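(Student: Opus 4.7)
The plan is to mimic the standard proof of the shortening lemma for scalar MDS codes, but carried out block-wise at the level of thick columns so that the block-MDS characterization of Lemma \ref{lem:MDS} can be applied. Set $t = n - |S|$, so by hypothesis $t < \kappa$. After a thick-column permutation (which preserves both the MDS property and the shortening construction), I may assume $S^c = \{1, 2, \ldots, t\}$, so that $S = \{t+1, \ldots, n\}$.

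The first step is to choose a convenient systematic form. Because $\mathcal{C}$ is vector MDS with $K = \kappa\alpha$ and $t < \kappa$, any $\kappa$ thick columns form an information set; in particular the thick columns $\{1, \ldots, \kappa\}$ do. Hence a sequence of elementary row operations brings $G$ to the form $G = [\,I_{K} \mid P\,]$ in which the identity occupies the first $\kappa$ thick columns and $P$ is a $\kappa\alpha \times (n-\kappa)\alpha$ matrix every square block submatrix of which is invertible (Lemma \ref{lem:MDS}). Write the message vector as $\mathbf{m} = (\mathbf{m}_1, \ldots, \mathbf{m}_{\kappa})$ with each $\mathbf{m}_i \in \mathbb{F}_q^{\alpha}$ corresponding to a thick column; then a codeword vanishes on $S^c = \{1,\ldots,t\}$ iff $\mathbf{m}_1 = \cdots = \mathbf{m}_t = \mathbf{0}$.

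The second step is to read off the shortened code. Setting the first $t$ message blocks to zero and deleting the first $t$ thick columns leaves a generator matrix for $\mathcal{C}|^{S}$ of the form
\begin{equation*}
G' \ = \ \bigl[\,I_{(\kappa - t)\alpha} \ \big| \ P'\,\bigr],
\end{equation*}
where $P'$ is precisely the bottom $(\kappa - t)\alpha$ rows of $P$. In particular, $\mathcal{C}|^{S}$ has length $|S| = n - t$, scalar dimension $(\kappa - t)\alpha$ and vector-size parameter $\alpha$. Setting $\kappa' = \kappa - t$, this is exactly the claimed parameter set once we establish the MDS property.

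The third step is to verify the MDS property via Lemma \ref{lem:MDS}. Every square block submatrix of $P'$ is a square block submatrix of $P$ (same set of columns, a subset of rows), and hence is invertible by the block-MDS property of the parent code. Therefore $G' = [I \mid P']$ satisfies the hypothesis of Lemma \ref{lem:MDS} and $\mathcal{C}|^{S}$ is a vector MDS code with quasi-dimension $\kappa' = \kappa - t$ and minimum distance $|S| - \kappa' + 1 = (n - t) - (\kappa - t) + 1 = n - \kappa + 1 = d_{\min}$, as required. No step looks like a genuine obstacle; the only thing to be careful about is that the systematic reduction is carried out block-wise so that identity blocks align with thick columns, which is possible exactly because the thick columns $\{1,\ldots,\kappa\}$ are an information set in the MDS sense of Lemma \ref{lem:MDS}.
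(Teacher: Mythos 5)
Your proof is correct and follows exactly the approach the paper intends: the paper omits the proof with the remark that it is identical to the scalar case, and you have carried out that standard scalar argument (systematic form, delete identity positions, observe the remaining parity submatrix inherits the square-submatrix-invertibility property) block-wise at the level of thick columns so that Lemma \ref{lem:MDS} applies. No gaps.
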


\subsection{Regenerating Codes as Vector Codes}\label{sec:reg_codes_review}

 Let $\mathcal{C}$ denote an $((n, k, d), (\alpha, \beta),B)$ regenerating code, as discussed in Section \ref{sec:intro_regen}. The class of regenerating codes under consideration here will all be linear and will have the property that all the $\alpha$
scalar symbols contained within a node are linearly independent and hence these codes fall within the framework of vector
codes considered here. Recall that the reconstruction property of a regenerating code says that the entire file can be recovered given the contents of any set of $k$ nodes and hence it follows that the minimum distance $d_{min}$ of $\mathcal{C}$ is lower bounded by
\bea \label{eq:d_min_erasure}
d_{\min} & \geq & n-k+1.
\eea
The lemmas below deal with the quasi-dimension of MSR and exact-repair MBR regenerating codes as well as the impact
of puncturing and shortening these codes.

\vspace{0.1in}

\begin{lem} \label{lem:MSR_kappa}
Any MSR code (either exact or functional repair) is vector MDS, i.e., achieves \eqref{eq:MDS_bound} with equality, and
has quasi-dimension $\kappa = k$.
\end{lem}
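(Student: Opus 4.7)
The plan is to combine the file-size identity $B = k\alpha$ specific to the MSR point with the Singleton bound \eqref{eq:MDS_bound_distance} and the reconstruction-based distance bound \eqref{eq:d_min_erasure}, and then invoke the equality condition for \eqref{eq:MDS_bound} mentioned in the text.

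First I would record the parameters. At the MSR point, $\alpha = (d-k+1)\beta$ and $B = k\alpha$. Viewing $\mathcal{C}$ as a vector code over the alphabet $\mathbb{F}_q^\alpha$, the scalar dimension equals the number of message symbols, so $K = B = k\alpha$. In particular, $\lceil K/\alpha \rceil = k$.

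Next I would bound $d_{\min}$ from both sides. The reconstruction property of a regenerating code states that the file can be recovered from any $k$ nodes; this is exactly what \eqref{eq:d_min_erasure} encodes, giving $d_{\min} \geq n-k+1$. On the other hand, the Singleton bound on minimum distance for vector codes, \eqref{eq:MDS_bound_distance}, yields $d_{\min} \leq n - \lceil K/\alpha \rceil + 1 = n - k + 1$. Combining these two inequalities gives $d_{\min} = n - k + 1$, and hence equality holds in the Singleton bound \eqref{eq:MDS_bound}; thus $\mathcal{C}$ is vector MDS.

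Finally I would determine $\kappa$. The text observes (just after \eqref{eq:erasure_bound}) that equality in \eqref{eq:MDS_bound} forces $K = \kappa \alpha$; since $K = k\alpha$, this immediately gives $\kappa = k$. Alternatively, one can derive $\kappa = k$ directly from the two inequalities $\kappa \geq \lceil K/\alpha \rceil = k$ (a general lower bound on quasi-dimension from the paragraph preceding \eqref{eq:parameters}) and $\kappa \leq n - d_{\min} + 1 = k$ (the erasure bound \eqref{eq:erasure_bound}). There is no real obstacle here; the proof is essentially a bookkeeping exercise that matches the MSR parameter identity $B = k\alpha$ against the two generic vector-code bounds the section has just set up, and the only subtlety worth stating explicitly is that the argument is identical for exact-repair and functional-repair MSR codes since it uses only the data-collection (reconstruction) property, not the repair property.
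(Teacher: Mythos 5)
Your proof is correct and follows essentially the same route as the paper's: both combine the MSR identity $K = B = k\alpha$ with the reconstruction-based bound \eqref{eq:d_min_erasure} and the generic vector-code bounds of Section~\ref{sec:MDS_erasure_bounds}. The only difference is ordering — the paper first pins down $\kappa = k$ directly from the information-set argument and then deduces vector MDS, whereas you first establish $d_{\min}=n-k+1$ (hence vector MDS, using $\alpha \mid K$) and then recover $\kappa=k$ — and your ``alternative'' derivation of $\kappa$ at the end is in fact exactly the paper's argument.
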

\begin{proof}
The scalar dimension (file size) of an MSR code is given by $K = B = k\alpha$, which implies that the
quasi-dimension $\kappa \geq k$.   On the other hand, from the data reconstruction property, one can recover all the data by
connecting to any set of $k$ nodes and hence $\kappa=k$ which implies in turn that  $K=\kappa \alpha$. This along with  \eqref{eq:d_min_erasure} implies that the code is vector MDS.
\end{proof}

\vspace{0.1in}

\begin{note}
When we say that a functional-repair MSR code is vector MDS, we will mean that the code
remains vector-MDS after every repair operation.
\end{note}

\vspace{0.1in}

\begin{cor}\label{lem:systematic}
 The generator matrix $G$ of any MSR code can be represented in systematic form $G=[I_K \mid P]$.
\end{cor}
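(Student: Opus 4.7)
The plan is to derive this as a direct consequence of Lemma \ref{lem:MSR_kappa} together with the standard systematic-form reduction. First, I would invoke Lemma \ref{lem:MSR_kappa} to conclude that any MSR code has quasi-dimension $\kappa = k$ and scalar dimension $K = \kappa \alpha = k \alpha$, so the ambient message space is $\mathbb{F}_q^{K}$ with $K$ equal to the number of thin columns in any $k$ consecutive thick columns.

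Next, I would pick any information set $\mathcal{I} \subseteq [n]$ with $|\mathcal{I}| = k$, which exists by the data reconstruction property of a regenerating code: connecting to any $k$ nodes recovers the full file, so $\sum_{i \in \mathcal{I}} W_i = \mathbb{F}_q^{K}$. Because $\dim W_i = \alpha$ and the total dimension already equals $K = k\alpha$, the sum is direct, and hence the restricted matrix $G|_{\mathcal{I}}$ consists of $K$ linearly independent thin columns and is therefore an invertible $K \times K$ matrix over $\mathbb{F}_q$.

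The final step is the usual change of basis: left-multiplying $G$ by $(G|_{\mathcal{I}})^{-1}$ preserves the row space (hence the code) and transforms $G|_{\mathcal{I}}$ into the identity $I_K$. After permuting the thick columns so that the indices in $\mathcal{I}$ appear first, the generator matrix takes the desired systematic form $G = [I_K \mid P]$ with $P$ a $K \times (n-k)\alpha$ matrix. No step poses a real obstacle here; in fact the conclusion can alternatively be read off directly from the \textquotedblleft only if\textquotedblright{} direction of Lemma \ref{lem:MDS} once Lemma \ref{lem:MSR_kappa} has identified the MSR code as vector MDS with $\kappa = k$.
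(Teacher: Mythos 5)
Your proposal is correct and follows essentially the same route as the paper: the paper treats this as an immediate consequence of Lemma \ref{lem:MSR_kappa} (MSR codes are vector MDS with $\kappa = k$) combined with the preceding observation that a vector MDS code may, without loss of generality, be taken to have a systematic generator matrix $[I_K \mid P]$. Your explicit spelling-out of the change-of-basis step (left-multiplication by $(G|_{\mathcal{I}})^{-1}$ followed by a thick-column permutation) is exactly the justification the paper leaves implicit.
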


\vspace{0.1in}

\begin{lem}  \label{lem:MBR_kappa}
Any exact-repair MBR code is optimal with respect to the erasure bound and has quasi-dimension $\kappa = k$.
\end{lem}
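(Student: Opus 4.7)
The plan is to pin down $\kappa=k$ and then read off both conclusions at once: the erasure bound \eqref{eq:erasure_bound} will force $d_{\min} \leq n-k+1$, while the reconstruction lower bound \eqref{eq:d_min_erasure} supplies the matching inequality $d_{\min} \geq n-k+1$. So the whole task reduces to sandwiching $\kappa$ between $k$ and $k$.

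The upper bound $\kappa \leq k$ is immediate from the data-reconstruction guarantee of regenerating codes: any $k$ nodes recover the file, which means that for every $k$-subset $\mathcal{J}\subseteq[n]$ the restriction $G|_{\mathcal{J}}$ has rank equal to the scalar dimension $K=B$. Trimming such a $\mathcal{J}$ to a minimal full-rank subset yields an information set of cardinality at most $k$.

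For the opposite inequality $\kappa \geq k$ I would argue by contradiction. Suppose some set of $k-1$ nodes forms an information set, so the entire file is recoverable from those $k-1$ nodes alone. The cut-set derivation that produces \eqref{eq:cut_set_bd} is indexed only by the size of the data-collecting set and by the repair parameters $(d,\beta)$; rerunning it with $k-1$ in place of $k$ gives $B \leq \sum_{i=0}^{k-2}\min\{\alpha,(d-i)\beta\}$. At the MBR point $\alpha=d\beta$, the right-hand side equals $(k-1)d\beta-\binom{k-1}{2}\beta$, whereas the MBR file size is $B = kd\beta-\binom{k}{2}\beta$. Their difference is $(d-k+1)\beta$, which is strictly positive in the regime $d\geq k$ where MBR codes operate, yielding the desired contradiction. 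Hence every information set has at least $k$ elements, i.e., $\kappa \geq k$, and combining with the previous paragraph gives $\kappa = k$. Plugging $\kappa=k$ into \eqref{eq:erasure_bound} and comparing with \eqref{eq:d_min_erasure} yields $d_{\min}=n-k+1 = n-\kappa+1$.

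The main delicate point is the reuse of the cut-set argument with the collecting-set size reduced from $k$ to $k-1$. I would resolve this by emphasizing that the information-flow-graph derivation is agnostic to the design parameter $k$: it takes as input any subset of nodes from which the file is claimed to be recoverable and upper bounds the file size by the mincut between the source and the corresponding sink. Once this observation is granted, the rest of the argument is a short arithmetic comparison at the MBR point.
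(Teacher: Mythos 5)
Your proposal is correct, but it takes a genuinely different route from the paper. The paper dispatches $\kappa=k$ with a single citation to \cite{ShaRasKumRam_rbt}, where the rank-accumulation profile of exact-repair MBR codes is established: any $i\leq k$ nodes jointly carry exactly $\sum_{j=0}^{i-1}\min\{\alpha,(d-j)\beta\}$ scalar symbols of rank, which is strictly less than $K$ for $i=k-1$ and equals $K$ at $i=k$. Your argument instead derives $\kappa\geq k$ from scratch by rerunning the cut-set derivation with a $(k-1)$-node data collector and arriving at an arithmetic contradiction $(d-k+1)\beta>0$ at the MBR operating point. That is a perfectly valid and more self-contained route; what the paper's citation buys is a stronger structural statement (the entire accumulation profile, not just its endpoints), while your version isolates exactly the fact needed.

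One thing worth surfacing explicitly: your "delicate point" about rerunning the cut-set bound with $k-1$ terminal nodes hides a dependence on exact repair. The cut-set argument requires an adversarial sequence of failures and regenerations after which the data collector contacts the $k-1$ replacement nodes and must still recover the file. If you simply connected a collector to $k-1$ original nodes with no repairs, the mincut would be $(k-1)\alpha$, which at MBR does not in general contradict $B=kd\beta-\binom{k}{2}\beta$. The contradiction is only reached through the staged-repair cut, and for the terminal $k-1$ nodes to still constitute an information set after those repairs you need the replacements to reproduce the original contents verbatim — i.e., exact repair, which is exactly the hypothesis of the lemma. Once you flag that, the argument is complete: together with the trivial $\kappa\leq k$ from the reconstruction guarantee, you get $\kappa=k$, and then the erasure bound \eqref{eq:erasure_bound} and the reconstruction bound \eqref{eq:d_min_erasure} pinch $d_{\min}$ to $n-k+1$, as in the paper.
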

\begin{proof}
The fact that the quasi-dimension $\kappa = k$ follows from properties of exact-repair optimal regenerating codes
discussed in \cite{ShaRasKumRam_rbt}.  The erasure bound optimality then follows from \eqref{eq:d_min_erasure}.
\end{proof}

\vspace{0.1in}

\begin{lem}\label{lem:punct_regen}
Suppose $\mathcal{C}$ is any $((n, k, d), (\alpha, \beta), B)$ regenerating code and if $S \subseteq [n]$ is such that $|S| > d$,
then the punctured code $\mathcal{C}|_S$ is also a regenerating code with parameters $(|S|, k, d, (\alpha, \beta), B)$.
\end{lem}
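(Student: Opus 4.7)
The plan is to verify that the punctured code $\mathcal{C}|_S$ inherits the two defining features of a regenerating code, namely the data-reconstruction property and the node-repair property, with the claimed parameters $(|S|, k, d, (\alpha,\beta), B)$. The first observation is that puncturing only discards nodes indexed by $[n]\setminus S$; it does not alter the contents of the surviving nodes. Hence each of the $|S|$ nodes of $\mathcal{C}|_S$ continues to store an $\alpha$-dimensional vector symbol over $\mathbb{F}_q$, so the per-node storage parameter $\alpha$ and the per-helper download parameter $\beta$ are automatically preserved.

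Next I would check data reconstruction. A data collector connecting to any $k$ nodes of $\mathcal{C}|_S$ is connecting to $k$ nodes of the original code $\mathcal{C}$, so by the reconstruction property of $\mathcal{C}$ it can recover the entire file of size $B$. This immediately shows that the file size parameter remains $B$ and that the reconstruction degree remains $k$. (Implicitly one needs $|S|\geq k$, but this follows from $|S|>d\geq k$ in the usual regenerating-code regime.)

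The remaining, and only substantive, step is node repair. Suppose node $i\in S$ fails in $\mathcal{C}|_S$. The hypothesis $|S|>d$ guarantees $|S\setminus\{i\}|\geq d$, so we can choose any subset $H\subseteq S\setminus\{i\}$ of size exactly $d$. In the original code $\mathcal{C}$, the repair property says that the contents of node $i$ can be regenerated from $\beta$ symbols downloaded from each node in $H$. Because the helper nodes $H$ lie entirely inside $S$ and their stored content is unchanged by puncturing, the same repair linear combinations apply verbatim in $\mathcal{C}|_S$ and yield the correct replacement node. Thus the repair degree stays $d$ and the per-node download stays $\beta$.

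The only step with any content is the repair step, and it hinges on the single inequality $|S\setminus\{i\}|\geq d$ made available by the assumption $|S|>d$; once that is in hand the argument is a pure pass-through from the properties of $\mathcal{C}$. I expect no real obstacle; the lemma is essentially a bookkeeping statement, and the proof is short.
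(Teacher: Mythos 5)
Your proof is correct. The paper states this lemma without proof, treating it as an immediate consequence of the definitions, and your argument supplies exactly the missing bookkeeping: puncturing leaves the surviving nodes' contents intact, any $k$ (resp.\ $d$) nodes of $\mathcal{C}|_S$ are also nodes of $\mathcal{C}$, and the hypothesis $|S|>d$ (together with the standing assumption $d\geq k$) guarantees that a failed node always has at least $d$ surviving helpers and that a data collector can always find $k$ nodes, so both the any-$d$ repair and any-$k$ reconstruction properties pass through with the same $(\alpha,\beta)$ and file size $B$.
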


\vspace{0.1in}

\begin{lem}[Theorem $6$ of \cite{RasShaKum_pm}]\label{lem:short_MSR}
Suppose $\mathcal{C}$ is an $((n, k, d), (\alpha, \beta))$ MSR code and consider $S \subseteq [n]$ such that $\gamma \triangleq
n  - |S| < k$. Then the shortened code $\mathcal{C}|^S$ is also an MSR code with parameters $(n - \gamma, k - \gamma, d
- \gamma, (\alpha, \beta))$.
\end{lem}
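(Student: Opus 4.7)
The plan is to leverage the vector MDS structure of MSR codes (Lemma~\ref{lem:MSR_kappa}) together with the shortening lemma for vector MDS codes (Lemma~\ref{lem:short_MDS}), and then separately verify the node-repair property by simulating a repair in the parent code.

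First I would observe that, by Lemma~\ref{lem:MSR_kappa}, $\mathcal{C}$ is a vector MDS code with quasi-dimension $\kappa = k$ and scalar dimension $K = k\alpha$. Since $\gamma = n - |S| < k = \kappa$, Lemma~\ref{lem:short_MDS} applies and shows that $\mathcal{C}|^{S}$ is itself vector MDS, of block length $|S| = n - \gamma$, with quasi-dimension $k - \gamma$ and scalar dimension $(k-\gamma)\alpha$. This immediately gives the data-reconstruction property of a candidate regenerating code with $k' = k - \gamma$: any $k - \gamma$ thick columns of $\mathcal{C}|^{S}$ form an information set, so the whole file can be recovered from any $k-\gamma$ surviving nodes. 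The parameters $\alpha,\beta$ satisfy the MSR relation with the new $k',d'$ since $d - k + 1 = (d-\gamma) - (k-\gamma) + 1$, so $\alpha = (d' - k' + 1)\beta$ with $d' = d - \gamma$. A quick check against the cut-set bound $\sum_{i=0}^{k'-1}\min\{\alpha,(d'-i)\beta\} = k'\alpha$ then confirms optimality at the MSR point.

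The main obstacle, and the only nontrivial step, is establishing node repair in $\mathcal{C}|^{S}$ using only $d' = d - \gamma$ helpers while downloading $\beta$ symbols per helper. The idea is to lift the repair to the parent code: every codeword of $\mathcal{C}|^{S}$ arises from a codeword of $\mathcal{C}$ whose components on the $\gamma$ nodes indexed by $S^c$ are identically zero. To repair a failed node $i \in S$ of $\mathcal{C}|^{S}$, invoke the repair procedure of $\mathcal{C}$ for the same node $i$, taking the set of $d$ helpers to consist of the $\gamma$ ``virtual'' nodes in $S^c$ together with any $d - \gamma$ actual surviving nodes from $S \setminus \{i\}$. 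This is feasible because $|S \setminus \{i\}| = n - \gamma - 1 \geq d - \gamma$, which is equivalent to the standing assumption $n \geq d + 1$. The virtual helpers contribute zero and require no transmission, while each of the $d - \gamma$ true helpers transmits its prescribed $\beta$ symbols; the parent code's repair guarantee then reproduces the contents of node $i$.

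Putting these pieces together, $\mathcal{C}|^{S}$ meets the data-collection and node-repair properties with parameters $((n-\gamma, k-\gamma, d-\gamma),(\alpha,\beta))$ and has file size $(k-\gamma)\alpha = k'\alpha$, matching the cut-set bound at the MSR point. Hence $\mathcal{C}|^{S}$ is itself an MSR code with the claimed parameters.
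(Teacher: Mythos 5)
The paper does not actually prove this lemma: it is imported as Theorem~6 of \cite{RasShaKum_pm}, so there is no in-paper argument for your proof to be measured against. Judged on its own, your proof is correct and complete. The reconstruction half correctly chains Lemma~\ref{lem:MSR_kappa} (MSR $\Rightarrow$ vector MDS with $\kappa = k$) with the shortening Lemma~\ref{lem:short_MDS} to conclude $\mathcal{C}|^S$ is vector MDS of quasi-dimension $k-\gamma$, and the cut-set/MSR-point check is just arithmetic since $d-k$ is preserved. The repair half --- simulate the parent's repair of node $i$ with helper set $S^c$ together with any $d-\gamma$ survivors in $S\setminus\{i\}$, the $\gamma$ virtual helpers in $S^c$ contributing nothing --- is the standard shortening trick and is valid, but it implicitly uses two hypotheses worth stating: (i) the helper-to-replacement transfer maps are $\mathbb{F}_q$-linear, so a zero-content helper transmits zero (consistent with the paper's standing restriction to linear regenerating codes, but this is the precise property being invoked); and (ii) the parent code is \emph{exact}-repair, so the simulated repair reproduces node $i$'s content rather than merely a functionally equivalent replacement (this matches both the scope of Theorem~6 of \cite{RasShaKum_pm} and the only place the lemma is used, namely Construction~\ref{constr:pyramid_msr}, even though the lemma as literally worded does not say ``exact''). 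Making those two assumptions explicit would tighten the write-up; the underlying argument is right.
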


\subsection{Uniform Rank Accumulation Codes} \label{sec:URA_code}

\begin{defn} \label{defn:URA_code}
Let $\mathcal{C}$ be an $[n, K, d_{\min}, \alpha, \kappa]$ vector code having generator matrix $G$ and let $S_i, 1
\leq i \leq n$ be an arbitrary subset of $i$ thick columns of $G$. The code $\mathcal{C}$ is said to be a Uniform Rank Accumulation(URA) code, i.e., a code possessing the URA property, if the restriction $G |_{S_i}$  of $G$ to $S_i$, has rank equal to \bean
\sum_{j=1}^{i} a_j,
\eean
for some set $\{a_1,a_2,\cdots, a_{n}\}$ of non-negative integers that are independent of the specific set $S_i$ of $i$
thick columns chosen.  \end{defn}

\vspace{0.1in}

We will refer to the sequence $\{a_i, 1 \leq i \leq n\}$ as the rank accumulation profile of the code $\mathcal{C}$. Under the
the definition of a vector code considered here, all the thin columns comprising a thick column in the generator matrix of a vector code are linearly independent and hence  we have $a_1 = \alpha$, the vector-size parameter of $\mathcal{C}$. It is also
straightforward to show that
\begin{equation}
\alpha = a_1 \geq a_2 \geq \cdots a_{n-2}\geq a_{n-1} \geq a_{n} \geq 0,
\end{equation}
and that
\begin{eqnarray}
\sum_{i=1}^{n}a_i & = & K.
\end{eqnarray}
Moreover, given that $\mathcal{C}$ has minimum distance $d_{\min}$, it follows that the last $d_{\min}-1$ elements of the
rank accumulation profile must equal $0$, i.e.,
\begin{eqnarray}
a_{n-i} & = & 0 , \ \ 0 \leq i \leq (d_{\min}-2),
\end{eqnarray}
and
\begin{eqnarray}
a_{n-d_{\min}+1} & > & 0.
\end{eqnarray}
\begin{note}\label{rem:URA_erasure_opt}
Whenever $\mathcal{C}$ is a URA code, since any set of $\kappa$ thick columns of $G$ form an information set for $\mathcal{C}$, it follows that $\mathcal{C}$ is optimal with respect to the erasure bound in \eqref{eq:erasure_bound}.
\end{note}

\begin{note} Clearly, any vector MDS code is a URA code and it follows from Lemma \ref{lem:MSR_kappa} that this is also true of MSR codes. For both these classes of codes we have
\begin{eqnarray}
 a_i \ = \ \alpha, \ 1 \leq i \leq \kappa &, & a_i \ = \ 0, \ \kappa + 1 \leq i \leq n.
\end{eqnarray}
One can also show using the information accumulation profile for exact regenerating codes (see \cite{ShaRasKumRam_rbt}),
that MBR codes are also URA codes.   In the case of the MBR codes, the rank accumulation profile $\{a_i\}$ is given by
\begin{eqnarray}
 a_i \ = \ \alpha - i + 1, \ 1 \leq i \leq \kappa &, & a_i \ = \ 0, \ \kappa + 1 \leq i \leq n.
\end{eqnarray}
\end{note}

\section{Locality in vector codes} \label{sec:locality_vec_codes}

In this section we define the notion of locality in the context of vector codes, in a manner analogous to that of scalar
codes. We will specifically consider codes with locality where all the local codes are URA codes and obtain a upper bound on
their minimum distance. Like in the scalar case, we will also deduce the structure of codes which achieve the bound with
equality. The discussion of minimum distance bounds for codes with locality, where local codes are not necessarily URA codes
is deferred until Section \ref{sec:kappa_bound}.

In all of the definitions below, $\mathcal{C}$ is an $[n, K, d_{\text{min}}, \alpha, \kappa]$ vector code possessing a $(K
\times n\alpha)$ generator matrix $G$.

\vspace*{0.2in}

\begin{defn}[$(r, \delta)$ locality] \label{defn:locality}
The $i^{\text{th}}$ vector code symbol, $i \in [n]$, of $\mathcal{C}$ is said to have  $(r, \delta)$ locality, $\delta \geq 2$, if there exists a punctured code of
${\mathcal C}$ with support containing $i$, whose length is at most $r + \delta - 1$, and whose minimum distance is at least
$\delta$, i.e., there exists a subset $S_i \subseteq [n]$ such that
\begin{itemize}
\item  $i \in S_i,  \ |S_i| \leq r + \delta - 1$ and
\item  $d_{\text{min}}\left(\mathcal{C}|_{S_i}\right) \geq \delta$.
\end{itemize}
\end{defn}
It follows from the erasure bound, given in \eqref{eq:erasure_bound}, that $\text{q-dim}(\mathcal{C}|_{S_i}) \leq r$.

\vspace{0.1in}
\begin{defn}[$(r, \delta)$ information locality] \label{defn:info_locality}
The code $\mathcal{C}$ is said to have $(r,\delta)$ information locality if there exist a set of punctured codes  $ \{\mathcal{C}_i \}_{i \in \mathcal{L}}$ of $\mathcal{C}$ with respective supports $\{S_i\}_{i \in \mathcal{L}}$ such that
\begin{itemize}
\item  $|S_i| \leq r + \delta - 1$,
\item  $d_{\text{min}}\left(\mathcal{C}_i\right) \geq \delta$, and
\item  $\displaystyle\text{Rank}(G|_{\cup_{i \in \mathcal{L}}S_i})=K$.
\end{itemize}
Here $\mathcal{L}$ denotes the index set for the local codes.
\end{defn}
\vspace{0.1in}
If further $\cup_{i \in \mathcal{L}}S_i=[n]$, then the code is said to have $(r, \delta)$ all-symbol locality.

The case of locality in vector codes with $\delta = 2$ has been previously considered in \cite{PapDim}, where it was shown that
under $(r, \delta = 2)$ all-symbol locality, the minimum distance $d_{\text{min}}$ of $\mathcal{C}$ is upper bounded by
\bea
 d_{\text{min}} & \leq & n - \left \lceil \frac{K}{\alpha} \right \rceil + 1 - \left(
\left \lceil \frac{K}{r\alpha}\right\rceil - 1\right). \label{eq:dimiakis_delta_eq_2}
\eea

From an implementation point of view, it is desirable that the local codes be identical and this prompts the definition of
exact locality. We define the code $\mathcal{C}$ to have exact $(r,\delta)$ information locality if $\mathcal{C}$ has
$(r,\delta)$ information locality such that $|S_i| = r + \delta - 1$ and $d_{\text{min}}\left(\mathcal{C}_i\right) =
\delta$, $\forall i \in \mathcal{L}$. In addition, if $\cup_{i \in \mathcal{L}}S_i=[n]$, then the code is said to
have exact $(r, \delta)$ all-symbol locality.


Let ${\cal U}$ (for Uniform rank accumulation) denote the class of $\mathbb{F}_q$-linear vector codes ${\cal C}$, where each
code $\mathcal{C}$ is an $[n, K, d_{\text{min}}, \alpha, \kappa]$ vector code
\bit
\item possessing exact $(r, \delta)$ information locality with $\delta \geq 2$,
\item whose associated local codes $\{ \mathcal{C}_i\}_{i \in \mathcal{L}}$ are URA
codes (described in Section~\ref{sec:URA_code}) with rank accumulation profile $\{a_i,  i \in [r+\delta-1] \}$.
\eit
Note that since URA codes are erasure optimal (see Remark \ref{rem:URA_erasure_opt}), q-dim$(\mathcal{C}_i) = r, \forall i
\in \mathcal{L}$ and hence $a_{r +1}, \ldots , a_{n_L}$ are all zeros. We use $n_L,K_L$ to denote the block length and scalar
dimension of the local codes ${\cal C}_i$ respectively, i.e.,
\bean
n_L & \triangleq & r + \delta - 1, \\
K_L & \triangleq & \sum_{i = 1}^{n_L}a_i.
\eean

We will now present an upper bound on the minimum distance $d_{\min}$ of the code $\mathcal{C}$, whenever  $\mathcal{C} \in
\mathcal{U}$. Subsequently, under certain assumptions, we also identify necessary conditions of optimality with respect to
the minimum distance bound. We begin by introducing some terminology that will be used to describe the bound.

\vspace*{0.05in}

\subsection{Sub-Additivity}

Let us extend the finite length vector $(a_1,a_2,\cdots,a_{n_L})$ to a periodic semi-infinite sequence $\{a_i\}_{i=1}^{\infty}$ of period $n_L$ by defining
\bea
a_{i+j n_L} & = & a_{i}, \ 1 \leq i \leq n_L, \  j \geq 1.
\eea
Let $P(\cdot)$ denote the sequence of leading sums of this semi-infinite sequence, i.e.,
\bea
P(s) & = & \sum_{i=1}^s a_i \label{eq:P_fun}, \ \ s\geq 1.
\eea
It follows from the periodicity of $\{a_i\}_{i=1}^{\infty}$ that\footnote{It turns out to be more convenient to have the
range of $u_0$ as $1 \leq u_0 \leq n_L$, as opposed to the more conventional range $0 \leq u_o \leq (n_L-1)$.}
\bea
P(u_1n_{L}+u_0) & = & u_1K_L+P(u_0), \ \ u_1 \geq 0, \ \ 1 \leq u_0 \leq n_L.
\eea
With respect to the finite length vector $(a_1,a_2,\cdots,a_{n_L})$, let $Q(\cdot)$ represent the trailing-sum function given by
\bea
Q(s) & = & \sum_{i=n_L-(s-1)}^{n_L} a_i, \  1 \leq s \leq n_L. \label{eq:Q_fun}
\eea
We extend the definitions of $P(\cdot),Q(\cdot)$ by setting $P(0)=Q(0)=0$.
It can be verified that
\ben
\item For $s$ in the range $0 \leq s \leq n_L$, $P(s)  \geq  Q(s)$,
\item $P(\cdot)$ is sub-additive, i.e.,
\bea \label{eq:sub_additive}
P(s+s') & \leq & P(s) + P(s'), \text{  for all $s,s' \geq0$},
\eea
\item the sum $P(s)+Q(s')$ satisfies
\bea \label{eq:P_plus_Q}
P(s)+Q(s') & \leq & P(s+s'), \text{  for all } \ \ s\geq0, \ \ 0 \leq s' \leq n_L.
\eea
\een
We next define the function $P^{(\text{inv})}$ by setting $P^{(\text{inv})}(\nu)$, for $\nu \geq 1 $, to be the smallest integer $s$ such
that $P(s) \geq \nu$, i.e., $P^{(\text{inv})}(\nu)  =  s$, where $s>0$ is uniquely determined from $P(s-1) < \nu \leq
P(s)$.
It can be verified that
\bea
P^{(\text{inv})}(v_1K_L+v_0)  & = & v_1n_L+P^{(\text{inv})}(v_0), \ \ v_1\geq 0, \ \ 1 \leq v_0 \leq K_L.
\eea
As a special case, it follows that
\bea \label{eq:locality_Pinv_temp}
P^{(\text{inv})}(v_1K_L) & = & (v_1-1)n_L+r.
\eea

\subsection{Upper Bound on Minimum Distance Under Exact Information Locality and Uniform Rank Accumulation}

\vspace*{0.1in}

\begin{thm} \label{thm:URA_bound}
Let ${\cal C}$ belong to Class ${\cal U}$.
Then the minimum distance of ${\cal C}$ is upper bounded by
\bea \label{eq:dmin_Pinv_bound}
d_{\min} & \leq & n-P^{(\text{inv})}(K)+1.
\eea
When $K_L \mid K$, the bound takes on the form
\bea \label{eq:dmin_Pinv_bound_BdivK}
d_{\min} & \leq & n-\left( \frac{K}{K_L} \right)  r + 1  - \left( \frac{K}{K_L} -1 \right) (\delta-1).
\eea
\end{thm}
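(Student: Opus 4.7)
The plan is to prove \eqref{eq:dmin_Pinv_bound} by constructing a subset $T \subseteq [n]$ of thick columns with $|T| \geq P^{(\text{inv})}(K) - 1$ and $\text{rank}(G|_T) < K$. Once such a $T$ is in hand, standard linear-algebra duality provides a nonzero codeword supported entirely on $[n] \setminus T$ of Hamming weight at most $n - |T|$, yielding $d_{\min} \leq n - P^{(\text{inv})}(K) + 1$. The second bound \eqref{eq:dmin_Pinv_bound_BdivK}, for the case $K_L \mid K$, then follows directly from the first by substituting $K = (K/K_L) K_L$ into the evaluation $P^{(\text{inv})}(v K_L) = (v-1) n_L + r$ of \eqref{eq:locality_Pinv_temp} and expanding $n_L = r + \delta - 1$.

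The set $T$ is built by a greedy procedure that generalises the argument underlying the scalar bound of Theorem \ref{thm:scalar_info_locality}. Start with $T = \emptyset$. At each step, select a local code $\mathcal{C}_j$ from the information-locality collection $\{\mathcal{C}_i\}_{i \in \mathcal{L}}$ whose span $W_j$ is not yet contained in the span of $G|_T$; such a code is always available while $\text{rank}(G|_T) < K$, since by hypothesis $\text{Rank}(G|_{\cup_{i \in \mathcal{L}} S_i}) = K$. Append to $T$ all thick columns of $S_j$ not already present. Continue in this full-code phase until the addition of any further complete local code would push $\text{rank}(G|_T)$ to $K$. Then switch to a partial-code phase: using one more local code, add its new thick columns one at a time and stop just before the rank would reach $K$. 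By the URA property of this terminating code, any $s$ new thick columns can contribute at most $P(s)$ to $\text{rank}(G|_T)$, so one may safely add $s = P^{(\text{inv})}(K - \sigma_\ell) - 1$ columns, where $\sigma_\ell$ is the rank of $G|_T$ at the end of the full-code phase.

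The principal obstacle is verifying that the resulting $|T|$ is indeed at least $P^{(\text{inv})}(K) - 1$ in full generality. In the simplest scenario, when the local codes selected have pairwise disjoint supports and each therefore contributes exactly $n_L$ new thick columns together with a rank increment of $K_L$, the count follows by direct computation from the translation identity $P^{(\text{inv})}(v_1 K_L + v_0) = v_1 n_L + P^{(\text{inv})}(v_0)$: $v_1$ full local codes followed by the partial step produce $v_1 n_L + P^{(\text{inv})}(v_0) - 1 = P^{(\text{inv})}(K) - 1$ columns. In the overlapping case, a crucial consequence of URA is that a support overlap of $s^\cap_j$ columns between $\mathcal{C}_j$ and $T_{j-1}$ forces the corresponding span overlap inside $W_j$ to have dimension $P(s^\cap_j)$, so the rank increment of the $j$-th step is at most $K_L - P(s^\cap_j) = Q(t_j)$ with $t_j = n_L - s^\cap_j$. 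The boundary inequality \eqref{eq:P_plus_Q}, namely $P(s) + Q(s') \leq P(s + s')$, together with the sub-additivity of $P$ from \eqref{eq:sub_additive}, provides exactly the tool needed to balance the loss in new-column count $t_j < n_L$ against the corresponding reduction in rank increment, thereby reducing the overlapping case to the disjoint calculation. Carrying out this bookkeeping cleanly--so that support overlaps cost the greedy algorithm no more in $|T|$ than they save in accumulated rank--is the delicate step I expect to dominate the technical work.
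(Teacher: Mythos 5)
Your proposal matches the paper's proof: the same greedy construction of $T$ (Algorithm~\ref{alg:URA_bound}), the same $Q(\cdot)$-bound on the per-step rank increment coming from the URA property, and the same use of sub-additivity \eqref{eq:sub_additive} together with the boundary inequality \eqref{eq:P_plus_Q} to handle overlapping supports. The one organizational difference is that the paper sidesteps your ``delicate bookkeeping'' of counting how many fresh columns can be added in the partial step---which is awkward because the terminating local code may offer fewer than $P^{(\text{inv})}(K-\sigma_\ell)-1$ new columns when it overlaps $T$---by instead bounding the accumulated rank from above via the chain $K \leq \sum_{j<J} Q(s_j) + \bigl(P(\sigma+s_J+1)-P(\sigma)\bigr) \leq \cdots \leq P\bigl(1+\sum_j s_j\bigr)$, using a phantom ``$+1$'' column to close the terminal step; inverting that single inequality then gives $|T_J| \geq P^{(\text{inv})}(K)-1$ directly, with no need to track the partial-step column count separately.
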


\vspace{0.1in}

\begin{cor} \label{cor:URA_bound}
Let ${\cal C}$ belong to Class ${\cal U}$. Then given $n$, $d_{\min}$, the scalar dimension of $\mathcal{C}$ is upper
bounded by
\bea
K &\leq & P(n-d_{\min}+1). \nonumber
\eea
\end{cor}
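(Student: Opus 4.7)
The plan is to exhibit a subset $T \subseteq [n]$ of thick columns with $|T| = P^{(\text{inv})}(K) - 1$ and $\text{rank}(G|_T) < K$; this will immediately yield $d_{\min} \le n - |T| = n - P^{(\text{inv})}(K) + 1$ by the erasure-type argument, since $T$ fails to span all $K$ scalar coordinates and therefore there is a nonzero codeword of $\mathcal{C}$ supported entirely on $[n]\setminus T$, of Hamming weight at most $n - |T|$. The specialization to the case $K_L\mid K$ then drops out of $n - P^{(\text{inv})}(K) + 1$ by substituting the closed-form identity $P^{(\text{inv})}(K) = (K/K_L - 1)n_L + r$, which follows from the periodicity relation $P(u_1 n_L + u_0) = u_1 K_L + P(u_0)$ together with $P(r) = K_L$.

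To build $T$, I would run a greedy procedure that unions local-code supports one at a time. Initialize $T_0 = \emptyset$. At step $j$, as long as $\text{rank}(G|_{T_{j-1}}) < K$, the condition $\text{Rank}(G|_{\cup_i S_i}) = K$ from Definition \ref{defn:info_locality} forces the existence of some local code $\mathcal{C}_{i_j}$ with $S_{i_j} \not\subseteq T_{j-1}$ that strictly increases rank; set $T_j = T_{j-1} \cup S_{i_j}$ and $s_j = |S_{i_j}\setminus T_{j-1}|$. The URA property of $\mathcal{C}_{i_j}$ gives $\text{rank}(G|_{S_{i_j}\cap T_{j-1}}) = P(n_L - s_j)$; combining this with the subspace-sum dimension formula applied to the column spans $W_{T_{j-1}}$ and $W_{S_{i_j}}$, together with the inclusion $W_{S_{i_j}\cap T_{j-1}} \subseteq W_{T_{j-1}} \cap W_{S_{i_j}}$, produces the key per-step estimate
\begin{equation*}
\text{rank}(G|_{T_j}) - \text{rank}(G|_{T_{j-1}}) \ \le \ K_L - P(n_L - s_j) \ = \ Q(s_j).
\end{equation*}
Telescoping these per-step bounds and repeatedly applying the identity $P(s) + Q(s') \le P(s+s')$ from \eqref{eq:P_plus_Q} folds them into the clean global estimate $\text{rank}(G|_{T_j}) \le P(|T_j|)$ throughout the greedy.

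Because $|T_j|$ grows in chunks of size up to $n_L$, the greedy will generally overshoot the target $P^{(\text{inv})}(K) - 1$ on its last iteration; I would handle this by truncating the final step to a URA-compatible prefix of $t$ new coordinates from $S_{i_\ell}$, with $t$ chosen so $|T_\ell|$ lands exactly at $P^{(\text{inv})}(K) - 1$. A parallel subspace-sum computation bounds the partial rank increment by $P((n_L - s_\ell) + t) - P(n_L - s_\ell)$, and one more invocation of \eqref{eq:P_plus_Q} preserves the global estimate $\text{rank}(G|_{T_\ell}) \le P(|T_\ell|) = P(P^{(\text{inv})}(K) - 1) < K$, where the strict inequality is by the definition of $P^{(\text{inv})}$. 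The main technical obstacle is precisely this truncation step: the partial-addition rank estimate must interleave cleanly with the full-addition telescoping so that $\text{rank}(G|_T) \le P(|T|)$ continues to hold at the truncated endpoint, which reduces to a careful bookkeeping of \eqref{eq:P_plus_Q} applied within the final local code $\mathcal{C}_{i_\ell}$.
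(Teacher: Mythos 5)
The corollary follows in one line from Theorem \ref{thm:URA_bound}, which the paper has already proved: from $d_{\min} \le n - P^{(\text{inv})}(K)+1$ one gets $P^{(\text{inv})}(K) \le n - d_{\min}+1$, and since $P$ is nondecreasing with $P(P^{(\text{inv})}(K)) \ge K$, it follows that $K \le P(n-d_{\min}+1)$. What you have written is instead a re-derivation of the theorem itself, via a variant of Algorithm \ref{alg:URA_bound} in which the final local code is truncated to hit a prescribed support size $P^{(\text{inv})}(K)-1$ rather than being cut at the maximal-rank subset $S_{\text{end}}$. Your per-step estimate $\text{rank}(G|_{T_j}) - \text{rank}(G|_{T_{j-1}}) \le Q(s_j)$ via the subspace dimension formula and URA is correct, and so is the telescoping to $\text{rank}(G|_{T_j}) \le P(|T_j|)$. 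The truncated last step does close, but be aware that the increment $P\left((n_L-s_\ell)+t\right) - P(n_L-s_\ell)$ must first be dominated by $P(t)$ using the subadditivity \eqref{eq:sub_additive}, and only then does \eqref{eq:P_plus_Q} absorb the accumulated $Q(s_j)$ terms; invoking \eqref{eq:P_plus_Q} alone glosses over that first reduction. You should also dispose of the degenerate case $|T_{J-1}| \ge P^{(\text{inv})}(K)-1$, where $T_{J-1}$ already yields the bound without any truncation. Finally, your argument stops at the theorem's conclusion $d_{\min} \le n - P^{(\text{inv})}(K)+1$ and never performs the inversion to the corollary's actual claim $K \le P(n-d_{\min}+1)$; that final step, though elementary, is exactly what the corollary asserts and should be stated.
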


\vspace{0.1in}

We say that ${\cal C}$ is distance-optimal if $d_{\min} = n-P^{(\text{inv})}(K)+1$ and rate-optimal if $K =
P(n-d_{\min}+1)$.

The following lemma, which is the analog of the Fact $1$ of \cite{GopHuaSimYek}, for the case of vector codes,  is used in
the proof of Theorem \ref{thm:URA_bound}.

\vspace{0.1in}

\begin{lem} \label{lem:fact_dmin}
Given any set $T\subseteq [n]$ such that $\text{rank}\left(G|_T\right) < K$, we have
\begin{eqnarray} \label{eq:fact_dmin}
d_{\text{min}} \leq n - \mid T \mid
\end{eqnarray}
with equality iff $T \subseteq [n]$ is of largest size such that $\text{Rank}(G|_T) < K$.
\end{lem}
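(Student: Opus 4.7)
My plan is to prove this in two stages: first the inequality, then the iff characterization of the equality case. Both stages reduce to the standard correspondence between low-weight codewords and rank-deficient restrictions of the generator matrix, adapted to the vector-code setting by working with thick columns.

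For the inequality $d_{\min} \le n - |T|$, I would use the hypothesis $\text{rank}(G|_T) < K$ to produce a low-weight codeword. Since the $|T|\alpha$ thin columns of $G|_T$ fail to span $\mathbb{F}_q^K$, there exists a nonzero row vector $\mathbf{v}^t \in \mathbb{F}_q^K$ annihilating every column of $G|_T$. The codeword $\mathbf{c} = \mathbf{v}^t G$ is therefore the zero vector symbol on each thick-column index in $T$, so its Hamming weight (counted over vector symbols) is at most $n - |T|$. Since $\mathbf{v} \neq 0$ and the thin columns within each thick column are linearly independent (by the standing assumption on vector codes), $\mathbf{c}$ is nonzero, giving $d_{\min} \le n - |T|$.

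For the equality claim, I would prove both directions by contradiction. For the forward direction, suppose $T$ is of largest size with $\text{rank}(G|_T) < K$, and assume $d_{\min} < n - |T|$. Then there is a nonzero codeword $\mathbf{c}$ with at most $n - |T| - 1$ nonzero vector symbols; let $S$ be the complement of its support, so $|S| \ge |T| + 1$. Writing $\mathbf{c} = \mathbf{v}^t G$ with $\mathbf{v} \neq 0$, we have $\mathbf{v}^t G|_S = 0$, hence $\text{rank}(G|_S) < K$, contradicting the maximality of $|T|$. For the reverse direction, suppose $d_{\min} = n - |T|$ with $\text{rank}(G|_T) < K$, but that some $T'$ with $|T'| > |T|$ also satisfies $\text{rank}(G|_{T'}) < K$; applying the already-proven inequality to $T'$ yields $d_{\min} \le n - |T'| < n - |T|$, a contradiction.

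I do not anticipate a substantive obstacle here. The main points requiring care are (i) being explicit that $\text{rank}$ refers to the thin-column rank of the $K \times |T|\alpha$ submatrix $G|_T$, so that a left null-space element genuinely annihilates an entire thick column and contributes $\alpha$ zero scalars (hence a zero vector symbol) to the codeword, and (ii) invoking the previously-stated convention that the $\alpha$ thin columns within each thick column are linearly independent, which is what guarantees $\mathbf{v}^t G$ is nonzero as a vector-alphabet codeword whenever $\mathbf{v} \neq 0$.
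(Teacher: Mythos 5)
Your argument is correct, and it is the natural adaptation of the scalar argument that the paper points to (the paper itself gives no proof of Lemma~\ref{lem:fact_dmin}, citing it as the vector-code analog of Fact~1 in \cite{GopHuaSimYek}). Both directions of the iff and the inequality itself are handled cleanly via left null-space vectors of $G|_T$ and a maximality argument.

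One small imprecision worth fixing: the reason $\mathbf{c} = \mathbf{v}^t G \neq 0$ for $\mathbf{v}\neq 0$ is simply that $G$ has full row rank $K$ (which holds because $K$ is the scalar dimension of the code), not the standing assumption that the $\alpha$ thin columns within a thick column are linearly independent. Once the scalar word $\mathbf{v}^t G$ is nonzero, at least one scalar coordinate is nonzero and hence at least one vector symbol is nonzero; the within-thick-column independence plays no role in that step. It does correctly matter elsewhere in the framework (e.g., guaranteeing each $\mathbf{c}_i$ ranges over all of $\mathbb{F}_q^\alpha$), but not for the nonvanishing of $\mathbf{c}$ here.
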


\vspace{0.1in}

\begin{proof}[Proof of Theorem \ref{thm:URA_bound}]  The proof proceeds along the lines of the proof of Theorem
\ref{thm:scalar_info_locality}. We begin by applying Algorithm~\ref{alg:URA_bound} below, to construct a large set $T
\subseteq [n]$ such that $\text{rank}\left(G|_T\right) < K$. The subspaces $\{V_i\}$ appearing in the algorithm correspond to
the column-space of $G|_{S_i}$, where $S_i$ denotes the support of the local codes $\mathcal{C}_i$, i.e.,
\bean
V_i & = & \sum_{\ell \in S_i} W_{\ell},
\eean
where we recall $W_{\ell}$ to be the span of the $\alpha$ thin columns comprising the ${\ell}^{\text{th}}$ thick column of
$G$.

\begin{algorithm}
\caption{Used in the Proof of Theorem \ref{thm:URA_bound}}
\begin{algorithmic}[1]
\State Let  $T_0 = \{\ \}, \ \ j = 0$
\While {$1$}
    \State Pick $i \in \mathcal{L}$ such that $V_i \nsubseteq \sum_{\ell \in T_j}W_{\ell}$
 \If {$\text{Rank}\left(G|_{T_j \cup S_i} \right) < K $}
      \State $j = j + 1$
      \State $T_{j} = T_{j-1} \cup S_i $
    \ElsIf {$\text{Rank}\left(G|_{T_j \cup S_i} \right) = K $}
      \State  Pick any maximal subset $S_{\text{end}}$ of $S_i$ such that  $\text{Rank}\left(G|_{T_j \cup S_{\text{end}} } \right) < K$
      \State $\nu_{\text{end}} = K - \text{Rank}\left(G|_{T_j \cup S_{\text{end}} } \right)$
      \State $ j = j + 1$
      \State $T_{j} = T_{j-1} \cup S_{\text{end}}$
      \State Exit
    \EndIf
\EndWhile
\end{algorithmic} \label{alg:URA_bound}
\end{algorithm}
Let Algorithm \ref{alg:URA_bound} run to $J$ iterations.  Let $s_j, \nu_j$,  $1 \leq j \leq J$, denote the incremental rank and support size respectively, i.e.,
\bea
s_j & = & \mid T_j \mid \ - \ \mid T_{j-1} \mid, \\
\nu_j & = & \text{Rank}\left(  G|_{T_{j}} \right) \ - \ \text{Rank}\left(  G|_{T_{j-1}} \right).
\eea
Note further that $S_{\text{end}}$ is chosen in such a way there exists a choice of thick column in the last stage such that, adding this thick column to the support, will cause the accumulated rank to equal $K$.  Let the integer $\sigma$
represents the amount of overlap in support between the final local code and the prior $J-1$ local codes, i.e.,
\bea
\sigma & = & \mid T_{J-1} \cap S_{\text{end}}\mid .
\eea
Note that under the algorithm, it is possible that the final incremental support $s_J=0$. In addition, the sum $\sigma+s_J$
is upper bounded by $(r-1)$.  This last statement follows by first noting that the local codes are erasure optimal
with q-dim$(\mathcal{C}_i) = r$ and hence as a result if  $\sigma+s_J=r$, then this will result in rank $K$ after the last
step of the algorithm (but rank cannot reach $K$ in the algorithm). The cumulative rank added (assuming $S_{\text{end}}$
and one more thick column) is then upper bounded by
\bea
K &= &  \sum_{j=1}^{J-1} \nu_j + (\nu_J + \nu_{\text{end}}) \\
& \leq & \sum_{j=1}^{J-1} Q(s_j) + \left(P(\sigma+s_J+1)-P(\sigma)\right)  \label{eq:chain_temp_start}\\
& \stackrel{(a)}{\leq} & \sum_{j=1}^{J-1} Q(s_j) + P(s_J+1) \\
& = & \sum_{j=1}^{J-2} Q(s_j) + (Q(s_{J-1})+P(s_J+1)) \\
& \stackrel{(b)}{\leq} & \sum_{j=1}^{J-2} Q(s_j) + P(s_{J-1}+s_J+1))\\
& \leq & P\left(1+\sum_{j=1}^Js_i\right) \label{eq:chain_temp_end},
\eea
where $(a)$ and $(b)$ respectively follow from \eqref{eq:sub_additive} and \eqref{eq:P_plus_Q}.
Hence it follows that
\bea \label{eq:dmin_in_proof_tmp1}
\ \mid T_J\mid +1  \ = \ \sum_{j=1}^Js_j + 1 & \geq & P^{(\text{inv})}(K),
\eea
so that
\bea \label{eq:dmin_in_proof_tmp2}
\mid T_J \mid  & \geq & P^{(\text{inv})}(K)-1 ,
\eea
which from Lemma \ref{lem:fact_dmin}, leads to
\bea
d_{\min} & \leq & n-P^{(\text{inv})}(K)+1.    \label{eq:dmin_in_proof}
\eea
When $K_L \mid K$, this simplifies to
\bean
d_{\min} & \leq & n- P^{(\text{inv})}(K)+1 \\
& \stackrel{(a)}{=} & n-\left(\frac{K}{K_L}-1\right)n_L-r + 1 \\
& = & n-\left(\frac{K}{K_L}-1\right)(r+\delta-1) -r + 1\\
& = & n-\left(\frac{K}{K_L}\right)r+1  -\left(\frac{K}{K_L}-1\right)(\delta-1),
\eean
where $(a)$ follows from \eqref{eq:locality_Pinv_temp}.
\end{proof}

\subsection{Structure of Optimal Codes} \label{sec:URA_rank_bound}

\vspace{0.1in}

\begin{defn}
We will say that the leading-sum function $P(\cdot)$ is strictly sub-additive in the range $[n_L]$, if for any $s \geq 1, s'
\geq 1$ such that $s + s' \leq n_L$, we have $P(s + s')  <  P(s) + P(s')$.
\end{defn}

\vspace*{0.1in}
It can be easily verified that a necessary and sufficient condition for strict sub-additivity is that $a_1>a_2$.
\vspace{0.1in}

\begin{thm} \label{thm:URA_rank_bound}
Let ${\cal C}$ belong to class ${\cal U}$ and also assume that $\mathcal{C}$ is both distance and rate optimal, i.e.,
$d_{\min}= n - P^{\text{inv}}(K) + 1$ and $K = P(n - d_{\min} + 1)$. Also, let $u_1 = \lceil \frac{K}{K_L} \rceil - 1$.
Following observations could be made with regards to the structure of the local codes:
\begin{enumerate}[(a)]
\item
If the leading-sum function $P$ is  strictly sub-additive in the range $[n_L]$, then
\begin{enumerate}[(i)]
\item the local codes $\{ \mathcal{C}_i \}_{i \in \mathcal{L}}$ must all have disjoint supports, i.e., $S_i \cap S_j = \phi,
\ \forall i, j \in \mathcal{L}$ and
\item for distinct $i_1, i_2, \cdots i_{u_1}, i_{u_1+1} \in \mathcal{L}$, it must be that
\begin{eqnarray}
 V_{i_{\ell}} \cap \left( \displaystyle\sum_{\substack{j=1 \\ j \neq \ell}}^{u_1}V_{i_j} \right) & = & \mathbf{0}, \ \forall \ 1 \leq \ell \leq u_1, \text{and}\label{eq:rank_disjoint_1}
\end{eqnarray}
\begin{equation}
   V_{i_{\ell}} \nsubseteq \left( \displaystyle\sum_{\substack{j=1 \\ j \neq \ell}}^{u_1+1}V_{i_j} \right), \ \forall \ 1 \leq \ell \leq u_1 + 1. \label{eq:rank_disjoint_2}
\end{equation}
\end{enumerate}
\item If the scalar dimension $K$ is a multiple of $K_L$, i.e. $K_L|K$, then once again
\begin{enumerate}[(i)]
\item it must be that the local codes $\{ \mathcal{C}_i\}_{i \in \mathcal{L}}$ must all have disjoint supports.
\item Furthermore, for distinct $i_1, i_2, \cdots i_{u_1}, i_{u_1+1} \in \mathcal{L}$, it must be that
\begin{eqnarray}
 \ V_{i_{\ell}} \cap \left(\displaystyle \sum_{\substack{j=1 \\ j \neq \ell}}^{u_1+1}V_{i_j} \right)  = \mathbf{0}, \ \forall \ 1 \leq \ell \leq u_1+1.  \label{eq:rank_disjoint_3}
\end{eqnarray}
\end{enumerate}
\end{enumerate}
\end{thm}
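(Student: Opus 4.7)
The plan is to exploit distance- and rate-optimality by tracing through the tightness of the chain of inequalities established in the proof of Theorem \ref{thm:URA_bound}. Rate-optimality $K = P(n - d_{\min} + 1) = P(P^{(\text{inv})}(K))$ combined with distance-optimality $d_{\min} = n - P^{(\text{inv})}(K) + 1$ and Lemma \ref{lem:fact_dmin} forces any execution of Algorithm \ref{alg:URA_bound} to produce $|T_J| = P^{(\text{inv})}(K) - 1$ exactly; consequently, every inequality in the chain from \eqref{eq:chain_temp_start} to \eqref{eq:chain_temp_end} must hold as equality, for every ordering in which local codes are picked. Two flavors of tightness conditions emerge: (i) $\nu_j = Q(s_j)$ at each iteration (tightness of the URA rank bound, which encodes that no ``unnecessary'' overlap exists between the newly-added columns and the span of prior columns); and (ii) each use of sub-additivity, namely $P(\sigma + s_J + 1) \leq P(\sigma) + P(s_J + 1)$ and $P(s) + Q(s') \leq P(s + s')$, is tight.

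For part (a), strict sub-additivity of $P$ on $[n_L]$ (equivalent to $a_1 > a_2$) is the key. Under the algorithmic constraint $\sigma + s_J + 1 \leq n_L$, strict sub-additivity forces the equality $P(\sigma + s_J + 1) = P(\sigma) + P(s_J + 1)$ to imply $\sigma = 0$; an analogous analysis of each intermediate $Q(s_j) + P(\cdot) = P(\cdot)$ step, together with the non-increasing nature of the rank profile, forces $s_j = n_L$ at every interior iteration. Running Algorithm \ref{alg:URA_bound} with any prescribed pair of local codes $\mathcal{C}_i, \mathcal{C}_j$ placed in the first two steps then yields $|S_i \cap S_j| = 0$, establishing part (a)(i). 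With disjointness in hand, tightness $\nu_j = Q(n_L) = K_L$ translates to $V_{i_j} \cap V_{T_{j-1}} = \{0\}$; permuting the order of the first $u_1$ picked codes upgrades this to the symmetric direct-sum conclusion \eqref{eq:rank_disjoint_1}. For \eqref{eq:rank_disjoint_2}, note that the algorithm does not terminate after $u_1$ iterations since $u_1 K_L < K$; were \eqref{eq:rank_disjoint_2} to fail for some $u_1 + 1$ codes, placing the redundant code at an appropriate position in the ordering would cause the accumulated rank to reach $K$ prematurely, contradicting algorithmic flow.

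For part (b), where $K_L \mid K$, one can bypass strict sub-additivity with a direct rank-counting argument. Here $P^{(\text{inv})}(K) = (u_1+1) n_L - (\delta - 1)$, and any $u_1 + 1$ local codes must span $\mathbb{F}_q^K$ since $(u_1+1) K_L = K$. If two local codes $\mathcal{C}_{i_1}, \mathcal{C}_{i_2}$ shared a thick column, then $\dim(V_{i_1} + V_{i_2}) \leq 2 K_L - \alpha$, and appending $u_1 - 1$ further local codes (with careful bookkeeping if additional overlaps occur) would furnish a set $T$ of size strictly exceeding $P^{(\text{inv})}(K) - 1$ and rank strictly less than $K$, contradicting distance-optimality via Lemma \ref{lem:fact_dmin}. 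Having established pairwise disjointness, the identity $\sum_{\ell = 1}^{u_1+1} \dim V_{i_\ell} = (u_1+1) K_L = K = \dim(\sum V_{i_\ell})$ forces the sum to be direct, yielding \eqref{eq:rank_disjoint_3}.

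The principal obstacle will be the careful bookkeeping of tightness propagation in part (a): every possible ordering of local codes in Algorithm \ref{alg:URA_bound} yields its own set of constraints, and these must be combined via symmetry arguments to pin down the full subspace structure; the degenerate cases where $s_J = 0$ need separate handling. A secondary technical difficulty in part (b) is the explicit construction of the obstructing set $T$ when the locality index set $\mathcal{L}$ has minimal cardinality $u_1 + 1$, where one may have to work with carefully chosen proper subsets of local-code supports rather than with whole local codes.
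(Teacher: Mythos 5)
Your overall strategy---exploit rate- and distance-optimality to force equality throughout the chain of Theorem~\ref{thm:URA_bound}, then read off structural consequences---matches the paper, and part (a) is largely the same argument. Two details are glossed over there, however. First, you claim disjointness by ``placing any prescribed pair $\mathcal{C}_i,\mathcal{C}_j$ in the first two steps,'' but Algorithm~\ref{alg:URA_bound} can only pick $\mathcal{C}_j$ in step $2$ if $V_j \nsubseteq V_i$; the case $V_{i}=V_{j}$ (with possibly $S_i \neq S_j$) has to be handled separately, exactly as in part (b) of Theorem~\ref{thm:scalar_info_locality_equality}, which the paper invokes explicitly. Second, your justification of \eqref{eq:rank_disjoint_2} (``would cause the accumulated rank to reach $K$ prematurely'') is the wrong mechanism: the clean argument is that, since the algorithm terminates with $\dim\left(\sum_{j=1}^{u_1+1}V_{i_j}\right)=K$, the containment $V_{i_\ell}\subseteq\sum_{j\neq\ell}V_{i_j}$ would force $K=\dim\left(\sum_{j\neq\ell}V_{i_j}\right)\leq u_1 K_L<K$, a contradiction, and then the extension to arbitrary ordered index sets mirrors part (c) of Theorem~\ref{thm:scalar_info_locality_equality}.

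Part (b) as written has a genuine gap. The assertion ``any $u_1+1$ local codes must span $\mathbb{F}_q^K$ since $(u_1+1)K_L=K$'' is circular: $(u_1+1)K_L=K$ only equates a sum of dimensions with $K$, and concluding that $\sum_{\ell}V_{i_\ell}=\mathbb{F}_q^K$ requires exactly the directness/disjointness you are trying to establish. Your subsequent contradiction argument (share a thick column, append $u_1-1$ more codes, get $|T|>P^{(\text{inv})}(K)-1$ with rank $<K$) works when the appended codes happen to be disjoint from the first two and from each other, but if overlaps pile up the set $T$ can shrink to size $\leq P^{(\text{inv})}(K)-1$ and no contradiction emerges; the ``careful bookkeeping'' caveat you append does not resolve this, and fleshing it out would in effect amount to re-deriving the $P$-$Q$ tightness analysis. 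The paper does not in fact bypass the chain in part (b): it stays within it, first deducing $J\geq u_1+1$ from $\nu_j\leq K_L$, then deducing $J\leq u_1+1$ from the observation that $P(s)+Q(s')=P(s+s')$ with $\delta\leq s'\leq n_L$ can hold only when $s+s'$ passes a multiple of $n_L$, which pins $s_j=n_L$ for $j<J$, $s_J=r-1$, and $\sigma=0$; disjointness and \eqref{eq:rank_disjoint_3} then follow. You should re-do part (b) along these lines rather than through the direct count.
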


\vspace{0.1in}

\begin{proof}
We are given that
\begin{eqnarray}
 d_{\min} & = & n - P^{\text{inv}}(K) + 1 \label{eq:struc_theorem_vector_tmp1}\\
 K & = & P(n - d_{\min} + 1) \label{eq:struc_theorem_vector_tmp2}.
\end{eqnarray}
Referring back to the proof of Theorem \ref{thm:URA_bound}, we then see that \eqref{eq:dmin_in_proof} is an equality and
hence so must be \eqref{eq:dmin_in_proof_tmp1} and \eqref{eq:dmin_in_proof_tmp2}. As a result, we get that
$\sum_{j = I}^{J}s_j + 1= P^{\text{inv}}(K)$ which implies that
\begin{eqnarray}
P\left(\sum_{j = I}^{J}s_j + 1\right) & = & P\left(P^{\text{inv}}(K)\right) \nonumber \\
& \stackrel{(a)}{=} & P(n - d_{\min} + 1) \nonumber \\
& \stackrel{(b)}{=} & K,
\end{eqnarray}
where $(a)$ and $(b)$ respectively follow from \eqref{eq:struc_theorem_vector_tmp1} and \eqref{eq:struc_theorem_vector_tmp2}.
This means that the chain of inequalities \eqref{eq:chain_temp_start}-\eqref{eq:chain_temp_end} have equality at every step.
The chain is reproduced below for convenience of further analysis:
\bea
K &= &  \sum_{j=1}^{J-1} \nu_j + (\nu_J + \nu_{\text{end}}) \label{eq:chain_temp_start_new}\\
& \stackrel{(i)}{\leq} & \sum_{j=1}^{J-1} Q(s_j) + P(\sigma+s_J+1) - P(\sigma) \\
& \stackrel{(ii)}{\leq} & \sum_{j=1}^{J-1} Q(s_j) + P(s_J+1) \\
& = & \sum_{j=1}^{J-2} Q(s_j) + \left( Q(s_{J-1})+P(s_J+1) \right)\\
&\stackrel{(iii)}{\leq}& \sum_{j=1}^{J-2} Q(s_j) + P(s_{J-1}+s_J+1) \\
&\stackrel{(iv)}{\leq} & P\left(\sum_{j=1}^Js_j+1\right) \\
& = & K.\label{eq:chain_temp_end_new}
\eea
Also let $u_0$ to be such that $P^{\text{inv}}(K) = \sum_{j=1}^Js_j+1 = u_1n_L + u_0$, where $u_1 = \lceil \frac{K}{K_L}
\rceil - 1$. Then note that, since $P^{\text{inv}}(P(\sum_{j=1}^Js_j+1)) = \sum_{j=1}^Js_j+1$, we get that $u_0$ must be in
the range $1 \leq u_0 \leq r$. We now analyze the conditions for various equalities in the above chain for the
two cases: $(a)$ $P$ is strictly sub-additive in the range $[n_L]$ and $(b)$ $K_L|K$.

\begin{enumerate}[(a)]
\item
Assume that $P$ is strictly sub-additive in the range $[n_L]$. Equality in $(ii)$ coupled with the strict sub-additivity of
$P$ implies that $\sigma=0$, implying that the last code added was support disjoint from the rest.
Towards analyzing the equality conditions in $(iii)$ and $(iv)$, we first note that for $s\geq 0$, $\delta \leq s' \leq n_L$,
the equality
\bean
P(s)+Q(s') & = & P(s+s')
\eean
can happen for a strictly sub-additive $P$ iff either $s'=n_L$ or else, $s+s'$ is a multiple of $n_L$.
A little thought will now show that equality can hold in $(iii),(iv)$ iff either $s_j=n_L, 1 \leq j \leq J-1$ or
if there exists $1 \leq \ell \leq J-1$ such that
\bean
s_{\ell}+1+s_{J} & = & n_L \\
s_j & = & n_L, \ 1 \leq j \leq (J-1), \ j \neq \ell.
\eean
In the latter case, this would imply that $\sum_{j=1}^Js_j+1$ is a multiple of $n_L$, which we realize as a contradiction by
noting from the above discussion that $1 \leq u_0 \leq r$. Thus, we get that $s_j=n_L, 1 \leq j \leq J-1$
and $s_J+1=u_0$. It then follows from this that $J=u_1+1$ and the first $J-1$ ($=u_1$) local codes are support
disjoint. From our earlier observation, even the last local code was support disjoint, hence it follows that the $J$ local
codes encountered in the Algorithm~\ref{alg:URA_bound} are support disjoint. Now, the fact that the local codes not
encountered during Algorithm~\ref{alg:URA_bound} also have disjoint supports can be proved in a manner similar to the
proof of Part $(b)$ of Theorem \ref{thm:scalar_info_locality_equality}.

We proceed next to prove the claims in \eqref{eq:rank_disjoint_1} and \eqref{eq:rank_disjoint_2}. Towards this, first note
that equality in $(i)$ in the above chain, coupled with the previous observation that $s_j=n_L, 1 \leq j \leq J-1$
and $s_J+1=u_0$ implies that
\bean
 \nu_j &=& Q(n_L) \ = \ K_L, \ 1 \leq j \leq J-1, \\
 \nu_J + \nu_{\text{end}}&=&P(u_0).
 \eean
Thus if $i_1, i_2, \cdots, i_{u_1}, i_{u_1+1} \in \mathcal{L}$ are such that $\mathcal{C}_{i_j}$
was picked in the $j^{th}$ step, $1 \leq j \leq J = u_1 + 1$,  of Algorithm \ref{alg:URA_bound}, then it follows that
 \bean
 \ V_{i_{\ell}} \cap \left( \displaystyle\sum_{\substack{j=1 \\ j \neq \ell}}^{u_1}V_{i_j} \right)  = \mathbf{0}, \ \forall \ 1 \leq \ell \leq u_1,
 \eean
and that,
\bean
 V_{i_{\ell}} \nsubseteq \left( \displaystyle\sum_{\substack{j=1 \\ j \neq \ell}}^{u_1+1}V_{i_j} \right), \ \forall \ 1 \leq \ell \leq u_1 + 1.
\eean
This is because we have $\text{dim}(\sum_{{j=1 }}^{u_1+1}V_{i_j})=K > u_1K_L$ and $\text{dim}(V_{i_j})=K_L, \ 1 \leq j \leq
u_1+1$. The fact that above observations hold good for any set of ordered indices $i_1, i_2, \cdots, i_{u_1}, i_{u_1+1}$
belonging to $\mathcal{L}$ can be proved in a manner similar to the proof of Part $(c)$ of Theorem
\ref{thm:scalar_info_locality_equality}.

\item
We next consider the case when $K_L|K$ and analyze the conditions for various equalities in the chain
\eqref{eq:chain_temp_start_new}-\eqref{eq:chain_temp_end_new}. First of all, note that since $K_L|K$,  $P^{\text{inv}}(K) =
u_1n_L + r$ (i.e., $u_0 = r$). Next, we note that since $\nu_j \leq K_L, 1 \leq j \leq J - 1$, it follows that
\begin{equation} \label{eq:J_lower_bound}
 J\geq  \frac{K}{K_L}  = u_1+1.
\end{equation}
We will next show that the number $J$ of iterations in the algorithm equals $u_1+1$. Towards this, consider the inequalities
$(iii)$ and $(iv)$ in the above chain. It can be shown that for any $s = q_1n_L + q_0, \ q_1 \geq 0, \ 1 \leq q_0 \leq
n_L $ and $s'$ such that $\delta \leq s' \leq n_L$, the equality
\bean
P(s)+Q(s') & = & P(s+s')
\eean
can happen only if $s+s' \geq (q_1+1)n_L$. It then follows that equalities in $(iii)$ and $(iv)$ can happen only if
$\sum_{j=1}^Js_j+1 = u_1n_L+r \geq (J-1)n_L$ which gives us that $J \leq u_1 +1$.  When coupled with
\eqref{eq:J_lower_bound}, we obtain that $J=u_1+1$ and hence
\bean
\sum_{j=1}^{J-1}s_j + (s_J+1) = (J-1)n_L +r.
\eean
Further, noting that $1\leq s_j \leq n_L, \ 1 \leq j \leq J-1$ and $1 \leq s_J+1 \leq r$, it  follows that
\begin{eqnarray}
 s_j = \begin{cases} n_L, & \mbox{if } 1 \leq j \leq J-1, \\ r-1, & \mbox{if } j=J. \end{cases}
\end{eqnarray}
Also, recall that $\sigma + s_J \leq r-1$, and thus we get that $\sigma=0$. Hence it follows that the $J$
local codes encountered in the Algorithm~\ref{alg:URA_bound} are support disjoint. Moreover,  equality in $(i)$ in the chain,
along with the above observation regarding disjointness also implies that
\bean
\nu_j &=& Q(n_L) \ = \ K_L ; \ 1 \leq j \leq J-1, \\
\nu_J + \nu_{\text{end}}&=&P(r)=K_L.
\eean
We note that this implies that if $i_1, i_2, \cdots, i_{u_1}, i_{u_1+1} \in \mathcal{L}$ are such that $\mathcal{C}_{i_j}$
was picked in the $j^{th}$ step of Algorithm \ref{alg:URA_bound}, then it follows that
\bean
\ V_{i_{\ell}} \cap \left( \displaystyle\sum_{\substack{j=1 \\ j \neq \ell}}^{u_1+1}V_{i_j} \right)  = \mathbf{0}, \ \forall
\ 1 \leq \ell \leq u_1+1,
\eean
The rest of the proof follows along the same lines as the proof of Part $(a)$.
\end{enumerate}

\end{proof}

\section{MSR-Local Codes}\label{sec:msr_local_codes}

We show in this and the next section, how it is possible to construct vector codes with locality, such that the constituent
local codes are regenerating codes, thereby simplifying node repair in two respects.    Node repair can be carried out on
average, by accessing a small number of nodes while downloading an amount of data that is not much more than what the data
node stores. The present section will focus on the construction of optimal codes with information locality in which the local
codes are MSR codes. A more formal definition appears below.

\vspace*{0.2in}

\begin{defn}
Let $\mathcal{C}$ be an $[n,K,d_{\min},\alpha]$ vector code over $\mathbb{F}_q$ possessing $(r,\delta)$ information locality.   Let $G$ be the generator matrix for the code.  Then $\mathcal{C}$ is said to be an MSR-local code with $(r,\delta)$ information locality, provided
\bit
\item the code $\mathcal{C}$ can be punctured so as to yield $m$ local codes $\mathcal{C}_i$ in which the $i^\text{th}$ local code is an $(n_L,r,d)$-MSR code with $n_L=(r+\delta-1)$,
\item and if the $i^\text{th}$ local code has support $S_i$, and $S = \cup_{i=1}^mS_i$, then
\bean
\text{Rank}\left( G|_{S} \right) & = & K.
\eean
\eit
If in addition, $S=[n]$, we will say that $\mathcal{C}$ is an MSR-local code with $(r,\delta)$ all-symbol locality.
\end{defn}

\vspace*{0.2in}

For convenience, we will simply write MSR-local code in place of MSR-local code with $(r,\delta)$ information locality and all-symbol MSR-local code in place of MSR-local code with $(r,\delta)$ all-symbol locality.

Four constructions of MSR-local codes are presented in this section of which the first two are explicit.  The third construction will prove the existence, over large enough fields, of MSR-local codes for a wider range of code parameters than is possible under the two explicit constructions. The fourth construction will establish the existence of all-symbol MSR-local codes whenever $n_L \mid n$.
Throughout this section we will assume that $\delta \geq 3$ as it turns out that $\delta=2$ result in codes where the local codes have trivial regeneration ($\beta=\alpha$).

 \subsection{MSR Codes and Uniform Rank Accumulation}

Let $\mathcal{B}$ be an $((n_L,r,d), (\alpha,\beta),K_L)$ MSR code.  It can be seen either from the rank accumulation profile of MSR codes presented in \cite{ShaRasKumRam_rbt} or from the fact that MSR codes are vector MDS codes that $\mathcal{B}$ has uniform rank accumulation. The rank accumulation profile $\{a_i\}$ is given by
\bean
a_i & = & \left\{ \begin{array}{ll}
\alpha, & 1 \leq i \leq r, \\
 0,  & r+1 \leq i \leq n_L . \end{array} \right.
\eean
It follows that
\bean
K_L & = & \sum_{i=1}^{n_L} a_i \ = \ r \alpha.
\eean
Next, let
\bean
K & = & v_1K_L+v_0 \ = \ v_1 (r\alpha) + v_0 , \  \ 1 \leq v_0 \leq K_L.
\eean
Then we have that
\bean
P^{(\text{inv})}(K) & = & v_1n_L + P^{(\text{inv})}(v_0) \\
& = & v_1n_L + \left \lceil \frac{v_0}{\alpha} \right \rceil \\
& = & v_1 (\delta-1) + v_1r + \left \lceil \frac{v_0}{\alpha} \right \rceil \\
& = & v_1(\delta-1) + \left \lceil \frac{K}{\alpha} \right \rceil \\
& = & \left( \left \lceil \frac{K}{r\alpha} \right \rceil -1 \right) (\delta-1) + \left \lceil \frac{K}{\alpha} \right \rceil .
\eean
It follows that for codes with exact $(r,\delta)$-MSR Locality, we have that
\bea \nonumber
 d_{\min} & \leq & n+1-P^{(\text{inv})}(K) \\ \label{eq:URA_MSR_bound}
 & = & \left( n - \left\lceil \frac{K}{\alpha}\right\rceil + 1\right)  -\left( \left\lceil\frac{K}{r\alpha}\right\rceil -1 \right)(\delta-1).
\eea

\vspace*{0.2in}

\begin{note}
Assuming that it is possible to construct codes that satisfy the bound on $d_{\min}$ in \eqref{eq:URA_MSR_bound} for any given value of $K$, we see that the largest scalar dimension for a given value of $d_{\min}$ results when $\alpha$ divides $K$. All MSR-local  codes presented in this section achieve the bound on $d_{\min}$ of \eqref{eq:URA_MSR_bound} and have $\alpha \mid K$ and hence are rate optimal.
\end{note}

 \vspace*{0.1in}

\subsection{Sum-Parity MSR-Local Codes}

\begin{constr} \label{constr:sum_msr}
Let $\mathcal{C}_0$ be an $((n_L+\Delta, r, d), (\alpha,\beta))$ MSR code with $n_L=(r+\delta-1)$ such that $d \leq r+\delta-2$.  Let $G_0 =
[G_L \mid Q_{\Delta}]$ be a generator matrix of $\mathcal{C}_0$, where $G_L$ and $Q_{\Delta}$ are matrices of size
$(r\alpha \times n_L\alpha)$ and $(r\alpha \times \Delta\alpha)$ respectively.   By Lemma
\ref{lem:punct_regen}, we know that the matrix $G_L$ generates an $((n_L , r, d), (\alpha, \beta))$ MSR code obtained by puncturing $\mathcal{C}_0$ in the symbols associated with the matrix $Q_{\Delta}$.    Next
consider the code $\mathcal{C}$ with generator matrix $G$ given by
\begin{equation}  \label{eq:generator_matrix_MSR_sum_parity}
G = \left [ \begin{array}{ccc|c} G_L &&& Q_\Delta \\ & \ddots && \vdots \\ &&G_L& Q_\Delta \end{array} \right],
\end{equation}
in which both matrices $G_L$ and $Q_{\Delta} $ appear $m \geq 1$ times.
\end{constr}

The theorem below identifies the parameters of the code so constructed and proves the construction to yield MSR-local codes with minimum distance $d_{\min}$ achieving the bound given in \eqref{eq:URA_MSR_bound} whenever $\delta \geq \Delta$.

\vspace{0.1in}

\begin{thm}\label{thm:sum_msr_optimality}
Consider the code $\mathcal{C}$ constructed in Construction~\ref{constr:sum_msr} in which the parameters $\delta,\Delta$ are chosen such that $\delta \geq \Delta $. Then the code $\mathcal{C}$ is an MSR-local code with $(r,\delta)$ information locality, and has
\ben[(a)]
\item length $n = mn_L  + \Delta$ and vector-size parameter $\alpha$,
\item $m$ support-disjoint local codes each of which is MSR with parameters $((n_L , r, d), (\alpha, \beta))$ and possessing generator matrix $G_L$,
\item scalar dimension $K=mr\alpha$ and
\item minimum distance $d_{\min}$ satisfying the $K$-bound \eqref{eq:URA_MSR_bound},
\begin{eqnarray} \label{eq:sum_msr_thm}
 d_{\text{min}} & = & n -  \frac{K}{\alpha} + 1 - \left( \frac{K}{r\alpha}-
1\right)(\delta - 1) \\
& = & \delta+\Delta.
\end{eqnarray}
\een
\end{thm}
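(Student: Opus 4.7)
Parts $(a)$, $(b)$, $(c)$ are essentially structural bookkeeping. The block-diagonal shape of $G$ in \eqref{eq:generator_matrix_MSR_sum_parity} gives $n = mn_L+\Delta$ thick columns and scalar dimension at most $mr\alpha$ (since $G$ has $mr\alpha$ rows); that this is attained, and that the $i$-th ``block-row'' of $G$ generates the $i$-th local code, follows once we observe that $G_L$ alone is full rank, being the generator of a punctured MSR code (Lemma~\ref{lem:punct_regen}, with $n_L \geq d$ guaranteed by $d \leq r+\delta-2 = n_L-1$). The same lemma identifies each of the $m$ local codes carved out by the supports of the copies of $G_L$ as an $((n_L,r,d),(\alpha,\beta))$ MSR code, and these supports are disjoint by construction.

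For part $(d)$, the plan is to sandwich $d_{\min}$ between $\delta+\Delta$ from above and from below. The upper bound is immediate from Theorem~\ref{thm:URA_bound}: the local codes are MSR, hence URA with profile $a_i=\alpha$ for $1\leq i\leq r$ and $a_i=0$ otherwise, so the code lies in class $\mathcal{U}$. Since $K=mr\alpha$ is divisible by $K_L=r\alpha$, the specialization \eqref{eq:dmin_Pinv_bound_BdivK} (equivalently the computation preceding \eqref{eq:URA_MSR_bound}) gives
\begin{equation*}
d_{\min}\ \leq\ n-\tfrac{K}{\alpha}+1-\bigl(\tfrac{K}{r\alpha}-1\bigr)(\delta-1)\ =\ \delta+\Delta,
\end{equation*}
where the final arithmetic uses $n=m(r+\delta-1)+\Delta$, $K/\alpha=mr$, $K/(r\alpha)=m$.

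The lower bound is the substantive step. Every codeword of $\mathcal{C}$ has the form
\begin{equation*}
\mathbf{c} \ =\ \bigl[\,\mathbf{m}_1^tG_L\ \big|\ \cdots\ \big|\ \mathbf{m}_m^tG_L\ \big|\ \bigl(\textstyle\sum_{i=1}^m\mathbf{m}_i\bigr)^tQ_\Delta\,\bigr]
\end{equation*}
for some $\mathbf{m}_1,\dots,\mathbf{m}_m\in\mathbb{F}_q^{r\alpha}$, and I split on the number of nonzero $\mathbf{m}_i$'s. If exactly one $\mathbf{m}_i$ is nonzero, then $\mathbf{c}$ restricted to the $i$-th local block together with the trailing $\Delta$ thick columns is exactly a nonzero codeword of the parent MSR code $\mathcal{C}_0=[G_L\mid Q_\Delta]$; since $\mathcal{C}_0$ is vector MDS with block length $n_L+\Delta$ and quasi-dimension $r$, its weight is at least $(n_L+\Delta)-r+1=\delta+\Delta$. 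If two or more $\mathbf{m}_i$'s are nonzero, then each contributes weight at least $\delta$ (the minimum distance of the local MSR code whose generator is $G_L$, which has block length $n_L$ and quasi-dimension $r$) in support-disjoint coordinates, so the total weight is at least $2\delta\geq\delta+\Delta$, where the last inequality uses the hypothesis $\delta\geq\Delta$. This is the only place the hypothesis $\delta\geq\Delta$ enters, and it is also the only part of the argument that is not pure bookkeeping; the case split itself is routine once one notices that the ``tail'' block $Q_\Delta$ is driven by the sum of the $\mathbf{m}_i$'s and therefore cannot be exploited against a single-block codeword.
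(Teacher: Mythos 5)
Your proposal is correct and follows essentially the same route as the paper's own proof: the upper bound comes from the $K$-bound (Theorem \ref{thm:URA_bound} specialized via \eqref{eq:URA_MSR_bound}), and the lower bound comes from the same case split — weight $\geq 2\delta \geq \delta+\Delta$ when two or more local blocks are nonzero (this is where $\delta\geq\Delta$ is used), and weight $\geq \delta+\Delta$ by reduction to the parent MSR code $\mathcal{C}_0 = [G_L \mid Q_\Delta]$ when exactly one block is nonzero. Your writeup is slightly more explicit in making the codeword parametrization $[\mathbf{m}_1^t G_L\mid\cdots\mid \mathbf{m}_m^tG_L\mid(\sum_i\mathbf{m}_i)^tQ_\Delta]$ and the appeal to Lemma~\ref{lem:punct_regen} (with $n_L > d$) overt, but the substance is identical.
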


\begin{proof} (a),(b),(c) are evident from the construction.    To prove (d), we will first show the equivalence between the two expressions for $d_{\min}$ provided.   Since $K=mr\alpha$ and $n=m(r+\delta-1)+\Delta$, we have that
\begin{eqnarray} \label{eq:sum_msr_proof_temp1}
 n -  \frac{K}{\alpha} + 1 - \left( \frac{K}{r\alpha}-1\right)(\delta - 1) & = &  n-mr+1 - (m-1)(\delta-1), \\
&= & \Delta+\delta.
\end{eqnarray}
Thus, it suffices to show that any non-zero
codeword $\mathbf{c}$ has Hamming weight, $\text{wt}(\mathbf{c}) \geq \delta + \Delta$. First of all, note that if
$\mathbf{c}$ has non-zero components belonging to two or more local codes, then clearly $\text{wt}(\mathbf{c}) \geq 2\delta
\geq \delta + \Delta$, since all local codes themselves have minimum distance $\delta$ and $\delta \geq \Delta$ by hypothesis. Next, consider the complementary case
where the non-zero components of $\mathbf{c}$ are restricted to one of the local codes and the global parities. By
inspecting the generator matrix $G$ given in \eqref{eq:generator_matrix_MSR_sum_parity}, it can be seen that when the
all-zero code symbols corresponding to remaining $(m-1)$ local codes are deleted from each codeword, the resultant punctured codeword lies in the row-space of $G_0 = [G_L \mid Q_{\Delta}]$.
The proof now follows by noting that $G_0$ generates an MSR code of minimum distance $\delta + \Delta$.
\end{proof}

\subsection{Pyramid-Like MSR-Local Codes} \label{sec:pyramid_msr}

The construction below mimics the construction of pyramid codes in \cite{HuaCheLi}, with the difference that we are now dealing with vector symbols in place of scalars and local MSR codes in place of local MDS codes.

\vspace{0.1in}

\begin{constr} \label{constr:pyramid_msr}
Let $\mathcal{C'}$ be an  $((n' = m r + \delta-1 + \Delta, k' = m r ,d), (\alpha, \beta))$ exact repair MSR code  such that
$d \leq n'-\Delta-1 = mr + \delta-2$. Let the (systematic) generator matrix $G'$ of $\mathcal{C'}$ be given by
\begin{eqnarray}
G' & = & \left [ \begin{array}{c|c|c} I_{m r\alpha} & Q & Q' \end{array} \right],
\end{eqnarray}
where $I_{m r\alpha}$ denotes an identity matrix of size $m r\alpha$ and the matrices $Q, Q'$ are respectively of size
$(m r \alpha \times (\delta-1)\alpha)$ and $(m r \alpha \times \Delta\alpha )$. From Lemma \ref{lem:punct_regen},
it follows that the ``punctured'' generator matrix $G'' \triangleq [I_{m r\alpha } \mid Q ]$ generates an $((n' - \Delta, k',d), (\alpha, \beta))$
MSR code; let us call it $\mathcal{C}''$.  Let the matrix $G''$ be represented in block-matrix form as shown below:
\begin{eqnarray}
G'' \ = \  [I_{m r\alpha } \mid Q ] & = & \left [ \begin{array}{cccc} I_{r\alpha} & & & Q_1  \\ &\ddots & & \vdots  \\ &
& I_{r\alpha} & Q_{m}  \end{array} \right],
\end{eqnarray}
where
$Q_i, 1 \leq i \leq m$  are matrices of size $ (r \alpha \times (\delta-1)\alpha)$.   Then the generator matrix $G$ of the desired code $\mathcal{C}$ is obtained by splitting and rearranging the columns of $Q$, as shown below
\begin{eqnarray}
G  & = & \left [ \begin{array}{cccccc|c} I_{r\alpha}  & Q_1  &&& && \\  && \ddots& \ddots && &Q' \\ &&&& I_{r\alpha}  &
Q_{m} & \end{array} \right].
\end{eqnarray}
Clearly, the code $\mathcal{C}$ has $(r, \delta)$ information locality, where the local codes are generated by $[I_{r
\alpha} \mid Q_i ], \ i \in [m]$. It can also be observed that all the local codes are shortened codes of $\mathcal{C}''$
and from Lemma \ref{lem:short_MSR}, it follows that these are all MSR. Thus, we conclude that the code $\mathcal{C}$ is an MSR-local code.
\end{constr}

The theorem below identifies the parameters of the code so constructed, and proves optimality with respect to minimum distance.

\vspace{0.1in}

\begin{thm}\label{thm:pyramid_msr_optimality} Construction \ref{constr:pyramid_msr} gives us an MSR-local code with $(r, \delta)$ information locality, and parameters
\ben[(a)]
\item $K=mr\alpha$, $n=m(r+\delta - 1)+\Delta$, $\alpha = (d - r +1)\beta$,
\item  $d_{\text{min}}$ satisfying
\bean
d_{\min}  & = & n -  \frac{K}{\alpha} + 1 - \left( \frac{K}{r\alpha}-
1\right)(\delta - 1) .
\eean
\een
Thus the code is optimal with respect to the $K$-bound in \eqref{eq:URA_MSR_bound} on minimum distance.
\end{thm}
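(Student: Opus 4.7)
The plan is to verify part (a) by direct inspection of the block structure of the generator matrix $G$, and to establish part (b) by sandwiching $d_{\min}$ between matching upper and lower bounds equal to $\delta+\Delta$. Part (a) is essentially bookkeeping: the $m$ identity blocks of size $r\alpha\times r\alpha$ in $G$ yield $K=mr\alpha$; counting $r$ systematic and $\delta-1$ local-parity thick columns in each of the $m$ local blocks, together with the $\Delta$ global-parity thick columns inherited from $Q'$, gives $n=m(r+\delta-1)+\Delta$; and the MSR relation $\alpha=(d-r+1)\beta$ follows from the local-code property already established in the construction via Lemma \ref{lem:short_MSR} (the $d$ here being understood as the repair degree of the shortened local code).

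For the upper bound in part (b), since the local codes are MSR and hence URA, the bound in \eqref{eq:URA_MSR_bound} applies directly to $\mathcal{C}$. Substituting $K=mr\alpha$ and $n=m(r+\delta-1)+\Delta$ collapses the right-hand side:
\begin{equation*}
n - mr + 1 - (m-1)(\delta-1) \ = \ \delta + \Delta.
\end{equation*}

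For the lower bound, the key observation is that every codeword of $\mathcal{C}$ associated with a message vector $\mathbf{m}=(\mathbf{m}_1,\ldots,\mathbf{m}_m)$ has the form $\mathbf{c}(\mathbf{m})=(\mathbf{m}_1,\mathbf{m}_1 Q_1,\ldots,\mathbf{m}_m,\mathbf{m}_m Q_m,\mathbf{m}Q')$, while the corresponding codeword of the parent code $\mathcal{C}'$ is $\mathbf{c}'(\mathbf{m})=(\mathbf{m}_1,\ldots,\mathbf{m}_m,\sum_{i=1}^m \mathbf{m}_i Q_i,\mathbf{m}Q')$. A coordinatewise comparison of Hamming weights, combined with the triangle inequality $\text{wt}\!\left(\sum_{i}\mathbf{m}_i Q_i\right)\leq \sum_{i}\text{wt}(\mathbf{m}_i Q_i)$, immediately yields $\text{wt}(\mathbf{c}(\mathbf{m}))\geq \text{wt}(\mathbf{c}'(\mathbf{m}))$ for every $\mathbf{m}$. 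Since $\mathcal{C}'$ is an MSR code and therefore vector MDS (Lemma \ref{lem:MSR_kappa}) with quasi-dimension $mr$, its minimum distance equals $n'-mr+1=\delta+\Delta$, so $\text{wt}(\mathbf{c}(\mathbf{m}))\geq \delta+\Delta$ for every nonzero $\mathbf{m}$, completing the lower bound.

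The only mild obstacle is identifying the right parent-versus-split relationship between $\mathcal{C}$ and $\mathcal{C}'$: one must notice that the $m$ separated local-parity blocks $\{\mathbf{m}_i Q_i\}$ in $\mathcal{C}$ aggregate to the single lumped parity $\sum_{i}\mathbf{m}_i Q_i$ of $\mathcal{C}'$, merely distributed across more coordinates. Once this structural correspondence is in place, the triangle inequality does all of the remaining work, and the argument avoids the two-case analysis used for the sum-parity construction in Theorem \ref{thm:sum_msr_optimality}; in particular, no constraint relating $\delta$ and $\Delta$ is required.
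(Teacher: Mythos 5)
Your argument matches the paper's proof in structure: the upper bound comes from the $K$-bound \eqref{eq:URA_MSR_bound}, which simplifies to $\delta+\Delta$ when $K=mr\alpha$ and $n=m(r+\delta-1)+\Delta$; the lower bound comes from showing $d_{\min}(\mathcal{C}) \geq d_{\min}(\mathcal{C}') = \delta+\Delta$ via the parent MSR code $\mathcal{C}'$. The paper simply asserts the weight comparison ``by inspecting the generator matrices $G$ and $G'$'' without spelling it out, and your triangle-inequality observation $\text{wt}\bigl(\sum_i \mathbf{m}_i Q_i\bigr) \leq \sum_i \text{wt}(\mathbf{m}_i Q_i)$ is precisely the correct justification for that step, so this is the same approach written in slightly more detail.
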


\begin{proof} As in the proof of Theorem~\ref{thm:sum_msr_optimality}, it suffices to prove that
\begin{eqnarray}
 d_{\text{min}} &= & n -  \frac{K}{\alpha} + 1 - \left( \frac{K}{r\alpha}-
1\right)(\delta - 1) \\
& = & \delta + \Delta.
\end{eqnarray}
However, by inspecting the generator matrices $G$ and $G'$, it is clear that the minimum distance of $\mathcal{C}$ is no
less than that of $\mathcal{C}'$. The theorem now follows by noting that $\mathcal{C}'$ is an MSR code with minimum distance $d_{\min}(\mathcal{C}') = \delta + \Delta$.
\end{proof}

\vspace{0.1in}

\begin{note}
The existence of MSR codes for all possible $[n,k,d]$ has been shown in \cite{CadJafMalRamSuh} and these codes could be used
as the codes $\mathcal{C}_0, \mathcal{C}'$ in the two constructions above.  In terms of known, explicit constructions, the
code $\mathcal{C}_0$ can be picked from the product-matrix class~\cite{RasShaKum_pm} of MSR codes.  The product-matrix
construction requires $d \geq 2r - 2$, which combined with $d \leq r+\delta-2$, leads to the constraint $r \leq \delta$ on
the applicability of this construction in Theorem~\ref{thm:sum_msr_optimality}.  When combined with the requirement $d \leq
mr + \delta-2$, it leads to the constraint $m r \leq \delta$ on the applicability of this construction in
Theorem~\ref{thm:pyramid_msr_optimality}.
\end{note}
\vspace{0.1in}

\subsection{Existence of MSR-Local Codes when $K=mr\alpha$} \label{sec:msr_info_locality_existence}

\vspace{0.1in}

\begin{thm}\label{thm:msr_info_locality_existence}
Given the existence of an $((n_L, r, d), (\alpha, \beta))$ exact-repair MSR code with $n_L=(r+\delta-1)$, there exists an MSR-local code with $(r, \delta)$ information locality over $\mathbb{F}_q$, and $d_{\min}$ achieving the $K$-bound, i.e.,
\begin{eqnarray} \label{eq:msr_info_locality_existence_thm}
 d_{\min} & = & n - \frac{K}{\alpha} + 1 - \left(\frac{K}{r\alpha} - 1\right)(\delta-1),
\end{eqnarray}
with $K=mr \alpha$, for some integer $m \geq 2$, whenever $q > {n \choose mr}$.
\end{thm}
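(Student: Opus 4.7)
My plan is a probabilistic existence argument, in the spirit of the proof of Theorem~\ref{thm:scalar_allsymbol_existence}. First, I fix the block structure of the code: choose $m$ pairwise disjoint supports $S_1,\ldots,S_m \subseteq [n]$ of size $n_L = r+\delta-1$ each (possible since $n \ge mn_L$), leaving $\Delta := n - mn_L$ global-parity coordinates. On each $S_i$, install the hypothesized $((n_L,r,d),(\alpha,\beta))$ MSR code with generator matrix $G_L$, so that the $i$-th local code encodes a local message $\mathbf{m}_i\in\mathbb{F}_q^{r\alpha}$; each global-parity thick column $k$ is parametrized as $\sum_{i=1}^m A_{i,k}^{\top}\mathbf{m}_i$, where $A_{i,k} \in \mathbb{F}_q^{r\alpha\times\alpha}$ is a matrix of free variables. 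By construction, the code has $m$ support-disjoint MSR local codes; the only remaining freedom is in the choice of the $A_{i,k}$, which I will fix to maximize $d_{\min}$.

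The target minimum distance is the URA upper bound from Theorem~\ref{thm:URA_bound}, namely $n - mr + 1 - (m-1)(\delta-1)$. By Lemma~\ref{lem:fact_dmin}, achieving it is equivalent to $\mathrm{rank}(G|_T) = K$ for every $T\subseteq[n]$ with $|T| = mr + (m-1)(\delta-1)$. Since rank is monotone in $T$, it is sufficient to find, inside every such $T$, an $mr$-subset $T' \subseteq T$ on which the $K\times K$ matrix $G|_{T'}$ is invertible. The invertibility of $G|_{T'}$ is the non-vanishing of a single determinant polynomial in the entries of $\{A_{i,k}\}$, so the plan reduces to showing that for each ``admissible'' $mr$-subset $T'$ (i.e., one with $|T'\cap S_i|\le r$ for all $i$, so that the local MDS structure does not force singularity) this polynomial is not identically zero, and then invoking a Schwartz--Zippel / union bound.

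To establish non-triviality of the determinant polynomial for a single admissible $T'$, I would exhibit an explicit witness $\{A_{i,k}\}$ making $G|_{T'}$ invertible. A natural construction is to let the global-parity columns arise by restricting an auxiliary large MSR-structured code (whose existence over sufficiently large fields is assured by~\cite{CadJafMalRamSuh}) that agrees with $G_L$ on each $S_i$; the vector-MDS property of this ambient code then forces $G|_{T'}$ to have full rank. Once each such polynomial is shown non-trivial, a field $\mathbb{F}_q$ with $q > \binom{n}{mr}$ suffices for a simultaneous assignment of the $A_{i,k}$ making every admissible $G|_{T'}$ invertible (the bound $\binom{n}{mr}$ arising from indexing the polynomial conditions by $mr$-subsets of $[n]$, as at most $\binom{n}{mr}$ are relevant). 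For this assignment, every $T$ of size $mr+(m-1)(\delta-1)$ admits an admissible information subset $T'$, yielding $\mathrm{rank}(G|_T)=K$ and the desired $d_{\min}$.

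The main technical obstacle is the witness step: constructing, for each admissible $T'$, a concrete specialization of the $A_{i,k}$'s that is consistent with the fixed $G_L$ on each $S_i$ and yet makes $G|_{T'}$ invertible. The delicate case arises when $|T'\cap S_i|<r$ for several $i$'s: the local equations then leave a residual subspace in each corresponding $\mathbf{m}_i$, and the global-parity coordinates in $T'$ must jointly pin down those residuals. Arguing this via the vector-MDS property of an appropriately chosen auxiliary MSR extension of $G_L$---and verifying that enough admissible $T'$ sit inside every relevant $T$ to cover the cases with skewed intersections $|T\cap S_i|$---is expected to be the crux of the proof.
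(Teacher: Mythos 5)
Your proposal follows essentially the same route as the paper: fix $m$ diagonal copies of $G_L$, treat the entries of the global-parity thick columns as indeterminates, reduce the distance target (via Lemma~\ref{lem:fact_dmin}) to nonvanishing of determinant polynomials indexed by the admissible $mr$-subsets $T'$ with $|T' \cap S_i| \le r$, and invoke a polynomial nonvanishing argument over $\mathbb{F}_q$ --- the paper takes the product $f = \prod_{T \in \mathcal{T}} f_T$, bounds each variable's degree in $f$ by $|\mathcal{T}| \le \binom{n}{mr}$, and applies the Combinatorial Nullstellensatz, whereas you propose per-polynomial Schwartz--Zippel with a union bound (for which you would need the per-variable-degree form, since a total-degree union bound only gives $q > K\binom{n}{mr}$). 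The ``witness step'' you flag as the crux --- certifying that each $f_{T'}$ is not identically zero so that the polynomial-method hypothesis is actually met --- is in fact also left implicit in the paper's proof, so your auxiliary-MSR-extension idea (or the paper's own sum-parity Construction~\ref{constr:sum_msr} when $\delta \ge \Delta$) would be a natural way to make either proof fully rigorous.
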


\vspace{0.1in}

\begin{proof}
See Appendix \ref{app:msr_infolocality_existence_proof}.
\end{proof}

\vspace{0.1in}

We note that unlike in Theorems \ref{thm:sum_msr_optimality} and \ref{thm:pyramid_msr_optimality}, there is no constraint here on the
repair degree $d$ involving $r$ and $\delta$, and thus Theorem \ref{thm:msr_info_locality_existence} is applicable for a wider range of
parameters than are Theorems \ref{thm:sum_msr_optimality} and \ref{thm:pyramid_msr_optimality}.

\subsection{Existence of MSR-Local Codes with All-symbol Locality} \label{sec:msr_all_symbol_locality}

\begin{thm}\label{thm:msr_all_symbol}
Given the existence of an $((n_L, r, d), (\alpha, \beta))$ exact-repair MSR code, where $n_L=r+\delta-1$ there exists an $[n, K, d_{\min}, \alpha]$ MSR-local code $\mathcal{C}$ with $(r, \delta)$ all-symbol locality over $\mathbb{F}_q$, such that $d_{\min}$ achieves the $K$-bound with equality
\begin{eqnarray} \label{eq:msr_allsymbol_thm}
 d_{\min} & = & n -  \frac{K}{\alpha}  + 1 - \left( \left\lceil \frac{K}{r\alpha} \right\rceil - 1\right)(\delta-1),
\end{eqnarray}
whenever
\bit
\item $K=\ell \alpha$ for some positive integer $\ell\geq r$,
\item $n=mn_L$ for some positive integer $m\geq \frac{\ell}{r}$ and
\item field size $q > {n \choose \ell}$.
\eit
\end{thm}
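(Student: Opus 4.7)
The plan is to construct $\mathcal{C}$ by overlaying $m$ support-disjoint copies of the given MSR base code via a random linear ``mixing.'' Partition $[n]$ into $m$ groups of $n_L$ consecutive thick columns each, let $G_B \in \mathbb{F}_q^{r\alpha \times n_L\alpha}$ be the generator matrix of the $((n_L,r,d),(\alpha,\beta))$ MSR code, and propose
\[
G \ = \ \bigl[\,A_1 G_B \ \bigm|\ A_2 G_B\ \bigm|\ \cdots\ \bigm|\ A_m G_B\,\bigr],
\]
where each $A_i \in \mathbb{F}_q^{\ell\alpha \times r\alpha}$ is a matrix whose entries are to be chosen so that several non-vanishing polynomial conditions are simultaneously met. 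Whenever $\mathrm{rank}(A_i)=r\alpha$, the row space of $A_i G_B$ coincides with that of $G_B$, so the $i$-th punctured code is itself MSR; this gives $\mathcal{C}$ exact $(r,\delta)$ all-symbol locality with MSR local codes, block length $n=mn_L$, and scalar dimension $K=\ell\alpha$.

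The second step is to reduce the distance optimality to a finite list of determinant non-vanishing conditions on $\{A_i\}$. Since the local codes are MSR, their rank-accumulation profile gives $\mathrm{rank}(G|_S)\leq \bigl(\sum_i \min(|S\cap \mathrm{grp}_i|,r)\bigr)\alpha$ for every $S\subseteq [n]$. A short pigeonhole argument based on $m\geq \lceil \ell/r\rceil$ and $|S|=P^{(\text{inv})}(K)$ shows that whenever the right-hand side reaches $\ell\alpha$, $S$ contains an $\ell$-subset $T$ with $|T\cap \mathrm{grp}_i|\leq r$ for all $i$. Enforcing $\det(G|_T)\neq 0$ for every such $T$ therefore forces $\mathrm{rank}(G|_S)=K$ for all $|S|\geq P^{(\text{inv})}(K)$, which by Lemma \ref{lem:fact_dmin} yields $d_{\min}=n-P^{(\text{inv})}(K)+1$, i.e.\ equality in \eqref{eq:msr_allsymbol_thm}. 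Let $\mathcal{T}$ be the family of all such $\ell$-subsets; obviously $|\mathcal{T}|\leq \binom{n}{\ell}$.

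The remainder is a Schwartz--Zippel / union-bound argument. Treating the entries of $A_1,\ldots,A_m$ as indeterminates, each $\det(G|_T)$ is a polynomial; the claim is that each is not identically zero. To verify this, fix $T\in\mathcal{T}$ and exhibit a specific choice of the $\{A_i\}$ that sends the $|T\cap\mathrm{grp}_i|$ selected thick columns of $G_B$ into disjoint $r\alpha$-dimensional coordinate blocks of $\mathbb{F}_q^{\ell\alpha}$; this makes $G|_T$ block-diagonal with each diagonal block invertible by the MDS property of the MSR code (any $\leq r$ thick columns of $G_B$ are linearly independent). The requirement ``$\mathrm{rank}(A_i)=r\alpha$'' is likewise a non-trivial polynomial condition on $A_i$. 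For $q>\binom{n}{\ell}$, a union bound over the at most $\binom{n}{\ell}$ determinant polynomials then yields an assignment of $\{A_i\}$ over $\mathbb{F}_q$ satisfying every condition simultaneously, establishing existence.

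The principal obstacle is the non-triviality claim: for each $T\in\mathcal{T}$, one must exhibit a concrete $\{A_i\}$ making $\det(G|_T)\neq 0$. This requires coordinating the MDS-like property of $G_B$ with an explicit partition of the $\ell\alpha$ information coordinates among the $m$ groups so that the columns selected in each group contribute linearly independent directions. Everything else---the pigeonhole extraction of $T$ from $S$, the equivalence between distance optimality and the determinant conditions, and the final union bound---is routine bookkeeping.
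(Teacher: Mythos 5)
Your proposal is correct and arrives at the right field-size bound, but takes the \emph{generator-matrix} route, whereas the paper works with the \emph{parity-check matrix}. The paper forms the block-diagonal parity-check matrix $H_0 = \mathrm{diag}(H_L,\ldots,H_L)$, appends a matrix $H_1$ of indeterminates, shows that the optimal minimum distance is equivalent to every $\ell$-core of $\mathcal{C}_0^{\perp}$ also being an $\ell$-core of $\mathcal{C}^{\perp}$, and that this in turn reduces to the square determinants $\det(H|_{S^c})$ being nonzero for all $\ell$-cores $S$; the $\ell$-cores of $\mathcal{C}_0^{\perp}$ are exactly your family $\mathcal{T}$ of $\ell$-subsets $T$ meeting each group in at most $r$ thick columns, so the two reductions are dual formulations of the same combinatorial condition. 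Your construction $G = [A_1 G_B \mid \cdots \mid A_m G_B]$ with free mixing matrices $A_i$ has the advantage that the locality claim is cleaner: once any one determinant $\det(G|_T)$ with $|T\cap\mathrm{grp}_i|=r$ is nonzero, the vector-MDS property of $G_B$ forces $\mathrm{rank}(A_i)=r\alpha$, so the $i$-th punctured code coincides (not merely lies inside) the MSR code $\mathcal{C}_L$; the paper has to argue separately, via cores of the shortened dual code, that no extra parity from $H_1$ leaks into the local support. Your pigeonhole claim that every $S$ with $|S| = P^{(\mathrm{inv})}(K)$ contains some $T\in\mathcal{T}$ is correct: with $s_i = |S\cap\mathrm{grp}_i|$ and $\sum s_i = v_1 n_L + v_0$, the excess $\sum\max(0,s_i-r)$ is a sum of a convex function and is maximized by concentrating, giving $\sum\max(0,s_i-r)\le v_1(\delta-1)$ and hence $\sum\min(s_i,r)\ge \ell$. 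One imprecision to flag: invoking ``Schwartz--Zippel / union bound'' as the last step does not directly yield $q > \binom{n}{\ell}$, since the total degree of each $\det(G|_T)$ is $\ell\alpha$ and a naive union bound gives $q > \binom{n}{\ell}\ell\alpha$. To get the stated bound you need, as the paper does, the Combinatorial Nullstellensatz (Lemma~\ref{lem:null}): observe that each entry $(A_i)_{pk}$ affects only row $p$ of the block $A_i G_B$, so it appears with degree at most one in each $\det(G|_T)$, and hence the product of all $|\mathcal{T}|\le\binom{n}{\ell}$ determinants has per-variable degree at most $\binom{n}{\ell}$, from which $q>\binom{n}{\ell}$ suffices. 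The witness you describe for non-triviality of each $\det(G|_T)$ — sending $\mathrm{colspan}(G_B|_{T\cap\mathrm{grp}_i})$ to disjoint $t_i\alpha$-dimensional coordinate blocks of $\mathbb{F}_q^{\ell\alpha}$ — is correct and standard.
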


\begin{proof}
See Appendix \ref{app:msr_allsymbol_proof}.
\end{proof}

\section{MBR-Local codes} \label{sec:mbr_local_codes}

The present section will focus on the construction of optimal codes with locality in which the local
codes are MBR codes.   More formally, we have
\vspace{.1in}
\begin{defn}
Let $\mathcal{C}$ be an $[n,K,d_{\min},\alpha]$ vector code over $\mathbb{F}_q$ possessing exact $(r,\delta)$ information locality.   Let $G$ be the generator matrix for the code.  Then $\mathcal{C}$ is said to be an MBR-local code with $(r,\delta)$ information locality, provided
\bit
\item the code $\mathcal{C}$ can be punctured so as to yield $m$ local codes $\mathcal{C}_i$ in which the $i^\text{th}$ local code is an $(n_L,r,d)$-MBR code with $n_L=(r+\delta-1)$,
\item and that if the $i^\text{th}$ local code has support $S_i$, and $S = \cup_{i=1}^mS_i$, then
\bean
\text{Rank}\left( G|_{S} \right) & = & K.
\eean
\eit
If in addition, $S=[n]$, we will say that $\mathcal{C}$ is an MBR-local code with $(r,\delta)$ all-symbol locality.
\end{defn}
\vspace{.1in}

As with MSR-local codes, we will write MBR-local code in place of MBR-local code with $(r,\delta)$ information locality and all-symbol MBR-local code in place of MBR-local code with $(r,\delta)$ all-symbol locality.

Two constructions of optimal MBR codes will be presented in this section.  The first is an explicit construction of an MBR-local code that can be applied whenever $K_L \mid K$, where $K_L$ and $K$ denote the scalar dimension of the local and global codes respectively.  In the second construction, we show the existence of all-symbol MBR-local codes whenever $K_L \mid K$ and in addition, $n_L \mid n$. In both cases, optimality is with respect to the bound
 \begin{eqnarray}\label{eq:URA_MBR_bound}
d_{\min} & \leq & n-\frac{K}{K_L}  r +1  - \left( \frac{K}{K_L} -1 \right) (\delta-1),
\end{eqnarray}
appearing in Theorem \ref{thm:URA_bound}.
The MBR codes appearing in both these constructions are the repair-by-transfer MBR codes presented in \cite{ShaRasKumRam_rbt} and described in Example \ref{eg:pentagon} of the present paper.

\subsection{MBR-Local Codes with $(r,\delta)$ Information Locality } \label{sec:mbr_info_locality}

 \begin{constr} \label{constr:mbr_info_locality}

The aim of this construction is to build an optimal MBR-local code $\mathcal{C}$ with $(r,\delta)$ information locality composed of $m$ support-disjoint MBR codes $\{\mathcal{C}_i\}_{i=1}^{m}$ each of which is a repair-by-transfer (RBT) $((n_L,r,d), (\alpha,\beta),K_L)$ MBR code along with $\Delta$ global parity symbols.  The parameters of each RBT MBR code satisfy
\bean
n_L =  r+\delta-1, \ \ \alpha = d  = n_L-1,  \ \  \beta =  1, \  \ K_L = r\alpha - {r \choose 2} .
\eean
Thus the desired global code $\mathcal{C}$ will have length $n= m n_L+\Delta$ and scalar rank $K = m K_L$.
The construction will proceed in three stages:

{\em Stage 1}: Let us define $N_L= {n_L \choose 2}$ and set $\Delta_L = N_L-K_L+1$.  Then in the first stage, a pyramid code $\mathcal{A}$ (see Section~\ref{sec:scalar_local}) with $(K_L,\Delta_L)$-information locality is constructed that is composed of $m$ support-disjoint local codes $\{\mathcal{A}_i\}_{i=1}^{m}$ and $\Delta \alpha$ global parities.   In other words, each of the disjoint local codes $\mathcal{A}_i$ in the pyramid code is an MDS code with parameters $[N_L,K_L,\Delta_L]$ and the overall code ${\cal A}$ has parameters $[m N_L+\Delta \alpha, m K_L,\Delta_L+\Delta \alpha]$.

 {\em Stage 2}: In the second stage, the $N_L$ symbols that correspond to the $i^\text{th}$ local code $\mathcal{A}_i$ are regarded as MDS-coded symbols with parameters $[N_L,K_L,\Delta_L]$ and used to construct a repair-by-transfer $((n_L,r,d), (\alpha,\beta),K_L)$  MBR code.

{\em Stage 3:} In the final stage, the $\Delta \alpha$ global parities are collected into $\Delta$ groups of $\alpha$ symbols each with each group representing the contents of one of the $\Delta$ global parity nodes.

This completes the construction.

\end{constr}

\vspace{0.1in}
An example of Construction \ref{constr:mbr_info_locality}, is illustrated in Figure \ref{fig:mbr_local}.
\vspace{0.1in}

\begin{thm}\label{thm:mbr_info_locality}

Construction \ref{constr:mbr_info_locality} results in an MBR-local code $\mathcal{C}$ with $(r,\delta)$ information locality composed of $m$ support-disjoint local codes $\{\mathcal{C}_i\}_{i=1}^{m}$ each of which is a repair-by-transfer (RBT) $((n_L,r,d), (\alpha,\beta),K_L)$ MBR code along with $\Delta$ global parity symbols.  The parameters of each RBT MBR code satisfy
\bean
n_L =  r+\delta-1, \ \ \alpha = d  = n_L-1,  \ \  \beta =  1, \  \ K_L = r\alpha - {r \choose 2} .
\eean
Thus code $\mathcal{C}$ has length $n= m n_L+\Delta$, scalar dimension $K = m K_L$ and $d_{\min}$ satisfying the upper bound in \eqref{eq:URA_MBR_bound} given by
 \begin{eqnarray}
d_{\min} & = & n-\frac{K}{K_L}  r +1  - \left( \frac{K}{K_L} -1 \right) (\delta-1) \\
& = & \Delta+\delta.
\end{eqnarray}
\end{thm}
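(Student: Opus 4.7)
First, I would verify the parameter claims and the $(r,\delta)$-information locality by direct inspection of Construction~\ref{constr:mbr_info_locality}. Each of the $m$ support-disjoint local codes is, by Stage~2, an $((n_L,r,d),(\alpha,\beta),K_L)$ repair-by-transfer MBR code with the stated values, and together their supports carry the full scalar dimension $K=mK_L$. The upper bound $d_{\min}\leq\Delta+\delta$ then follows from the URA bound in Theorem~\ref{thm:URA_bound}, since MBR codes are URA: plugging $n=mn_L+\Delta$ and $K=mK_L$ into that bound yields exactly $\Delta+\delta$. So the real content of the theorem is the matching lower bound, on which the rest of the plan focuses.

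My plan for $d_{\min}\geq\Delta+\delta$ is to lift the question back to the Stage-1 scalar pyramid code $\mathcal{A}$. Applying Section~\ref{sec:pyramid_codes} with local parameters $(K_L,\Delta_L)$, $\mathcal{A}$ has parameters $[mN_L+\Delta\alpha,\ mK_L,\ \Delta_L+\Delta\alpha]$; moreover, using $n_L=r+\delta-1$, $\alpha=n_L-1$, $K_L=r\alpha-\binom{r}{2}$, $N_L=\binom{n_L}{2}$, a brief algebraic simplification yields the clean identity $\Delta_L=\binom{\delta-1}{2}+1$, which will be the key numerical fact. Next, I would translate vector weights into scalar weights using the repair-by-transfer structure of Stage~2: the $N_L$ scalar symbols of each local code sit on the edges of the complete graph $K_{n_L}$, with each edge stored at both of its endpoint nodes. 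Consequently, if $w_i$ nodes of local~$i$ are non-zero, then at most the $\binom{w_i}{2}$ scalar symbols on edges whose both endpoints are non-zero can be non-zero; and each of $w_g$ non-zero global-parity nodes uniquely contributes at most $\alpha$ non-zero scalar symbols. For every codeword $\mathbf{c}\in\mathcal{C}$ of vector weight $w=\sum_i w_i+w_g$, the associated scalar pyramid codeword $\mathbf{c}^{(s)}\in\mathcal{A}$ therefore obeys $\mathrm{wt}(\mathbf{c}^{(s)})\leq\sum_{i=1}^m\binom{w_i}{2}+w_g\alpha$.

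The proof would then proceed by contradiction: suppose $\mathbf{c}\neq\mathbf{0}$ has vector weight $w\leq\Delta+\delta-1$. The main obstacle is the combinatorial maximization
\[
    \max\Bigl\{\sum_{i=1}^m\binom{w_i}{2}+w_g\alpha\ :\ \sum_i w_i+w_g\leq\Delta+\delta-1,\ w_i\in[0,n_L],\ w_g\in[0,\Delta]\Bigr\}\ =\ \Delta\alpha+\binom{\delta-1}{2}.
\]
By convexity of $\binom{\cdot}{2}$, the LHS is maximized at a vertex of the constraint polytope, and for fixed total local mass $\sum_i w_i$ the single-variable sum $\sum_i\binom{w_i}{2}$ is itself maximized by concentrating that mass into one $w_{i^\ast}$ subject to $w_{i^\ast}\leq n_L$. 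Saturating the globals at $w_g=\Delta$ and absorbing the residual $\delta-1$ into $w_{i^\ast}=\delta-1$ (with all other $w_j=0$) then produces the candidate optimum $\Delta\alpha+\binom{\delta-1}{2}$. The remaining boundary configurations, namely $w_{i^\ast}=n_L$ with the excess spilling into the globals (feasible when $\Delta\geq r$) and $(w_{i_1},w_{i_2})=(n_L,\Delta-r)$ spilling into a second local code (feasible when $r\leq\Delta\leq 2r+\delta-1$), each reduce via the explicit values of $n_L,\alpha,K_L$ to short algebraic inequalities verifying they do not exceed the candidate. Using $\Delta_L=\binom{\delta-1}{2}+1$, the claimed maximum equals $\Delta\alpha+\Delta_L-1$, which is strictly less than $d_{\min}(\mathcal{A})=\Delta_L+\Delta\alpha$. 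This contradicts $\mathbf{c}^{(s)}\neq\mathbf{0}$, yielding $d_{\min}(\mathcal{C})\geq\Delta+\delta$ and completing the proof.
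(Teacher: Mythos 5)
Your proof is correct and rests on the same core machinery as the paper's: lift the question to the Stage-1 scalar pyramid code $\mathcal{A}$, use the RBT edge-graph structure to relate vector-symbol support to pyramid-symbol support via $\binom{\cdot}{2}$, exploit superadditivity/convexity of the binomial, and use the key numerical identity $\Delta_L = \binom{\delta-1}{2}+1$. The only real difference is one of framing: the paper argues via erasure-correction (``any $\Delta+\delta-1$ node erasures destroy at most $D_{\min}(\mathcal{A})-1$ pyramid symbols''), whereas you argue via minimum codeword weight (``any nonzero codeword of vector weight $\leq \Delta+\delta-1$ projects to a pyramid codeword of scalar weight $\leq D_{\min}(\mathcal{A})-1$''). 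These are dual statements, and both hinge on the same count. One small structural advantage of the paper's phrasing is worth noting: by fixing a sub-pattern of exactly $\delta-1$ erased local nodes up front (possible since there are only $\Delta$ global nodes) and crudely charging $\alpha$ scalar losses to each of the remaining $\Delta$ erasures, the paper sidesteps the joint maximization over $(\{w_i\}, w_g)$ that your argument carries out explicitly (including the boundary cases $w_{i^\ast}=n_L$, etc.). Your maximization is correct — the local exchange argument ``shift one unit from any $w_i>0$ to $w_g<\Delta$ changes the objective by $\alpha-(w_i-1)\geq 0$ since $w_i\leq n_L=\alpha+1$'' gives it immediately — but it is extra work that the paper's choice of decomposition avoids.
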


\begin{proof}  All claims in the theorem are clear with the exception of the claim concerning the minimum distance.
 Since $K_L \mid K$, an upper bound on $d_{\min}$ from \eqref{eq:URA_MBR_bound} is given by
\begin{eqnarray}
d_{\min} & \leq & n-\frac{K}{K_L}  r +1  - \left( \frac{K}{K_L} -1 \right) (\delta-1) \label{eq:thm_rbt_MBR_temp2} \\
& = & (n-m r+1) -(m-1)(\delta-1), \\
& = & n-m(r+\delta-1) + \delta \\
& = & \delta + \Delta . \label{eq:MBR_equality}
\end{eqnarray}
To show that the code satisfies the above bound with equality, it suffices to show that any pattern of $\delta + \Delta - 1$ erasures can be corrected by the code. Towards this, we note that the scalar code $\mathcal{A}$ employed in Construction \ref{constr:mbr_info_locality} has minimum distance given by
\bean
D_{\min} & = & \Delta_L+\Delta \alpha \ = \ \Delta \alpha + {\delta-1 \choose 2} +1 .
\eean
Given any pattern of $\delta + \Delta - 1$ erasures in the vector code $\mathcal{C}$, by using the structure of the repair-by-transfer MBR local codes, we will evaluate the number of scalar symbols of the pyramid code,
$\mathcal{A}$, that are erased and show that this number is at most $D_{\text{min}} - 1$. This would imply that the
pyramid code, $\mathcal{A}$, can recover from this many erasures and thus, so can the vector code
$\mathcal{C}$.

As a first step we note that at least $\delta-1$ vector code symbols out of any pattern of $\delta+\Delta-1$ erased vector code symbols, come from the union of the local codes. We will now argue that the maximum number of scalar symbols lost on erasing the first $\delta-1$ vector code symbols from the union of the local codes is  ${\delta -1 \choose 2}$.

Assume that a given pattern of $\delta+\Delta-1$ vector code symbol erasures is given. For this pattern, we further restrict ourselves to the first $\delta-1$ vector code symbols erased from the union of the local codes. Let $\gamma_i, 1\leq i\leq m$ be the number of code-word symbols erased from the $i^{th}$ local code. Note that $0 \leq \gamma_i \leq \delta -1 < n_L,\ 1 \leq i \leq m$ and $\sum_{i=1}^{m}\gamma_i=\delta-1$. Next, we observe that the number of scalar symbols lost when $\delta-1$ nodes are lost from the union of local codes equals
\bean
\sum_{i=1}^{m} {\gamma_i \choose 2} \leq {\sum_{i=1}^{m}\gamma_i \choose 2} = {\delta -1 \choose 2}.
\eean
The loss of $\Delta$ more vector code symbols can cause the loss of at-most $\Delta\alpha$ more scalar code symbols. Thus we have that the maximum number $L$ of scalar code symbols lost as a result of $\Delta +\delta -1$ erasures, is
given by \bean
L \leq \Delta \alpha + {\delta-1 \choose 2}  \ = \ \Delta_L - 1 + \Delta \alpha  \ = \ D_{\min} -1,
\eean
and the result follows.
\end{proof}

\vspace{0.1in}

\subsection{Existence of Optimal MBR-Local Codes with $(r,\delta)$ All-Symbol Locality} \label{sec:mbr_allsymbol_locality}

\vspace{0.1in}
 \begin{constr} \label{constr:mbr_allsymbol_existence}

The aim of this construction is to build an optimal MBR-local code $\mathcal{C}$ with $(r,\delta)$-all-symbol locality composed of $m$ support-disjoint MBR codes $\{\mathcal{C}_i\}_{i=1}^{m}$ each of which is a repair-by-transfer (RBT) $((n_L,r,d), (\alpha,\beta),K_L)$ MBR code, using optimal scalar all-symbol locality codes, whose existence has been shown in Theorem \ref{thm:scalar_allsymbol_existence}.  The parameters of each RBT MBR code satisfy
\bean
n_L =  r+\delta-1, \ \ \alpha = d  = n_L-1,  \ \  \beta =  1, \  \ K_L = r\alpha - {r \choose 2} .
\eean
The desired global code $\mathcal{C}$ will have length $n= mn_L$ and  scalar dimension $K$ that is assumed to be a multiple $K = \ell K_L$ for some positive integer $\ell \leq m$.
The construction will proceed in two stages:

{\em Stage 1}: Let us define $N_L= {n_L \choose 2}$ and set $\Delta_L = N_L-K_L+1$.  Then in the first stage, a scalar code $\mathcal{A}$ with $(K_L,\Delta_L)$ all-symbol locality, of length $mN_L$, dimension $K=\ell K_L$ and which moreover, is optimal with respect to the bound on minimum distance is assumed to be given. The existence of such a code is shown in Theorem \ref{thm:scalar_allsymbol_existence}. As $K_L \mid K$, it also follows from Theorem \ref{thm:scalar_info_locality_equality}, that $\mathcal{A}$ is composed of $m$ support-disjoint MDS codes local $\{\mathcal{A}_i\}_{i=1}^{m}$ with parameters $[N_L,K_L,\Delta_L]$.  The global scalar code ${\cal A}$ has parameters $[m N_L, \ell K_L, \Delta_L+(m-\ell)N_L]$.

 {\em Stage 2}: In the second stage, the $N_L$ symbols that correspond to the $i^{\text{th}}$ local code $\mathcal{A}_i$ are regarded as MDS-coded symbols with parameters $[N_L,K_L,\Delta_L]$ and used to construct a repair-by-transfer $((n_L,r,d), (\alpha,\beta),K_L)$  MBR code.

This completes the construction.

\end{constr}

\vspace{0.1in}
An example of Construction \ref{constr:mbr_allsymbol_existence}, is illustrated in Figure \ref{fig:mbr_local_all_symbol}.
\vspace{0.1in}

\begin{thm}\label{thm:mbr_allsymbol_existence}

Construction \ref{constr:mbr_allsymbol_existence} results in an MBR-local code $\mathcal{C}$ with $(r,\delta)$ all-symbol locality composed of $m$ support-disjoint local codes $\{\mathcal{C}_i\}_{i=1}^{m}$ each of which is a repair-by-transfer (RBT) $((n_L,r,d), (\alpha,\beta),K_L)$ MBR code along with $\Delta$ global parity symbols.  The parameters of each RBT MBR code satisfy
\bean
n_L =  r+\delta-1, \ \ \alpha = d  = n_L-1,  \ \  \beta =  1, \  \ K_L = r\alpha - {r \choose 2} .
\eean
Thus code $\mathcal{C}$ has  $n= mn_L$ and scalar rank $K = \ell K_L$ for some positive integer $\ell \leq t$ and $d_{\min}$ satisfying the upper bound in \eqref{eq:URA_MBR_bound} given by
 \begin{eqnarray}
d_{\min} & = & n-\frac{K}{K_L}  r +1  - \left( \frac{K}{K_L} -1 \right) (\delta-1) \\
& = & (m-\ell)n_L+\delta.
\end{eqnarray}
\end{thm}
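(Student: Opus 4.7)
The plan is to establish $d_{\min} = (m-\ell)n_L + \delta$ by matching upper and lower bounds. The upper bound is immediate from Theorem \ref{thm:URA_bound}: each local RBT-MBR code is a URA code (as noted in Section \ref{sec:URA_code}), the construction enforces exact $(r,\delta)$ information locality, and $K_L \mid K$, so specialization \eqref{eq:dmin_Pinv_bound_BdivK} yields $d_{\min} \leq n - (K/K_L)r + 1 - (K/K_L - 1)(\delta-1) = (m-\ell)n_L + \delta$.

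For the lower bound, I would mirror the erasure-counting argument used in the proof of Theorem \ref{thm:mbr_info_locality}. Fix any pattern of $(m-\ell)n_L + \delta - 1$ erased vector symbols and let $\gamma_i \in [0, n_L]$ denote the number of erasures that fall among the $n_L$ nodes of the $i$-th local RBT-MBR code, so $\sum_{i=1}^m \gamma_i = (m-\ell)n_L + \delta - 1$. The essential structural fact about the repair-by-transfer construction (as illustrated in Example~\ref{eg:pentagon}) is that each scalar symbol of the underlying local MDS code $\mathcal{A}_i$ corresponds to an edge of $K_{n_L}$ and is replicated at its two incident nodes; consequently, erasing $\gamma_i$ nodes from local code $i$ destroys exactly $\binom{\gamma_i}{2}$ scalar symbols of $\mathcal{A}_i$. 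Hence the total number of scalar erasures induced in the parent scalar code $\mathcal{A}$ is $L = \sum_{i=1}^m \binom{\gamma_i}{2}$. Since $\mathcal{A}$ is the optimal $[mN_L, \ell K_L]$ scalar all-symbol locality code supplied by Theorem \ref{thm:scalar_allsymbol_existence}, Theorem \ref{thm:scalar_info_locality} gives $D_{\min}(\mathcal{A}) = (m-\ell)N_L + \Delta_L$. If I can show $L \leq D_{\min}(\mathcal{A}) - 1$, then $\mathcal{A}$ recovers the message, and hence so does $\mathcal{C}$.

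The crux is therefore to bound $L_{\max}$, the maximum of $L$ over the feasible polytope $\{\gamma : \sum_i \gamma_i = (m-\ell)n_L + \delta -1, \ 0 \le \gamma_i \le n_L\}$. Since $x \mapsto \binom{x}{2}$ is strictly convex on $[0, n_L]$, the maximum is attained at a vertex of the polytope; under the hypothesis $\delta \geq 2$, i.e. $1 \le \delta-1 \le n_L - 1$, elementary case analysis pins down this vertex uniquely (up to coordinate permutation) as $m-\ell$ coordinates equal to $n_L$, one coordinate equal to $\delta-1$, and the remaining $\ell - 1$ coordinates equal to $0$. This yields
\[
L_{\max} \ = \ (m-\ell)\binom{n_L}{2} + \binom{\delta-1}{2} \ = \ (m-\ell)N_L + \binom{\delta-1}{2}.
\]
The proof then closes via the combinatorial identity $\Delta_L - 1 = N_L - K_L = \binom{\delta-1}{2}$, which follows from the Vandermonde split $\binom{r+\delta-1}{2} = \binom{r}{2} + r(\delta-1) + \binom{\delta-1}{2}$ combined with $K_L = r\alpha - \binom{r}{2}$ and $\alpha = r + \delta - 2$. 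This identity forces $L_{\max}$ to coincide exactly with $D_{\min}(\mathcal{A}) - 1$, matching the bounds tightly.

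I expect the main obstacle to be not any single calculation but rather the sharp convex-maximization step: one must identify the unique vertex of the feasible polytope, and then verify that the resulting value aligns \emph{exactly} with $D_{\min}(\mathcal{A}) - 1$. The algebraic identity $\Delta_L - 1 = \binom{\delta-1}{2}$ is what makes this alignment possible, and together with the vertex characterization it yields $d_{\min} \geq (m-\ell)n_L + \delta$, completing the proof.
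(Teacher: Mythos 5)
Your proposal is correct and follows the same strategy as the paper's proof: upper bound from the URA bound with $K_L\mid K$, lower bound by counting scalar erasures $\sum_i\binom{\gamma_i}{2}$ in the underlying scalar all-symbol-locality code $\mathcal{A}$, and the identity $\Delta_L-1=N_L-K_L=\binom{\delta-1}{2}$ to match $L_{\max}$ against $D_{\min}(\mathcal{A})-1$. If anything, your convex-maximization-over-the-polytope argument is more careful than the paper's, which invokes only the superadditivity fact $\binom{a}{2}+\binom{b}{2}\leq\binom{a+b}{2}$ without spelling out how the box constraints $\gamma_i\leq n_L$ enter (and, incidentally, misstates the box constraint as $\gamma_i\leq\delta-1$, a leftover from the information-locality proof).
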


\begin{proof}
See Appendix \ref{app:mbr_allsymbol_existence_proof}.
\end{proof}

\vspace{0.1in}

\begin{note}
It follows from conditions derived in Theorem \ref{thm:URA_rank_bound} of Section \ref{sec:URA_rank_bound}, that both constructions presented in this section are rate optimal. We also note that the rank accumulation profile is strictly sub-additive for MBR codes and hence it is not possible to construct any $d_{\min}$ optimal MBR-local codes without support disjoint local codes.
\end{note}

\section{Bound on $d_{\min}$ Based on Quasi-Dimension} \label{sec:kappa_bound}
In this section, we derive bounds on minimum distance for vector codes possessing locality. Unlike in Section \ref{sec:locality_vec_codes}, we do not assume exact locality in this section.

The case for locality in vector codes with $\delta = 2$ has been previously considered in \cite{PapDim}, where it was shown that
$d_{\text{min}}$, under $(r, \delta = 2)$-all-symbol locality, is upper bounded by
\bea
 d_{\text{min}} & \leq & n - \left \lceil \frac{K}{\alpha} \right \rceil + 1 - \left(
\left \lceil \frac{K}{r\alpha}\right\rceil - 1\right). \label{eq:dimakis_delta_eq_2}
\eea

\subsection{Bound on Minimum Distance for Vector Codes with Locality} \label{sec:dmin_bound_vector_codes}

We obtain below an upper bound on the minimum distance of a vector code, in the presence of $(r, \delta)$ information
locality, that holds for all $\delta \geq 2$ and which when specialized to the case $\delta=2$, is in general, tighter than the bound in \eqref{eq:dimakis_delta_eq_2}.

\vspace{0.1in}

\begin{thm}\label{thm:kappa_bound}
Consider an $[n, K, d_{\text{min}},\alpha, \kappa]$ vector code $\mathcal{C}$ with $(r, \delta)$ information locality. Let the local punctured codes of length at-most $n_L=r+\delta-1$ and minimum distance at-least $\delta$ be $\left\{ \mathcal{C}_i, i \in \mathcal{L}\right\}$ and their supports be $S_i, i \in \mathcal{L}$ respectively. Set $S={\cup_{i \in \mathcal{L}}S_i}$.  Then, the minimum distance of $\mathcal{C}$ is upper bounded by
\begin{eqnarray}
 d_{\text{min}} & \leq & n - |\mathcal{I}_0| + 1 - \left(\left \lceil \frac{|\mathcal{I}_0|}{r}\right \rceil - 1\right)(\delta
- 1)  \ \ \ \ (\mathcal{I}_0\text{-bound}) \label{eq:locality_tighter} \\
 & \leq & n - \kappa + 1 - \left(\left \lceil \frac{\kappa}{r}\right \rceil - 1\right)(\delta
- 1), \label{eq:locality_looser_kappa}  \ \ \ \ \ \ \ \ \ \ (\kappa\text{-bound})\\
& \leq & n - \left\lceil \frac{K}{\alpha}\right\rceil + 1 - \left(\left \lceil \frac{K}{r\alpha}\right\rceil- 1\right)(\delta
- 1), \label{eq:locality_looser_dimakis} \ \ \ (K\text{-bound})
\end{eqnarray}
where $\mathcal{I}_0$ is a minimum cardinality information set for $\mathcal{C}|_S$ or equivalently, equals the quasi-dimension of
$\mathcal{C}|_S$, i.e.,
$|\mathcal{I}_0| = \text{q-dim}\left(\mathcal{C}|_S\right)$.
\end{thm}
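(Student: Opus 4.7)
My plan is to establish the strongest of the three inequalities, the $\mathcal{I}_0$-bound, and then derive the $\kappa$-bound and $K$-bound as immediate consequences. The function $f(x) = n - x + 1 - (\lceil x/r \rceil - 1)(\delta - 1)$ is strictly decreasing in $x$. Since $\mathcal{C}|_S$ has scalar rank $K$ by the definition of $(r,\delta)$ information locality, any information set for $\mathcal{C}|_S$ is also an information set for $\mathcal{C}$, so $\kappa \leq |\mathcal{I}_0|$. Combined with $\lceil K/\alpha \rceil \leq \kappa$ from \eqref{eq:parameters} and the identity $\lceil \lceil K/\alpha \rceil / r \rceil = \lceil K/(r\alpha) \rceil$, this yields $f(|\mathcal{I}_0|) \leq f(\kappa) \leq f(\lceil K/\alpha \rceil)$, so that the $\mathcal{I}_0$-bound implies the $\kappa$-bound, which in turn implies the $K$-bound.

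The heart of the proof will be a structural lemma asserting that $|\mathcal{I}_0 \cap S_i| \leq r$ for every $i \in \mathcal{L}$. Because $\mathcal{C}|_{S_i}$ has minimum distance at least $\delta$ and length at most $r + \delta - 1$, the erasure bound \eqref{eq:erasure_bound} gives $\text{q-dim}(\mathcal{C}|_{S_i}) \leq r$. If $|\mathcal{I}_0 \cap S_i|$ were to exceed $r$, one could pick $\mathcal{I}' \subseteq S_i$ with $|\mathcal{I}'| \leq r$ and $\sum_{\ell \in \mathcal{I}'} W_\ell = \sum_{\ell \in S_i} W_\ell \supseteq \sum_{\ell \in \mathcal{I}_0 \cap S_i} W_\ell$. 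Then $(\mathcal{I}_0 \setminus S_i) \cup \mathcal{I}'$ would be an information set for $\mathcal{C}|_S$ of size strictly less than $|\mathcal{I}_0|$, contradicting the minimality of $\mathcal{I}_0$.

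To conclude, I would invoke Lemma \ref{lem:fact_dmin} and greedily construct $T \subseteq [n]$ with $\text{rank}(G|_T) < K$, using an algorithm modeled on Algorithm \ref{alg:URA_bound}. At each step, pick a local code $\mathcal{C}_{i_j}$ whose span $V_{i_j}$ is not contained in $\sum_{\ell \in T_{j-1}} W_\ell$; append $S_{i_j}$ to $T$ so long as the cumulative rank stays below $K$, and at the final step append a maximal subset of $S_{i_J}$ that keeps the rank below $K$. By the structural lemma, each full iteration contributes at most $r$ new elements of $\mathcal{I}_0$ to $T$ while adding at most $r + \delta - 1$ support coordinates; consequently, after $\lceil |\mathcal{I}_0|/r \rceil - 1$ full iterations plus the partial final step, at least one element of $\mathcal{I}_0$ remains outside the chosen supports (forcing $\text{rank}(G|_T) < K$) while $|T| \geq |\mathcal{I}_0| - 1 + (\lceil |\mathcal{I}_0|/r \rceil - 1)(\delta - 1)$. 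The $\mathcal{I}_0$-bound then follows from Lemma \ref{lem:fact_dmin}.

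The main technical obstacle lies in the greedy step, specifically in handling support overlaps between successively selected local codes and in analyzing the final partial iteration. Unlike the URA setting of Theorem \ref{thm:URA_bound}, where the sub-additivity of $P(\cdot)$ cleanly controls incremental rank in terms of incremental support, here I must jointly track $\mathcal{I}_0$-coverage (controlled by the structural lemma) and raw support size; the final partial step requires particularly careful accounting in order to recover exactly the slack term $(\lceil |\mathcal{I}_0|/r \rceil - 1)(\delta - 1)$.
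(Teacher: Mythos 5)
Your overall strategy---prove the $\mathcal{I}_0$-bound and derive the $\kappa$-bound and $K$-bound from it via monotonicity of $f(x)=n-x+1-(\lceil x/r\rceil-1)(\delta-1)$---matches the paper, as does the greedy algorithm modeled on Algorithm \ref{alg:URA_bound} together with Lemma \ref{lem:fact_dmin}. Your structural lemma $|\mathcal{I}_0\cap S_i|\leq r$ is correct (the exchange argument you sketch is sound, and in fact yields the stronger $|\mathcal{I}_0\cap S_i|\leq |S_i|-(\delta-1)$). But the way you use it contains a genuine gap.

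The central problem is that tracking coverage of a \emph{fixed} set $\mathcal{I}_0$ does not mesh with the algorithm's termination criterion, which is rank-based, not $\mathcal{I}_0$-based. First, the implication in your parenthetical ``(forcing $\text{rank}(G|_T)<K$)'' runs backwards: having some $\mathcal{I}_0$-element outside $T$ does \emph{not} force $\text{rank}(G|_T)<K$, because thick columns in $T\setminus\mathcal{I}_0$ can substitute for the missing element. Second, and more seriously, the algorithm can reach rank $K$ well before $\mathcal{I}_0$ is nearly covered, so you cannot count on running $\lceil|\mathcal{I}_0|/r\rceil-1$ full iterations. Third, even granting that iteration count, your claimed lower bound $|T|\geq|\mathcal{I}_0|-1+(\lceil|\mathcal{I}_0|/r\rceil-1)(\delta-1)$ would need each full step to satisfy $s_j\geq \mu_j+(\delta-1)$, where $\mu_j$ counts new $\mathcal{I}_0$-elements covered; this does not follow from $|\mathcal{I}_0\cap S_{i_j}|\leq r$, because the $\geq\delta-1$ ``non-$\mathcal{I}_0$'' coordinates of $S_{i_j}$ may already lie in $T_{j-1}$. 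The crude bound $s_j\geq\delta$ alone gives only $|T_J|\geq(\lceil|\mathcal{I}_0|/r\rceil-1)\delta$, which falls short by $|\mathcal{I}_0|-\lceil|\mathcal{I}_0|/r\rceil$.

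The paper resolves this by tracking a \emph{moving} quantity rather than the fixed $\mathcal{I}_0$: it sets $\nu_j=\text{q-dim}(\mathcal{C}|_{T_j})-\text{q-dim}(\mathcal{C}|_{T_{j-1}})$ and proves $s_j\geq\nu_j+(\delta-1)$ for $j<J$ using Lemma \ref{lem:infosets_1}, which concerns the min-cardinality information set of the \emph{current} restriction $\mathcal{C}|_{T_j}$. Summing gives $|T_J|\geq\text{q-dim}(\mathcal{C}|_{T_J})+(J-1)(\delta-1)$. The connection back to $|\mathcal{I}_0|$ is made at the very end via the antitonicity of quasi-dimension on nested full-rank supports (Lemma \ref{lem:infosets_2}): setting $S''=T_{J-1}\cup S_{i_J}$, one has $S''\subseteq S$, so $\text{q-dim}(\mathcal{C}|_{S''})\geq\text{q-dim}(\mathcal{C}|_S)=|\mathcal{I}_0|$, and maximality of $S_{\text{end}}$ gives $\text{q-dim}(\mathcal{C}|_{T_J})\geq\text{q-dim}(\mathcal{C}|_{S''})-1\geq|\mathcal{I}_0|-1$. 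This two-lemma machinery is precisely what you need to replace the fixed-$\mathcal{I}_0$ accounting; without it, the ``careful accounting'' you defer to in your last paragraph cannot be carried out.
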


\vspace{0.1in}

\vspace{0.1in}

\begin{proof} See Appendix \ref{app:kappa_bound_proof}.

\end{proof}

\begin{note}
For the same set of code parameters $[n, K, d_{\text{min}},\alpha]$ , it is possible to construct codes having different values of $\kappa$ and $\mathcal{I}_0$.  Thus $\kappa$ and $\mathcal{I}_0$ depend upon finer structural details of both global and local code.
Thus while the $K$-bound is a global bound on the minimum distance, the $\kappa$ and $\mathcal{I}_0$-bounds may be regarded as {\em structure}-dependent bounds.
\end{note}

\vspace{0.1in}

The MSR-local code constructions presented in Section \ref{sec:msr_local_codes} of this paper achieve the $K$-bound given in \eqref{eq:locality_looser_dimakis} with equality.   On the other hand, the MBR-local code constructions in Section \ref{sec:mbr_local_codes} are examples of codes that do not meet the $K$-bound but which achieve the $\mathcal{I}_0$-bound.   It also turns out that amongst the codes constructed using Construction \ref{constr:dimakis}, there are examples of codes that achieve the $\kappa$ bound but which do not achieve the $K$-bound.

We now discuss necessary conditions for achieving equality in the $\mathcal{I}_0$-bound, whenever
$r$ divides $|\mathcal{I}_0| $ where $\mathcal{I}_0$ is as defined in the statement of
Theorem \ref{thm:kappa_bound}, and is a reference to a minimum cardinality information set for
the restriction of the code $\mathcal{C}$ to the union of the support of the local codes.

\vspace{0.1in}

\begin{thm}\label{thm:kappa_bound_equality}
Consider an $[n, K, d_{\text{min}},\alpha,\kappa]$ vector code $\mathcal{C}$ having $(r, \delta)$ information locality that is optimal with respect to the $\mathcal{I}_0$-bound.  We assume further that $r \mid |\mathcal{I}_0 |$ and set $\frac{|\mathcal{I}_0|}{r}=t$.  Let $\{ \mathcal{C}_i \}_{i \in \mathcal{L}}$, be the set of all local codes whose length is at most $r+\delta -1$ and distance is at least $\delta$ and let $\{S_i\}_{i \in \mathcal{L}}$ respectively be their supports.  Then
\begin{enumerate}[(a)]
\item  $\mathcal{C}_i$ is an $[ r + \delta - 1, \text{dim}(\mathcal{C}_i) \leq r \alpha, \delta, \alpha, r ]$ erasure optimal code $\forall \ i \in \mathcal{L}$, and
\item for distinct $i_1, i_2 \in \mathcal{L}$, the codes $\mathcal{C}_{i_1}$ and  $\mathcal{C}_{i_2}$ are support disjoint, i.e.,
\bea
S_{i_1} \cap S_{i_2} &=& \phi,
\eea
\end{enumerate}
\end{thm}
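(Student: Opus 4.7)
The plan is to parallel the structure theorems already proved in the paper, namely Theorem \ref{thm:scalar_info_locality_equality} (scalar case) and Theorem \ref{thm:URA_rank_bound} (URA local codes). The starting point is to revisit the proof of the $\mathcal{I}_0$-bound in Theorem \ref{thm:kappa_bound}, which constructs a set $T \subseteq [n]$ with $\text{Rank}(G|_T) < K$ via a greedy algorithm that successively appends supports $S_{i_j}$ of local codes, and then invokes Lemma \ref{lem:fact_dmin}. That proof proceeds through a chain of inequalities that bounds the total accumulated support $|T|$ in terms of accumulated rank, using at each step that a local code has support at most $r+\delta-1$, quasi-dimension at most $r$, and hence contributes rank at most $r\alpha$.

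Under the hypothesis that $\mathcal{C}$ is $\mathcal{I}_0$-bound optimal with $r \mid |\mathcal{I}_0|$, every inequality in this chain must hold with equality. Translating these equalities into structural conditions, I would extract three consequences for every local code $\mathcal{C}_{i_j}$ picked by the algorithm on a given run: (i) the support has full size $|S_{i_j}|=r+\delta-1$; (ii) the incremental rank added upon appending $S_{i_j}$ equals $r\alpha$, which, combined with $|S_{i_j}|=r+\delta-1$ and $d_{\min}(\mathcal{C}_{i_j}) \geq \delta$, forces $\text{q-dim}(\mathcal{C}_{i_j})=r$, i.e., erasure optimality in the sense of \eqref{eq:erasure_bound}; (iii) the incremental support equals $|S_{i_j}|$ fully, meaning $\mathcal{C}_{i_j}$ is support-disjoint from all previously selected codes. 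Together (ii) and (iii) yield parts (a) and (b) for all $\mathcal{C}_{i_j}$ visited by this particular run of the algorithm.

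To upgrade these properties from the subset of visited local codes to every $i \in \mathcal{L}$, I would use the standard ``swap-and-rerun'' argument from the proofs of Theorems \ref{thm:scalar_info_locality_equality} and \ref{thm:URA_rank_bound}: given any particular $i \in \mathcal{L}$, rerun the algorithm but force it to choose $\mathcal{C}_i$ as the first (or some early) code. Since the derived upper bound on $d_{\min}$ is independent of the choices made by the algorithm, $\mathcal{I}_0$-bound optimality continues to force equality throughout the chain, and the structural conclusions (i)--(iii) then apply to $\mathcal{C}_i$ as well. Iterating over every $i \in \mathcal{L}$ delivers parts (a) and (b) in full generality.

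The main obstacle I expect is this last extension step. The algorithm terminates once rank $K$ is reached, so a single run need not visit every element of $\mathcal{L}$, and some care is needed to ensure that forcing a particular $\mathcal{C}_i$ early does not break the chain of equalities used to bound the accumulated support and rank, especially in the terminal ``$\nu_{\text{end}}$'' stage when $r \mid |\mathcal{I}_0|$. A secondary subtlety is that the statement allows $\dim(\mathcal{C}_i) \leq r\alpha$ rather than requiring equality, so one must be careful to conclude only erasure optimality (achieving q-dim $r$ in $r+\delta-1$ symbols) rather than maximal scalar dimension; this distinguishes the present result from the MDS-style conclusion in Theorem \ref{thm:scalar_info_locality_equality} and matches the MBR-local constructions of Section \ref{sec:mbr_local_codes}, which are exactly the codes this theorem is tailored to characterize.
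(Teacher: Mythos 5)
Your proposal follows the same route the paper intends: re-run the greedy algorithm from the proof of Theorem \ref{thm:kappa_bound}, force equality in the chain of inequalities because the code is $\mathcal{I}_0$-optimal and $r \mid |\mathcal{I}_0|$, read off support-disjointness and erasure optimality for the local codes actually picked, and then upgrade to all of $\mathcal{L}$ by the swap-and-rerun argument exactly as in Theorem \ref{thm:scalar_info_locality_equality}. Since the paper's own proof of Theorem \ref{thm:kappa_bound_equality} is precisely a reference to that scalar structure theorem, you are reproducing the intended argument.

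One point needs tightening, because as phrased it is self-contradictory. In step (ii) you assert that ``the incremental rank added upon appending $S_{i_j}$ equals $r\alpha$.'' That is not what the chain in the proof of Theorem \ref{thm:kappa_bound} controls, and it is false in exactly the regime this theorem is meant to cover: if the local codes are MBR codes, their scalar dimension is $r\alpha - \binom{r}{2}$, strictly less than $r\alpha$, so the incremental scalar rank cannot be $r\alpha$. Were the assertion true it would also force $\dim(\mathcal{C}_i) = r\alpha$, contradicting the theorem's own conclusion $\dim(\mathcal{C}_i) \leq r\alpha$ --- a discrepancy you yourself flag at the end. The correct quantity, as in the proof of the $\mathcal{I}_0$-bound, is the quasi-dimension increment $\nu_j = \text{q-dim}(\mathcal{C}|_{T_j}) - \text{q-dim}(\mathcal{C}|_{T_{j-1}})$, with the inequality $s_j \geq \nu_j + (\delta - 1)$ coming from Lemma \ref{lem:infosets_1}. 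Optimality and $r \mid |\mathcal{I}_0|$ force $J=t$, $\nu_j = r$ for $1 \leq j \leq J-1$ and $\nu_J = r-1$, and since a code of length at most $r+\delta-1$ with minimum distance at least $\delta$ has quasi-dimension at most $r$ by the erasure bound, $\nu_j = r$ yields $\text{q-dim}(\mathcal{C}_{i_j}) = r$, $|S_{i_j}| = r+\delta-1$, and $d_{\min}(\mathcal{C}_{i_j}) = \delta$ directly, with no detour through scalar rank. With this substitution your argument matches the paper's.
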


\begin{proof} The proof is along the same lines as the proof of Theorem \ref{thm:scalar_info_locality_equality} for the scalar case and hence omitted.
\end{proof}

\vspace{0.1in}

\subsection{Optimal Vector Codes with Locality through Stacking} \label{sec:stacking_info_locality}

By stacking $\alpha$ scalar codes with locality, one trivially obtains a vector code with locality.    More specifically, let $\mathcal{B}$ be a scalar local code having parameters $[n,k,d_{\min}]$ and let $\mathcal{C}$ be the code obtained by stacking $\alpha$ codewords, each drawn from $\mathcal{B}$, to obtain a codeword from $\mathcal{C}$.    It is straightforward to verify that $\mathcal{C}$ has code parameters $[n,K,d_{\min},\kappa]$ where $K=k \alpha$ and $\kappa=k$.   By numerically comparing the bounds \eqref{eq:bound_info_locality} and \eqref{eq:locality_looser_dimakis} on minimum distance in the scalar and vector case respectively, it follows that the vector code is optimal with information or all-symbol locality depending whenever the scalar code is optimal in the same sense. This observation is made formal in the following theorem.

\vspace{0.1in}

\begin{thm} \label{thm:vec_local_stacking}
For any set of parameters $n, \kappa, \alpha, r, \delta$, $\delta \geq 2$, the following optimal vector codes can be constructed via stacking:
\begin{enumerate}
\item an explicit $(r, \delta)$ information locality code,
\item an explicit $(r, \delta)$ all-symbol locality code, whenever $n=\lceil \frac{\kappa}{r} \rceil
(r+\delta-1)$,
\item a non-explicit $(r, \delta)$ all-symbol locality code, whenever $(r+\delta-1)|n$ and the field size $q \geq \kappa n^{\kappa}$.
\end{enumerate}
The minimum distance of all the three classes of codes is given by the equality in the $K$-bound.
\end{thm}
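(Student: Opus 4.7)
The plan is to reduce the vector-code claim to the already-established scalar-locality results of Section~\ref{sec:scalar_local}. Given parameters $n,\kappa,\alpha,r,\delta$, I would start from a scalar $[n,k=\kappa,d_{\min}]$ code $\mathcal{B}$ over $\mathbb{F}_q$ having $(r,\delta)$ information (respectively, all-symbol) locality. I would then form the vector code $\mathcal{C}$ by stacking: a codeword of $\mathcal{C}$ is a matrix whose $\alpha$ rows are arbitrary codewords of $\mathcal{B}$, and its $i$-th vector symbol $\mathbf{c}_i\in\mathbb{F}_q^{\alpha}$ is the $i$-th column of that matrix.

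Next I would verify the parameters and the locality. The block length is $n$; the scalar code $\mathcal{C}^{(s)}$ is the direct sum of $\alpha$ copies of $\mathcal{B}$, so $K=\kappa\alpha$. Any $\kappa$ coordinates that form an information set for $\mathcal{B}$ yield $\kappa$ thick columns of full rank $\kappa\alpha$ in $G$, so $\text{q-dim}(\mathcal{C})=\kappa$. Since a vector symbol $\mathbf{c}_i$ is zero iff the $i$-th coordinate of every row is zero, the Hamming weight of a non-zero codeword equals the maximum of the weights of its rows, hence $d_{\min}(\mathcal{C})=d_{\min}(\mathcal{B})$. For each local scalar code $\mathcal{B}|_{S_j}$ of $\mathcal{B}$, the corresponding puncturing $\mathcal{C}|_{S_j}$ is itself the stacking of $\alpha$ copies of $\mathcal{B}|_{S_j}$, whose Hamming minimum distance (as a vector code) equals $d_{\min}(\mathcal{B}|_{S_j})\geq\delta$, so $\mathcal{C}$ inherits $(r,\delta)$ information (resp.\ all-symbol) locality.

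I would then show $\mathcal{C}$ meets the $K$-bound~\eqref{eq:locality_looser_dimakis} with equality by substituting $K=\kappa\alpha$:
\begin{equation*}
n-\left\lceil\tfrac{K}{\alpha}\right\rceil+1-\left(\left\lceil\tfrac{K}{r\alpha}\right\rceil-1\right)(\delta-1)
\;=\;n-\kappa+1-\left(\left\lceil\tfrac{\kappa}{r}\right\rceil-1\right)(\delta-1),
\end{equation*}
which is exactly the scalar bound~\eqref{eq:bound_info_locality} for $\mathcal{B}$. Thus $d_{\min}(\mathcal{C})=d_{\min}(\mathcal{B})$ meets the $K$-bound with equality precisely when $\mathcal{B}$ meets the scalar locality bound with equality. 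Therefore optimality of $\mathcal{C}$ is equivalent to optimality of $\mathcal{B}$.

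Finally I would invoke the corresponding scalar-locality existence results for the three cases: for~(1), the pyramid construction of Section~\ref{sec:pyramid_codes} supplies an explicit optimal $\mathcal{B}$ with $(r,\delta)$ information locality for arbitrary parameters; for~(2), the parity-splitting construction of Section~\ref{sec:scalar_local} supplies an explicit optimal $\mathcal{B}$ with $(r,\delta)$ all-symbol locality whenever $n=\lceil\kappa/r\rceil(r+\delta-1)$; and for~(3), Theorem~\ref{thm:scalar_allsymbol_existence} guarantees a (non-explicit) optimal scalar all-symbol locality code $\mathcal{B}$ whenever $(r+\delta-1)\mid n$ and $q>\kappa n^{\kappa}$. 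Stacking each of these yields the three classes claimed in the theorem. There is no real obstacle here: once the trivial rank/weight bookkeeping for the stacking operation is written down, everything reduces to the already-proved scalar statements, and the only care needed is matching the ceiling expressions in the two bounds---which the substitution $K=\kappa\alpha$ does automatically.
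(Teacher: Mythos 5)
Your proposal takes the same route as the paper: reduce to the three scalar constructions (pyramid, parity-splitting, and the existential all-symbol code of Theorem~\ref{thm:scalar_allsymbol_existence}) via the stacking operation, check that $K=\kappa\alpha$, $\text{q-dim}=\kappa$ and $d_{\min}(\mathcal{C})=d_{\min}(\mathcal{B})$, and observe that substituting $K=\kappa\alpha$ collapses the $K$-bound to the scalar bound; the paper's proof is just a terser statement of this. One small slip: the Hamming weight of a stacked codeword is the cardinality of the \emph{union} of the supports of its $\alpha$ rows, not the maximum of the row weights; the identity $d_{\min}(\mathcal{C})=d_{\min}(\mathcal{B})$ still holds (the lower bound because the union contains the support of any nonzero row, the upper bound by taking a codeword with exactly one nonzero row), but the justification as written is not quite correct.
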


\begin{proof}
Each of the three classes of the codes are respectively obtained by stacking $\alpha$ independent codewords of the following classes of optimal scalar codes with locality:
\begin{enumerate}
\item pyramid codes which are explicit $(r, \delta)$ information locality codes,
\item parity splitting codes which are explicit $(r, \delta)$ all-symbol locality codes, whenever $n=\lceil \frac{k}{r} \rceil
(r+\delta-1)$,
\item $(r, \delta)$ all-symbol locality codes whose existence is known, whenever $(r+\delta-1) \mid n$ and the field size $q \geq kn^k$.
\end{enumerate}
\end{proof}

\subsection{A Class of Optimal and Explicit $(r, \delta)$ All-Symbol Locality Vector Codes}

An explicit construction for obtaining optimal codes with $(r, \delta)$ all-symbol locality, for the case of $\delta = 2$ was
presented in \cite{PapDim}. This construction has a straightforward extension for any arbitrary $\delta \geq 2$. The construction as well as its extension are described below.

\vspace{0.1in}

\begin{constr} \label{constr:dimakis}
Pick a message matrix $M$ of size $r \times k', k' > 0,$ such that $(r+1)|n$. The
encoding takes place in two stages:  in the first stage, the message matrix is encoded by a product code, wherein the row code is
chosen as an $[n, k']$ MDS code and the column code is a parity-check code. Let $\mathbf{c}'$ denote the $((r+1) \times n)$
codeword array obtained after the first stage. In the second stage, the set of all columns of $\mathbf{c}'$ is partitioned into
contiguous sets of size $r+1$ each and the $i^{\text{th}}, \ 1 \leq i \leq (r+1)$ row of each partition is cyclically
by $(i-1)$ scalar symbols. As an illustration, the first partition after the cyclic permutation would look like
\begin{eqnarray}
 \left[ \begin{array}{ccccc}
  c_{1,1} & c_{1,2} & c_{1,3} & \ldots & c_{1,r+1} \\
  c_{2,r+1} & c_{2,1} & c_{2,2} & \ldots & c_{2,r} \\
  \vdots & &&& \vdots \\
  c_{r+1,2} & c_{r+1,3} & c_{r+1,4} & \ldots &c_{r+1,1}
 \end{array} \right].
\end{eqnarray}
Note that the code has $\alpha = (r + 1)$. In this encoded structure, it is clear that every column of the codeword array is
locally covered by an $[r + 1,r, 2]$ code, in which the remaining $r$ columns come from the partition to which the column belongs to. For example, if the first column of the first partition fails, this column can be recovered by accessing the entire contents of all the remaining $r$ columns of this partition and computing the parities. Thus the code has $(r, \delta = 2)$-all-symbol locality. It was also shown that
whenever $(r+1) \nmid k'$, the minimum distance of the code is given by the equality condition in
\eqref{eq:locality_looser_dimakis}.

{\em The extension} In the extension, the requirement  $(r+1)|n$ is replaced by the condition $(r + \delta -1)|n$.  To construct a code with $(r, \delta)$-all-symbol locality for the general case $\delta \geq 2$, we just use an $[r + \delta - 1,
r, \delta]$ MDS code as the column code in place of the parity-check code, when building the product
code.  Thus in the second stage, the set of columns of  $\mathbf{c}'$ are
partitioned into contiguous sets of size $(r + \delta -1)$ each and a similar cyclic permutation is carried out as before. It is not hard to verify that this code has $(r, \delta)$-all-symbol locality. We summarize the above discussion about the construction,
its optimality and rate in the theorem below.
\end{constr}

\vspace{0.1in}

\begin{thm} \label{thm:dimakis_extn}
Given any $n, \kappa, r, \delta$ such that $(r + \delta - 1)|n$, Construction \ref{constr:dimakis} yields a vector code with
$(r, \delta)$-all-symbol locality, where the parameter $k'$ in Construction \ref{constr:dimakis} is chosen as
\begin{eqnarray} \label{eq:thm_dimakis_extn_temp1}
k' & = & \kappa + \left( \left\lceil\frac{\kappa}{r}\right\rceil - 1\right)(\delta -1 ).
\end{eqnarray}
The code has $\alpha = r+\delta - 1$, rate $\rho = \frac{k'r}{n\alpha}$ and minimum distance, $d_{\text{min}}$ achieving the $\kappa$-bound with equality,  given by \eqref{eq:locality_looser_kappa}.
\end{thm}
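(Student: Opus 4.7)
The plan is to verify three items in turn: the parameters $(\alpha, K, \rho)$, the $(r,\delta)$ all-symbol locality, and the minimum-distance formula. The first item is immediate from the construction: the column code has length $r+\delta-1$, so each thick column of $\mathbf{d}$ has $\alpha = r+\delta-1$ scalar entries; the message matrix is $r \times k'$, so $K = k'r$ and hence $\rho = k'r/(n\alpha)$. For locality, it suffices to show that every partition (contiguous block of $r+\delta-1$ thick columns) is a local code with length $r+\delta-1$ and minimum distance $\delta$. The key structural property of the cyclic shift by $(i-1)$ in row $i$ within each partition is the following: for any set $J^*$ of $\delta-1$ thick columns inside a partition, as $i$ ranges over $\{1,\dots,r+\delta-1\}$, the positions $\{\sigma_i(j): j \in J^*\}$ hit each column of the pre-shift sub-array $\mathbf{c}'$ restricted to that partition \emph{exactly} $\delta-1$ times. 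Therefore, any pattern of $\delta-1$ vector-symbol erasures in a partition corresponds to exactly $\delta-1$ scalar erasures in every column of the corresponding partition of $\mathbf{c}'$; since each such column is a codeword of the $[r+\delta-1, r, \delta]$ MDS column code, all these erasures are correctable. The local code thus has minimum distance at least $\delta$, and a direct dimension count gives $\text{q-dim} = r$.

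For the minimum distance I would first prove $d_{\min} \geq n - k' + 1$ by a weight-tracking argument: fix any nonzero codeword $\mathbf{c}'$ and pick a nonzero row $i_0$. Within each partition $P_p$, let $N_p$ count the nonzero thick columns of $\mathbf{d}$ in $P_p$, and let $W_p$ count the nonzero entries in row $i_0$ of $\mathbf{d}$ in $P_p$. The cyclic shifts permute entries within each row within each partition, so the Hamming weight of row $i_0$ of $\mathbf{d}$ equals that of row $i_0$ of $\mathbf{c}'$, which is at least $n - k' + 1$ by the row MDS code. Since distinct nonzero entries in one row lie in distinct thick columns, $N_p \geq W_p$; summing gives $\text{wt}_H(\mathbf{d}) = \sum_p N_p \geq \sum_p W_p \geq n - k' + 1$.

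Second, I would identify the quasi-dimension as $\kappa(\mathcal{C}) = \kappa$ (the input parameter) and appeal to the $\kappa$-bound \eqref{eq:locality_looser_kappa} for the matching upper bound. Writing $\kappa = ur + s$ with $0 \leq s < r$, an explicit information set of size $\kappa$ is obtained by picking $r$ thick columns in each of $u$ distinct partitions (which, by the column-code argument in part (b), recovers all of $\mathbf{c}'$ in those partitions and yields $r+\delta-1$ scalars per row) together with $s$ additional thick columns from one further partition (yielding $s$ more scalars per row). This gives $u(r+\delta-1)+s = k'$ known scalars per row, enough for row-MDS decoding, so $\kappa(\mathcal{C}) \leq \kappa$. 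Applying the $\kappa$-bound then gives $d_{\min} \leq n - \kappa + 1 - (\lceil \kappa/r\rceil - 1)(\delta - 1) = n - k' + 1$, matching the lower bound and showing that both the bound and the claimed value of $\kappa$ are tight.

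The main obstacle is the matching lower bound $\kappa(\mathcal{C}) \geq \kappa$, i.e., showing that no subset of fewer than $\kappa$ thick columns is an information set. The clean way is a ``cost vs.\ gain per row'' argument: a partition into which $c_p$ thick columns are selected contributes $\min(c_p, r) + \mathbf{1}[c_p \geq r](\delta-1)$ scalars per row of $\mathbf{c}'$, at cost $c_p$; the efficiency is strictly better when $c_p = r$ than when $0 < c_p < r$ or when $c_p > r$. Minimising total cost subject to delivering at least $k'$ scalars per row yields exactly $\kappa$. Making this precise -- in particular, ruling out more subtle recovery strategies that interleave row-code and column-code constraints across partitions -- is where the work lies, but it follows once one observes that without at least $r$ selected thick columns inside a partition, neither row MDS nor column MDS can extract more than one scalar per row per selected thick column from that partition.
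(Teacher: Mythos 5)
Your proposal follows the same overall architecture as the paper's proof: establish $d_{\min} \geq n-k'+1$ from the row-MDS structure, and obtain the matching upper bound from the $\kappa$-bound (Theorem \ref{thm:kappa_bound}) by identifying the quasi-dimension of the code with the input parameter $\kappa$. You supply considerably more detail than the paper in a few places — the Latin-square property of the diagonal cyclic shifts that underlies the locality claim, the explicit information set of size $\kappa$ yielding $\kappa(\mathcal{C}) \leq \kappa$, and a direct weight-tracking version of the lower bound in place of the paper's erasure phrasing. (A small arithmetic slip: when $r\mid\kappa$, i.e., $s=0$, the count $u(r+\delta-1)+s$ does not equal $k'$; it overshoots by $\delta-1$. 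The conclusion $\kappa(\mathcal{C})\leq\kappa$ still holds, since you merely need \emph{some} set of $\kappa$ thick columns to be information-complete, but the stated equality is wrong in that subcase.)

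The genuine gap is the step you yourself flag, $\kappa(\mathcal{C}) \geq \kappa$, which is also what the paper asserts rather than proves (it simply states a formula for the quasi-dimension as a ``fact''). Your ``cost vs.\ gain'' sketch is not yet a proof of this direction, and the reason is more than bookkeeping: the argument that a partition with $c_p<r$ selected thick columns ``contributes $c_p$ scalars per row'' is a statement about one-round iterative row-then-column erasure decoding, whereas quasi-dimension is determined by whether the \emph{linear system} $\mathbf{c}'|_{P(T)}=0$ has only the trivial solution. Linear (ML) erasure recovery over a product code can in principle succeed on patterns that defeat iterative decoding, so ``neither row MDS nor column MDS alone can extract more than one scalar per thick column'' does not by itself rule out a full-rank restriction $G|_T$ with $|T|<\kappa$. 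This is precisely the concern you name — ``more subtle recovery strategies that interleave row-code and column-code constraints across partitions'' — but the closing sentence does not actually dispose of it. Note also that the easy dimension count $\kappa(\mathcal{C}) \geq \lceil K/\alpha\rceil$, which suffices for $\delta=2$, is strictly weaker than $\kappa$ for $\delta>2$ (the paper's own example $r=4,\delta=3,k'=9$ has $\lceil K/\alpha\rceil=6<\kappa=7$), so one really does need a structural argument that goes beyond counting scalar coordinates. Until $\kappa(\mathcal{C})\geq\kappa$ is established by such an argument, the appeal to the $\kappa$-bound only yields $d_{\min}\leq n-\kappa(\mathcal{C})+1-(\lceil\kappa(\mathcal{C})/r\rceil-1)(\delta-1)$, which is \emph{weaker} than $n-k'+1$ whenever $\kappa(\mathcal{C})<\kappa$, and so does not close the sandwich.
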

\begin{proof}
First of all, note that the parameter $k'$ in \eqref{eq:thm_dimakis_extn_temp1} is chosen such that the code obtained
through Construction \ref{constr:dimakis} will have quasi-dimension $\kappa$. This follows from the fact if $k' = \theta(r +
\delta - 1) + \gamma, \ \theta >0, \ 0 \leq \gamma \leq r+ \delta - 2$, then the quasi dimension of the code obtained is
given by
\begin{eqnarray}
\kappa & = & \begin{cases}
              \theta r + \gamma, \ \text{if } 0 \leq \gamma \leq r -1, \\
             \theta r + r,  \ \text{if } r \leq \gamma \leq r +\delta - 2.
	    \end{cases}
\end{eqnarray}
Clearly, since each row of the code is $[n, k']$ MDS , any $n - k'$ erasures can be tolerated by the overall vector code
and hence the minimum distance of the code can be lower bounded as
\begin{eqnarray}
d_{\text{min}} & \geq & n - k' + 1 \\
	      & = & n - k + 1 - \left( \left\lceil\frac{\kappa}{r}\right\rceil - 1\right)(\delta -1 ).
\label{eq:thm_dimakis_extn_temp2}
\end{eqnarray}
Combining \eqref{eq:thm_dimakis_extn_temp2} with Theorem \ref{thm:kappa_bound},
the claim about
the minimum distance follows.
\end{proof}

\vspace{0.1in}

\begin{note} The following comments are in order regarding Construction \ref{constr:dimakis}.
\begin{enumerate}

\item Construction \ref{constr:dimakis}, whenever $\delta >2$, is an instance where the $K$-bound on the minimum
distance is not always achievable. For example, if $\delta =3, r = 4$ and
$k' = 9 = (r + \delta - 1) + 3$, then $\kappa = r + 3 = 7$. But $K = k' r = 36$ and hence $\frac{K}{\alpha} = \frac{36}{6} = 6 <
\kappa$. Thus, \eqref{eq:locality_looser_kappa} is strictly tighter than \eqref{eq:locality_looser_dimakis}.

\item Unlike the optimal codes presented in Theorem \ref{thm:vec_local_stacking}, the rate of the optimal code obtained via Construction \ref{constr:dimakis} can
be less than $\frac{\kappa}{n}$;  the parameters given above constitute an example.

\end{enumerate}
\end{note}

\bibliographystyle{IEEEtran}
\bibliography{CLG}

\appendices

\section{Proof of Theorem \ref{thm:scalar_info_locality}} \label{app:scalar_info_locality_proof}

We will make use the following lemma (see \cite{GopHuaSimYek}) in the proof:

\vspace{0.1in}

\begin{lem} \label{lem:scalar_fact_dmin}
Given any set $T\subseteq [n]$ such that $\text{rank}\left(G|_T\right) \leq k-1$, we have
\begin{eqnarray} \label{eq:fact_dmin}
d_{\text{min}} \leq n - |T|,
\end{eqnarray}
with equality iff $T \subseteq [n]$ is of largest size such that $\text{Rank}(G|_T) = k-1$.
\end{lem}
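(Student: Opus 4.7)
The lemma is a standard linear-algebra fact relating low-rank restrictions of the generator matrix to low-weight codewords, and my plan is to prove the inequality first and then handle the equality characterization via a maximality argument.

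For the inequality, I would observe that if $\text{rank}(G|_T)\leq k-1$, then the columns of $G$ indexed by $T$ span a proper subspace of $\mathbb{F}_q^k$, so there exists a nonzero row vector $u\in \mathbb{F}_q^k$ with $u\, G|_T = 0$. Setting $c = uG$ produces a codeword whose coordinates indexed by $T$ all vanish; hence $\text{supp}(c)\subseteq [n]\setminus T$, so $\text{wt}(c)\leq n-|T|$, and since $c\neq 0$ (because $G$ has rank $k$ and $u\neq 0$), we conclude $d_{\min}\leq n-|T|$.

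For the equality characterization, I would argue both directions. Suppose $T$ is of largest size with $\text{rank}(G|_T) = k-1$, and assume for contradiction that $d_{\min} < n-|T|$. Take a codeword $c$ with $\text{wt}(c) = d_{\min}$ and let $T' = [n]\setminus \text{supp}(c)$, so $|T'|>|T|$. Writing $c = uG$ with $u\neq 0$, we see $uG|_{T'}=0$ which forces $\text{rank}(G|_{T'}) \leq k-1$. If this rank were strictly less than $k-1$, I would extend $T'$ by appending columns one at a time: since $G$ has full row rank $k$, the rank eventually reaches $k$, so there is a last moment at which the rank equals $k-1$, yielding a set $T''\supseteq T'$ with $|T''|\geq |T'|>|T|$ and $\text{rank}(G|_{T''})=k-1$, contradicting the maximality of $T$. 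Conversely, if $d_{\min}= n-|T|$ but $T$ is not of largest size with rank $k-1$, the same extension procedure produces $T''$ with $|T''|>|T|$ and $\text{rank}(G|_{T''})\leq k-1$, whence the inequality direction gives $d_{\min}\leq n-|T''|<n-|T|$, a contradiction.

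The argument is essentially bookkeeping, and there is no serious obstacle; the only subtlety is to be careful with the distinction between $\text{rank}(G|_T) \leq k-1$ and $\text{rank}(G|_T) = k-1$, which is cleanly handled by the ``extend one column at a time until the rank jumps from $k-1$ to $k$'' trick. Since this lemma is quoted from \cite{GopHuaSimYek}, one could also simply invoke it, but the self-contained proof above is short enough to include if desired.
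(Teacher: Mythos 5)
Your proof is correct. The paper does not actually supply a proof of this lemma --- it cites it as a known fact from~\cite{GopHuaSimYek} --- so there is no internal argument to compare against. Your argument is the standard one: a low-rank restriction $G|_T$ yields a nonzero row vector $u$ in its left kernel, hence a nonzero codeword $uG$ supported off $T$, giving the inequality; and the maximality characterization follows by the ``extend one column at a time until the rank jumps to $k$'' device. The one minor imprecision is in the converse direction of the equality characterization: if $\text{rank}(G|_T)=k-1$ but $T$ is not of maximal size among such sets, the ``same extension procedure'' starting from $T$ does not directly manufacture the larger set (the rank might jump to $k$ on the very next column); rather, the existence of a larger set $T'''$ with $\text{rank}(G|_{T'''})=k-1$ is immediate from the negation of maximality, and the inequality part then gives $d_{\min}\le n-|T'''|<n-|T|$. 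This is a wording slip, not a gap --- the underlying logic is sound, and the lemma as you have proved it is exactly what the paper uses.
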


\vspace{0.1in}

\begin{proof}[Proof of Theorem \ref{thm:scalar_info_locality}]

Assume that we are given an $[n,k,d_{\min}]$ scalar code $\mathcal{C}$ which  has $(r, \delta)$ information
locality. As in the proof of Theorem $5$ in \cite{GopHuaSimYek}, we will construct using Algorithm 1 below, a set $T
\subseteq [n]$ such that $\text{rank}\left(G|_T\right) \leq k-1$ and then apply Lemma \ref{lem:scalar_fact_dmin} to get the
required result. We define $\forall i \in \mathcal{L}, {V_i} = \text{Col}(G|_{S_i})$, the column space of the matrix
$G|_{S_i}$.

\begin{algorithm}
\caption{Used in the Proof of Theorem \ref{thm:scalar_info_locality}} \label{alg:scalar_info_locality}
\begin{algorithmic}[1]
\State Let  $T_0 = \{\ \}, \ \ j = 0$
\While {$1$}
    \State Pick $i \in \mathcal{L}$ such that $V_i \nsubseteq \text{Col}(G|_{T_j})$
    \If {$\text{Rank}\left(G|_{T_j \cup S_i} \right) \leq k-1 $}
      \State $j = j + 1$
      \State $T_{j} = T_{j-1} \cup S_i $
    \ElsIf {$\text{Rank}\left(G|_{T_j \cup S_i} \right) = k $}
      \State Pick any maximal subset $S''$ of $S_i$ such that  $\text{rank}\left(G|_{T_j \cup S''} \right) = k-1$
      \State $S_{\text{end}} =S_{i} $
      \State $ j = j + 1$
      \State $T_{j} = T_{j-1} \cup S'' $
      \State Exit
    \EndIf
\EndWhile
\end{algorithmic}
\end{algorithm}

With respect  to the $j^{\text{th}}$ iteration of Algorithm 1, note that as long as
$\text{Rank}\left(G|_{T_j} \right) \leq k-1 $, one can always pick an $i \in \mathcal{L}$ such that $V_i \nsubseteq
\text{Col}(G|_{T_j})$. Let the algorithm exit after $J$ iterations, i.e., $j = J$ when the algorithm exits. Note that
necessarily, $\text{Rank}\left(G|_{T_{J-1} \cup S_{\text{end}}} \right) = k$. Clearly, as each local code has dimension at
most $r$, it must then be true that
\begin{eqnarray} \label{eq:scalar_info_locality_proof1}
 J & \geq & \left\lceil \frac{k}{r} \right\rceil.
\end{eqnarray}
Next, for $j \in [J]$, let
\begin{eqnarray}
s_j & = & |T_j| - |T_{j-1}| \ , \nonumber \\
\nu_j & = & \text{dim}(\mathcal{C}|_{T_j}) - \text{dim}(\mathcal{C}|_{T_{j-1}}) \ = \ \text{Rank}\left(G|_{T_j}\right)- \text{Rank}\left(G|_{T_{j-1}}\right). \label{eq:localitythmproof_temp6}
\end{eqnarray}
We claim that for $j \in [J-1]$,
\begin{eqnarray} \label{eq:locality_proof_condition1}
s_j & \geq & \nu_j + (\delta - 1).
\end{eqnarray}
This follows because the local codes have minimum distance at least equal to $\delta$ and hence, the last $\delta-1$
code symbols of a local code are known given the rest of the code symbols. We also have that,
\begin{eqnarray}
 s_J & \geq & \nu_J. \label{eq:locality_proof_condition2}
\end{eqnarray}
Summing up, we obtain that
\begin{eqnarray}
 |T_J| \ = \  \sum_{1 = 1}^{J}s_j & \geq &  \sum_{1 = 1}^{J}\nu_j \ + \ (J-1)(\delta - 1) \label{eq:scalar_locality_proof_temp2}\\
 & \geq &  \sum_{1 = 1}^{J}\nu_j \ + \ \left(\left\lceil \frac{k}{r} \right\rceil-1\right)(\delta - 1)
\label{eq:scalar_locality_proof_temp3}\\
& = & k-1 \ + \ \left(\left\lceil \frac{k}{r} \right\rceil-1\right)(\delta - 1),
\label{eq:scalar_locality_proof_temp4}
\end{eqnarray}
The result then follows from an application of Lemma \ref{lem:scalar_fact_dmin}.

\end{proof}

\section{Proof of Theorem \ref{thm:scalar_info_locality_equality}} \label{app:scalar_info_locality_equality_proof}

(a) For $d_{\min}$ to achieve \eqref{eq:bound_info_locality}, we need that \eqref{eq:scalar_locality_proof_temp2} and \eqref{eq:scalar_locality_proof_temp3} must be satisfied with equality.  We reproduce the chain of inequalities for the case when $r \mid k$ here for the sake of convenience:
\begin{eqnarray}
 |T_J| \ = \  \sum_{j = 1}^{J}s_j & \stackrel{(i)}{\geq} &  \sum_{j = 1}^{J}\nu_j \ + \ (J-1)(\delta - 1) , \\
 & \stackrel{(ii)}{\geq} &  \sum_{j = 1}^{J}\nu_j \ + \ \left( \frac{k}{r} -1\right)(\delta - 1) , \\
& = & k-1 \ + \ \left( \frac{k}{r} -1\right)(\delta - 1).
\end{eqnarray}
Optimality of the code with respect to the bound on $d_{\min}$ in \eqref{eq:bound_info_locality} implies that equality holds
in both inequalities, $(i)$ and $(ii)$, above. Equality in (i), coupled with the fact that $s_j \geq \nu_j + \delta-1,  \ 1
\leq j \leq J-1$ and $s_J \geq \nu_J$ imply that
\bean
s_j & = & \begin{cases} \nu_j +\delta-1, & \mbox{if } 1 \leq j \leq J-1, \\ \nu_J, & \mbox{if } j=J.\end{cases}
\eean
Equality in (ii), coupled with the fact that $J \geq \frac{k}{r} $ gives us that
\bean
J &=& \frac{k}{r},\\
\implies k &=& Jr.
\eean
Thus,
\bean
k -1 & = & \sum_{1 = 1}^{J-1}\nu_j  + \nu_J \ = \ (J-1)r + r-1.
\eean
Coupled  with the fact that $ \nu_j \leq r,  \ 1 \leq j \leq J-1$ and $\nu_J \leq r-1$, we obtain
\bean
\nu_j & = & \begin{cases} r, & \mbox{if } 1 \leq i \leq J-1, \\ r-1, & \mbox{if } i=J.\end{cases}
\eean
This leads to
\bea \label{eq:proof_struc_thm_scalar_temp1}
s_j & = & \begin{cases} r +\delta-1, & \mbox{if } 1 \leq i \leq J-1, \\ r-1, & \mbox{if } i=J.\end{cases}
\eea
Also note that since Algorithm \ref{alg:scalar_info_locality}, can start with any local code, all local codes have length $n_L=r+\delta-1$ and dimension $r$ and hence are $[r+\delta-1,r,\delta]$ MDS codes thus proving (a).

(b) First of all, note from \eqref{eq:proof_struc_thm_scalar_temp1} that with the possible exception of the last code
picked by the algorithm, all the remaining codes must have pairwise, disjoint support.  The equality $\nu_J=r-1$ implies that
the increase in dimension resulting from replacing $S''$ by $S_{\text{end}}$ would $r$ as opposed to $(r-1)$ with
$S''$. Now if the last code overlapped with the union of the rest, since the last $\delta -1$ code symbols of any local
code are dependent on its first $r$ symbols, the increase in dimension due to $S_{\text{end}}$ cannot be $r$. It follows that
the last code $\mathcal{S}_J$ picked by the algorithm must also have support that is disjoint from the support of any of the
prior local codes $\mathcal{C}_i$, $1 \leq i \leq J-1$ picked by the algorithm. Thus in summary, all the codes picked by
Algorithm \ref{alg:scalar_info_locality} must have pairwise, disjoint support.

It remains only to show that the collection of local codes in ${\cal L}$ are support disjoint, even when we include local codes belonging to ${\cal L}$, but not encountered by Algorithm~\ref{alg:scalar_info_locality}.   We note first that for $i_1, i_2 \in \mathcal{L}$, $i_1 \neq i_2$, if $V_{i_1}\neq V_{i_2}$, we could pick $\mathcal{C}_{i_1}$ and $\mathcal{C}_{i_2}$ (not necessarily in that order) as the first two codes used in Algorithm \ref{alg:scalar_info_locality} and hence obtain that
$$S_{i_1} \cap S_{i_2} = \phi.$$

Thus we only have consider the case when two distinct local codes, say ${\cal C}_{i_1}, {\cal C}_{i_2}$ are such that
$V_{i_1}=V_{i_2}$.   If we were to run the algorithm, beginning with the code ${\cal C}_{i_1}$, at the conclusion of the
algorithm, we would obtain a set $T_J$ such that $\text{Rank}(G|_{T_J}) = k-1$ and $|T_J|=k-1 + (\frac{k}{r}-1)(\delta -
1)$.  This follows, because as the code is optimal, every instance of the algorithm must yield a set $T_J$ satisfying
\eqref{eq:scalar_locality_proof_temp2} and \eqref{eq:scalar_locality_proof_temp3} with equality.  Let $\mathcal{T}$ be the
set of all local codes encountered by the algorithm in this case. We could then replace $T_J$ by
 \bean
 T_J'=T_J \cup S_{i_2},
 \eean
and since $V_{i_1}=V_{i_2}$, we would obtain that even with this augmentation of support, $\text{Rank}(G|_{T_J'}) = k-1$. If
$S_{i_2} \nsubseteq \cup_{j\in \mathcal{T}}S_j $, we would have have that $ |T_J'| > k-1 + (\frac{k}{r}-1)(\delta - 1)$.  But
this would imply a tighter bound on $d_{\min}$, which would contradict our assumption that the code under consideration
satisfies the earlier bound on $d_{\min}$. On the other hand, if $S_{i_2} \subseteq \cup_{j\in \mathcal{T}}S_j $, then we can
start the algorithm with  ${\cal C}_{i_2}$, and pick the same sequence of codes we picked when we started the
algorithm beginning with $\mathcal{C}_{i_1}$. This can be done as $V_{i_1}=V_{i_2}$. But this
would lead us to conclude that the support $S_{i_2}$ of the code $\mathcal{C}_{i_2}$ and
 \bean
\cup_{j\in \mathcal{T}}S_j  \setminus S_{i_1},
 \eean
were disjoint, which by $S_{i_2} \subseteq \cup_{j\in \mathcal{L}}S_j $, would then force
\bean
S_{i_1} & = & S_{i_2}.
\eean
But this would imply that $\mathcal{C}_{i_1}=\mathcal{C}_{i_2}$ contradicting our earlier assumption that the codes were distinct.  It follows that all the local codes $\mathcal{C}_i, i \in \mathcal{L}$ have disjoint supports.

(c) To prove the third assertion in the theorem, first of all note that if $i_1, i_2, \cdots, i_{t} \in
\mathcal{L}$ are such that $\mathcal{C}_{i_j}$ is picked by the Algorithm \ref{alg:scalar_info_locality} in the
$j^{\text{th}}$ step, then it must be that
\bea \label{eq:rank_disjoint_scalar}
V_{i_{t}} \cap \left( \displaystyle\sum_{j=1 }^{t-1}V_{i_j} \right) =\mathbf{0}
\eea
simply because
\bean
\text{dim}\left( \displaystyle\sum_{j=1 }^{t}V_{i_j} \right) &=& k \ = \ rt.
\eean
and $ \text{dim}\left(V_{i} \right) \leq r, \forall \  i \in \mathcal{L}$. It remains to be proved that the assertion is
true for any set of ordered set of indices  $i_1,i_2, \cdots, i_{t}$ belonging to $\mathcal{L}$. If we can show that
it is possible for the algorithm to proceed in such a way that $\mathcal{C}_{i_j}$ is the local code picked in the $j$th
step, then from \eqref{eq:rank_disjoint_scalar}, the assertion would be proved.

Assume that we are given an ordered set of indices  $i_1,i_2, \cdots, i_{t} \in \mathcal{L}$ and that it is not possible for the algorithm to proceed in such a way that $\mathcal{C}_{i_j}$ is the local code picked in the $j$th step.  Let $m$ the first index at which the algorithm runs into trouble, i.e., the algorithm is unable to pick the code $\mathcal{C}_{i_m}$ during the $m$th iteration.  This can happen only if
\bean
V_{i_m} \subseteq  \sum_{j=1}^{m-1}V_{i_j} .
\eean
Suppose next, that the algorithm were allowed to proceed beyond the $(m-1)$th step by picking indices without restriction (as
opposed to picking them from the given sequence $\{i_{m+1},i_{m+2},..i_t\}$) until eventually a set $T_J$ was obtained such
that $\text{Rank}(G|_{T_J}) =  k-1$ and $|T_J|=k-1 + (t-1)(\delta - 1)$. We could then replace $T_J$ by
 \bean
 T_J'=T_J \cup S_{i_m},
 \eean
 while maintaining the property that $\text{Rank}(G|_{T_J'}) = k-1$.  From part (a) of the theorem  we have that all local
codes are support disjoint and it follows therefore that
 $$|T_J'|=|T_J|+r+\delta -1 > k-1 + (t-1)(\delta - 1).$$
This would imply a tighter bound on $d_{\min}$ and we have once again arrived at a contradiction.   It follows that given any set of indices  $i_1,i_2, \cdots, i_{t} \in \mathcal{L}$, it is possible for the algorithm to proceed in such a way that $\mathcal{C}_{i_j}$ is the local code picked in the $j^{\text{th}}$ step.  This concludes the proof.

\section{Proof of Theorem \ref{thm:scalar_allsymbol_existence}}  \label{app:scalar_allsymbol_existence_proof}

The proof here is similar to the proof of Theorem $17$ of \cite{GopHuaSimYek}.  We will state a couple of definitions and a lemma from \cite{GopHuaSimYek}, which will be useful in proving the existence of optimal codes with all-symbol locality.

\vspace{0.1in}

\begin{defn}[$k$-core \cite{GopHuaSimYek}]
Let $L$ be a subspace of $\mathbb{F}_q^n$ and $S \subseteq [n]$ be a set of size $k$. $S$ is said to be a $k$-core
for $L$ if for all non-zero vectors $v \in L$, $\text{Supp}(\mathbf{v}) \nsubseteq S$.
\end{defn}

\vspace{0.1in}

In our application to codes, $L$ will frequently denote the dual $\mathcal{C}^{\perp}$ of a linear code $\mathcal{C}$ of length $n$.  In this setting, we note that saying that $S$ is a $k$-core for $\mathcal{C}^{\perp}$ is equivalent to saying that the $k$ columns of the generator matrix of the code $\mathcal{C}$ corresponding to $S$ are linearly independent.

\vspace{0.1in}

\begin{defn}[Vectors in General Position Subject to $L$ \cite{GopHuaSimYek}]
 Let $L$ be a subspace of $\mathbb{F}_q^n$. Let $G = [\mathbf{g}_1, \cdots , \mathbf{g}_n ]$ be a $(k \times n)$ matrix over $\mathbb{F}_q$. The columns of $G$, $\{\mathbf{g}_i \}_{i=1}^n$ are said to be in general position with respect to $L$ if:
\begin{itemize}
 \item Row space of $G$, denoted by Row$(G) \subseteq L^\perp$.
 \item For all $k$-cores $S$ of $L$, we have $\text{Rank}(G|_S ) = k$.
\end{itemize}
\end{defn}

\vspace{0.1in}

\begin{lem}[Lemma 14 of \cite{GopHuaSimYek}] \label{exist}
 Let $n,k,q$ be such that $q > kn^k$. Let $L$ be a subspace of $\mathbb{F}_q^n$ and $ 0 < k \leq n - \text{dim}(L)$.
 Then $\exists$ a set of vectors $\{\mathbf{g}_i\}_{i=1}^n$ in $\mathbb{F}_q^k$
that are in general position with respect to $L$.
\end{lem}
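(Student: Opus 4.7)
The plan is to prove existence by a counting/union-bound argument calibrated to the threshold $q > kn^k$ in the hypothesis. Let $\ell = \dim(L)$ and fix any $(n-\ell) \times n$ generator matrix $M$ for $L^\perp$. Any $k \times n$ matrix $G$ whose row space lies in $L^\perp$ factors uniquely as $G = AM$ for some $A \in \mathbb{F}_q^{k \times (n-\ell)}$, so I would parameterize the candidate $G$'s by the $q^{k(n-\ell)}$ choices of $A$. The columns of such a $G$ are $\mathbf{g}_j = A \mathbf{m}_j$, and the first condition of being in general position, $\text{Row}(G) \subseteq L^\perp$, holds automatically.

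Next I would translate the $k$-core condition into a rank statement about $M$. By standard linear-algebra duality, a size-$k$ subset $S \subseteq [n]$ is a $k$-core of $L$ iff the coordinate projection $L^\perp \to \mathbb{F}_q^S$ is surjective, iff the $(n-\ell) \times k$ submatrix $M|_S$ has rank $k$. For such an $S$, the required condition $\text{Rank}(G|_S) = k$ becomes $\det(A \cdot M|_S) \neq 0$. Viewed as a function of the $k(n-\ell)$ entries of $A$, this determinant is a polynomial of total degree $k$; since $M|_S$ has rank $k$ it admits a left inverse, so some $A_0$ realizes $A_0 \cdot M|_S = I_k$ and the polynomial is not identically zero.

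I would then invoke the Schwartz--Zippel bound: a non-zero polynomial of degree $k$ over $\mathbb{F}_q$ in $k(n-\ell)$ variables vanishes on at most $k \cdot q^{k(n-\ell) - 1}$ points. The number of $k$-cores is at most $\binom{n}{k} \leq n^k$, so a union bound shows that the number of $A$ failing the general-position condition at some $k$-core is at most $k n^k \cdot q^{k(n-\ell) - 1}$. The hypothesis $q > kn^k$ is exactly the statement that this is strictly less than $q^{k(n-\ell)}$, so some $A$ avoids every bad event, and the corresponding $G = AM$ puts its columns in general position with respect to $L$.

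The main obstacle is the duality identification in the second step---that a size-$k$ set is a $k$-core of $L$ iff the corresponding $k$ columns of a generator matrix $M$ of $L^\perp$ are linearly independent. Once this is in place, the rest is a routine degree-times-support count designed to precisely absorb the $kn^k$ factor. An edge case worth flagging is when $L$ admits no $k$-core at all (for instance, when $L$ contains a non-zero vector of support at most $k$): then the second condition is vacuous and any $A$ with $G = AM$ works, so the hypothesis $q > kn^k$ is only actually used in the non-trivial case where $k$-cores exist.
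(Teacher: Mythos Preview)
Your argument is correct. The parameterization $G = AM$ via a generator matrix $M$ of $L^\perp$, the duality identification ``$S$ is a $k$-core of $L$ iff the columns $M|_S$ are independent'' (which follows since $L$ is the null space of $M$, so a nonzero $v \in L$ with support in $S$ exists iff $M|_S$ has a nontrivial kernel), the degree-$k$ Schwartz--Zippel bound on $\det(A\,M|_S)$, and the union bound over $\binom{n}{k} \leq n^k$ subsets are all sound and combine exactly to the threshold $q > kn^k$.

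Note, however, that the paper does not actually supply a proof of this lemma: it is quoted verbatim as Lemma~14 of \cite{GopHuaSimYek} and used as a black box in the proof of Theorem~\ref{thm:scalar_allsymbol_existence}. Your proof is essentially the one given in \cite{GopHuaSimYek}, so there is nothing to contrast; you have reconstructed the standard argument.
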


\vspace{0.1in}

Using the above lemma, we will now prove the existence of optimal $(r,\delta)$ codes when $(r+\delta-1) \mid n$.

\vspace{0.1in}

Let $n = (r+\delta-1)t$. Let $\{P_1, \cdots, P_t\}$ be a partition of $[n]$, where $|P_i| = r+\delta-1, 1\leq i \leq t$. Let $Q_i$ be the parity check matrix of an $[r+\delta-1,r,\delta]$ MDS code with support $P_i$. Consider the block-diagonal matrix
\begin{equation}
H'_{t(\delta -1) \times n}
	      = \begin{bmatrix}
               Q_1 &    &  &  \\
	          & Q_2 &  &  \\
	        &   & \ddots &  \\
	          &    &  &Q_t  \\
             \end{bmatrix}.
\end{equation}

Let $L=\text{Rowspace}(H')$. For any $[n,k,d_{\min}]$ code with $(r,\delta)$ all-symbol locality,  \eqref{eq:bound_info_locality} along with the fact that for any $(r,\delta)$ all-symbol locality code, one has that
\bean
d_{\min} - \delta & \geq & 0,
\eean
gives us that
\begin{equation}
 n-k  \geq  \left\lceil \frac{k}{r} \right\rceil(\delta -1) \geq \frac{k}{r}(\delta -1). \nonumber \\
\end{equation}
Rearranging the above equation, we get
\begin{equation}
k \leq n - t(\delta-1),
\end{equation}
which by Lemma \ref{exist}, tells us that the first requirement for the existence of a $k$-core has been met. Thus, from Lemma \ref{exist}, for large enough field size,  $\exists \ \{\mathbf{g}_i\}_{i=1}^{n}$, $\mathbf{g}_i \in \mathbb{F}_q^k$ which are in general position with respect to $L$. Now consider the code $\mathcal{C}$ whose generator matrix $G_{k\times n}=[\mathbf{g}_1 \cdots \mathbf{g}_n]$. Clearly, $\mathcal{C}$ is an $[n,k]$ code.
Also  $\mathcal{C}$ has $(r,\delta)$ all-symbol locality,  as each co-ordinate of $\mathcal{C}$ has  an $[r+\delta-1,\leq r,\geq \delta]$  punctured code checking on it, whose parity check matrix contains one of $Q_i \ ; \ 1 \leq i \leq t$.

It remains to prove that $d_{\min}(\mathcal{C}) = d_{\min}$ is given by the equality condition in \eqref{eq:bound_info_locality}. Towards this, we will show that for any set $S \subseteq [n]$ such that $\text{Rank}(G|_S) \leq k-1$, it must be true that
\begin{equation} \label{eq:S_cardinality_allsymbloc}
 |S| \leq k-1+(\delta-1)\left( \displaystyle \lceil \frac{k}{r} \rceil -1 \right).
\end{equation}
Assuming this to be the case, it follows that the minimum distance of the code $\mathcal{C}$ satisfies
\begin{equation}
 d_{\min}= n- \max_{\substack{S \subseteq [n] \\ \text{Rank}(G|_S) \leq k - 1}} |S| \geq
n-k+1 - \left( \displaystyle \lceil \frac{k}{r} \rceil -1 \right)(\delta-1).
\end{equation}
Combining the above equation and \eqref{eq:bound_info_locality}, it follows that the code ${\cal C}$ has the distance given in the theorem statement.

It remains to prove \eqref{eq:S_cardinality_allsymbloc}.  Towards this, let $S \subseteq [n]$ be such that $\text{Rank}(G|_S)
\leq k-1$. Clearly, $S$ does not contain a $k$-core, $k$ being the dimension of $\mathcal{C}$.   Note that if $S$ were such
that \bean
|S\cap P_i| \leq r \ \ \forall \ i  \in [t],
\eean
it would then follow that $S$ contains a $k$-core.  Thus there exists $\text{some } i \in [t]$ such that $|P_i \cap S| \geq r+1$.

Define
\begin{equation*}
 b_{\ell} := \left|\left\{ i \in [t] \arrowvert \  |P_i \cap S| = r+\ell \right\}\right| \ \ \ 1 \leq \ell \leq \delta -1.
\end{equation*}
For $1 \leq \ell \leq \delta -1$, consider the set, $S_{\ell}$, obtained from $S$ by dropping $\ell$ elements of $S$ from
each of the $b_{\ell}$ sets $\{P_i | \ |P_i \cap S| = r + \ell\}$. Clearly,
the set $\displaystyle \cap_{1 \leq \ell \leq \delta -1}S_{\ell}$ is an $|S|-b_1-2b_2-\cdots-(\delta-1)b_{\delta-1}$ core contained in $S$ and thus as $S$ does not contain a $k$-core,
\begin{equation} \label{eq:nc1}
 |S|-(\delta-1)(\sum_{i=1}^{\delta-1} b_i)  \leq |S|-b_1-2b_2-\cdots-(\delta-1)b_{\delta-1} \leq k-1.
\end{equation}
Also if we pick $r$ co-ordinates from each $P_i$ which is such that $|P_i \cap S| \geq r + 1$, we get a $(r)(\sum_{i=1}^{\delta-1} b_i)$-core contained in $S$.
Thus as $S$ does not contain a $k$-core,
\begin{equation} \label{eq:nc2}
 \sum_{i=1}^{\delta-1} b_i \leq \left\lfloor \frac{k-1}{r} \right\rfloor = \left \lceil \frac{k}{r} \right \rceil -1.
\end{equation}
 Combining \eqref{eq:nc1} and \eqref{eq:nc2}, we have
\begin{equation}
 |S| \leq k - 1 + \left(  \left \lceil \frac{k}{r} \right \rceil -1 \right)(\delta-1).
\end{equation} 

\section{Proof of Theorem \ref{thm:msr_info_locality_existence}} \label{app:msr_infolocality_existence_proof}

 We begin with a useful lemma.

\begin{lem}[Combinatorial Nullstellensatz (Thm. 1.2 of \cite{Alo})] \label{lem:null}
Let $\mathbb{F}$ be a field, and let $f = f(x_1,\dots,x_n)$ be a polynomial in $\mathbb{F}[x_1,\dots,x_n]$.
Suppose the degree deg$(f)$ of f is expressible in the form $\displaystyle\sum_{i=1}^{n}t_i$, where each $t_i$ is a non-negative integer and suppose
that the coefficient of the monomial term $\displaystyle\prod_{i=1}^{n}x_i^{t_i}$ in $f$ is nonzero. Then, if $S_1, \dots ,S_n$ are subsets of
$\mathbb{F}$ with sizes $|S_i| $ satisfying $|S_i| > t_i$ , then there exist elements $s_1 \in S_1, s_2 \in S_2, \dots, s_n \in S_n$ such that
\begin{eqnarray*}
 f(s_1,s_2,\dots,s_n) & \neq & 0.
\end{eqnarray*}
\end{lem}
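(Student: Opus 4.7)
The plan is to establish this classical result by contradiction, reducing to a simpler vanishing lemma for polynomials whose partial degrees are individually bounded. First I would prove the auxiliary fact: if $g \in \mathbb{F}[x_1,\ldots,x_n]$ satisfies $\deg_{x_i}(g) \leq t_i$ for every $i$ and $g(s_1,\ldots,s_n) = 0$ for all $(s_1,\ldots,s_n) \in S_1 \times \cdots \times S_n$ with $|S_i| > t_i$, then $g$ is identically zero. This proceeds by induction on $n$: the base case $n=1$ is the familiar statement that a univariate polynomial of degree at most $t_1$ with more than $t_1$ roots must be zero, and for the inductive step one writes $g$ as a polynomial in $x_n$ whose coefficients lie in $\mathbb{F}[x_1,\ldots,x_{n-1}]$, applies the univariate fact at each fixed $(s_1,\ldots,s_{n-1}) \in S_1 \times \cdots \times S_{n-1}$ to see that each coefficient vanishes on the $(n{-}1)$-dimensional grid, and then invokes the induction hypothesis.

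Next I would suppose for contradiction that $f(s_1,\ldots,s_n) = 0$ for every $(s_1,\ldots,s_n) \in S_1 \times \cdots \times S_n$. Without loss of generality I may pass to subsets and assume $|S_i| = t_i + 1$ exactly. For each $i$, define $g_i(x_i) = \prod_{s \in S_i}(x_i - s)$, a monic polynomial of degree $t_i+1$ that vanishes on $S_i$; hence $x_i^{t_i+1} = g_i(x_i) + h_i(x_i)$ with $\deg h_i \leq t_i$, and at any point of $S_i$ one has $x_i^{t_i+1} = h_i(x_i)$. I would then reduce $f$ by repeatedly replacing a factor $x_i^{t_i+1}$ inside any monomial by $h_i(x_i)$ whenever some monomial has $x_i$-degree exceeding $t_i$. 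This rewriting strictly decreases the multi-degree in a well-founded order, so it terminates at a polynomial $\tilde f$ with $\deg_{x_i}(\tilde f) \leq t_i$ for all $i$, and $\tilde f$ agrees with $f$ at every point of $S_1 \times \cdots \times S_n$ because at such points the substitutions are identities. Since $f$ vanishes on the grid, so does $\tilde f$, and the auxiliary lemma forces $\tilde f \equiv 0$.

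The crux of the argument is to show that the coefficient of the distinguished monomial $\prod_i x_i^{t_i}$ in $\tilde f$ equals its coefficient in $f$; combined with $\tilde f \equiv 0$, this contradicts the hypothesis that this coefficient is nonzero in $f$. The key observation is that $\deg(f) = \sum_i t_i$, so every monomial of $f$ has total degree at most $\sum_i t_i$, while each replacement $x_i^{t_i+1} \mapsto h_i(x_i)$ strictly decreases the $x_i$-degree by at least one and does not increase the total degree. A careful tracking shows that the only way to produce the target monomial $\prod_i x_i^{t_i}$ during reduction is from the original $\prod_i x_i^{t_i}$ term itself, which is never modified because its $x_i$-degree already equals $t_i$ for every $i$: any other monomial of top total degree $\sum_i t_i$ must have some $x_j$-exponent strictly exceeding $t_j$, and any substitution applied to it strictly lowers the total degree, while monomials of total degree strictly less than $\sum_i t_i$ can never reach $\sum_i t_i$ by such reductions. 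I expect this bookkeeping, namely verifying that no collateral monomial contributes to the coefficient of $\prod_i x_i^{t_i}$ after reduction, to be the main obstacle; I would handle it by inducting on the number of reduction steps and using the total-degree bound $\deg f = \sum_i t_i$ as the invariant that rules out every alternative contribution.
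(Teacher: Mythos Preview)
Your proposal is correct and follows the standard proof due to Alon. Note, however, that the paper does not supply its own proof of this lemma: it is stated with attribution to \cite{Alo} (Theorem~1.2 there) and used as a black box, so there is nothing to compare against beyond observing that your outline is precisely Alon's original argument.
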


\vspace{0.15in}

Set
\bean
\nu & = & mr+(m-1)(\delta-1), \\
\Delta & = & n- mn_L.
\eean
Let $\mathcal{C}_L$ be an $((n_L,r,d), (\alpha,\beta))$ MSR code and let $G_L$ be an $(r\alpha \times n_L\alpha)$ generator matrix for $\mathcal{C}_L$. Let $m \geq 2$ be an integer.  Consider the vector code
$\mathcal{C}$, generated by the $(m r\alpha \times n \alpha)$ generator matrix
\begin{eqnarray} \label{eq:last_label_for_Q}
 G & = & \left[ \begin{array}{ccc|c} G_L & &  & \\ & \ddots & & Q \\ & & G_L & \end{array} \right],
\end{eqnarray}
in which $Q$ is an $(mr \alpha \times \Delta \alpha)$ matrix over $\mathbb{F}_q$ and the block matrix $G_L$ appears $m$ times along the diagonal. It is clear that the code $\mathcal{C}$ has length $n$ and also has $m$ support-disjoint local MSR codes, each generated by $G_L$.  Thus, $\mathcal{C}$ is an MSR-local code with $(r, \delta)$ information locality.   As a result,  the right hand side of \eqref{eq:msr_info_locality_existence_thm} with $K=mr\alpha$, is an upper bound on the minimum distance, $d_{\min}$ of $\mathcal{C}$.

We will now show that it is possible to pick the matrix $Q$ such that the minimum distance of $\mathcal{C}$ is indeed given
by \eqref{eq:msr_info_locality_existence_thm}. Towards this, we treat the entries of $Q$ as indeterminates, with the $(i,j)^{\text{th}}$ element of $Q$ denoted by $x_{ij}$.  We write $G(\mathbf{X})$ to indicate that the generator matrix is a function of the matrix
$\mathbf{X} \triangleq [x_{ij}]$ and $G(Q)$, its evaluation at $\mathbf{X} = Q$.

The expression in \eqref{eq:msr_info_locality_existence_thm}  for $d_{\min}$ can be rewritten in the form
\begin{eqnarray}
 d_{\min} & = & n - mr + 1 - \left(m - 1\right)(\delta-1),
\end{eqnarray}
so that $n-d_{\min}+1=mr+(m-1)(\delta-1)=\nu$.  It follows that it suffices to show that all the $(K \times \nu \alpha)$ sub-matrices of $G(Q)$ that are obtained by selecting a set of $\nu$ thick columns drawn from the generator matrix $G$, are of full rank.

Let $S_i$ denote the support of the $i^{\text{th}}$ local code, $i \in [m]$.   If the $\nu$ thick columns have indices chosen from $\cup_{i=1}^mS_i$, then it can be shown that the full rank condition is always satisfied simply because, one is forced to pick at least $r$ thick columns with indices from the support $S_i$ of each local code. It follows that it is enough to ensure that
\begin{eqnarray}
\text{rank}(G(Q)|_T)& = & K, \ \forall \ T \in \mathcal{T},  \\
\text{where } \mathcal{T} & =&\{T\subseteq [n]:  \ |T|=mr, \ \ |T\cap S_i | \leq r, \ \forall \ i \in [m]\}.
\end{eqnarray}
Note that $G(Q)|_T$ is square $(K \times K)$ matrix.    Next, consider the set of polynomials
$f_T(\mathbf{X})=\det(G(\mathbf{X})|_T), \ T \in \mathcal{T}$, where $\det(A)$ denotes the determinant of the square matrix
$A$. Also, let  $f(\mathbf{X}) = \prod_{T \in \mathcal{T}}f_T(\mathbf{X})$. The degree of any individual indeterminate
$x_{ij}$ in $f(\mathbf{X})$ is at most $|\mathcal{T}|$. Noting that
\bean
|\mathcal{T}| & \leq &  {n \choose mr},
\eean
by applying Lemma
\ref{lem:null}, we conclude that there exists a matrix $Q$ such that $f(Q) \neq 0$, whenever the base field $\mathbb{F}_q$
has size $q > {n \choose mr}$. 

\section{Proof of Theorem \ref{thm:msr_all_symbol}} \label{app:msr_allsymbol_proof}

The proof is similar to the proof of Theorem \ref{thm:scalar_allsymbol_existence}. Let the integer $\ell$ be defined from
\bean
K & = & \alpha \ell.
\eean
The idea is to first construct a partial parity-check matrix consisting of $m$ disjoint local parity matrices, which ensures that the locality constraints are satisfied. Then we show that one can always add extra rows to this partial parity-check matrix in order to guarantee the optimum minimum distance.

Let $H_L$  denote the parity check matrix of an $((n_L,r,d), (\alpha,\beta))$ MSR code $\mathcal{C}_L$. By Lemma \ref{lem:MSR_kappa}, we know that $\mathcal{C}_L$ is vector-MDS. Thus the dual code $\mathcal{C}_L^{\perp}$, generated by $H_L$, will also be vector-MDS.   Consider the $(m(\delta-1)\alpha \times n\alpha)$ matrix $H_0$ given by
\begin{eqnarray}
 H_0 & = & \begin{bmatrix}
		    H_L &  & \dots &  \\
		     & H_L & \dots &  \\
		    \vdots & & \ddots & \vdots \\
		     &  & \dots & H_L \\
		    \end{bmatrix},
\end{eqnarray}
in which the matrix $H_L$ appears $m$ times along the diagonal.  Also, let $\mathcal{C}_0$ denote the code whose parity check matrix is $H_0$. Next, consider the code $\mathcal{C}$ whose parity check matrix $H$ is obtained by augmenting $H_0$ with additional rows as shown below:
\begin{eqnarray}
H & = & \left[ \begin{array}{c} H_0  \\ H_1\end{array}\right],
\end{eqnarray}
where $ H_1$ is an $ ( (n-\ell - m(\delta-1))\alpha \times n\alpha  )$ matrix.  As a result, $H$ is an $((n-\ell)\alpha \times n\alpha)$ matrix. Note that under optimality, $n - \ell \geq m(\delta - 1)$ (since $d_{\min} \geq \delta$ for a code with $(r,\delta)$ information locality).

Let $\mathcal{C}_0^{\perp}$, $\mathcal{C}^{\perp}$ denote the dual codes of $\mathcal{C}_0$, $\mathcal{C}$ respectively. Thus   $\mathcal{C}_0^{\perp}, \mathcal{C}^{\perp}$ are the row spaces of $H_0,H$ respectively.  It is clear that $\mathcal{C}$ has $(r,\delta)$ locality with each local code being a sub-code of an MSR code.

Let $S \subseteq [n]$ such that $|S| = \nu$ be referred to as  a $\nu$-core of $\mathcal{C}_0^{\perp}$ if $\forall \ \mathbf{c}_0 \in \mathcal{C}_0^{\perp}, \ \text{supp}(\mathbf{c}_0) \nsubseteq S$. We will now show that if the matrix $H_1$ is selected in such a way that any $S$ which is an $\ell$-core of $\mathcal{C}_0^{\perp}$ is also an $\ell$-core of $\mathcal{C}^{\perp}$, then the minimum distance of $\mathcal{C}$ will be given by \eqref{eq:msr_allsymbol_thm}.   We will subsequently show by appealing to Lemma \ref{lem:null}, that it is always possible to pick $H_1$ such that the above condition is met.

 Let $S_i, i \in [m]$ denote the disjoint supports of the $m$ local codes of $\mathcal{C}_0$ (and hence of $\mathcal{C}$ as well). Clearly, $S$ is an $\ell$-core of $\mathcal{C}_0^{\perp}$ if and only if $|S \cap S_i| \leq r, \ \forall \ i \in [m]$. We note that any $\Gamma \subset S_i, \ |\Gamma| \leq r$ can be extended to an $\ell$-core of $\mathcal{C}_0^{\perp}$. It is also clear that the code $\mathcal{C}^{\perp}$ when shortened to $S_i$ has $H_L$ as a submatrix of its generator matrix.

 Let the matrix $H_1$ be such that any $S$ which is an $\ell$-core of $\mathcal{C}_0^{\perp}$ is also an $\ell$-core of $\mathcal{C}^{\perp}$.  This has the following implication:  the code $\mathcal{C}^{\perp}$ when shortened to $S_i$ has generator matrix $H_L$,   for otherwise, if it were to contain one or more additional rows, we would be able to find a code-word $\mathbf{c'}$ of $\mathcal{C}^{\perp}$ that does not belong to $\mathcal{C}_0^{\perp}$ and that is supported on  $\Gamma \subset S_i, \ |\Gamma| \leq r$.  But this would then contradict the assumption that any $S$ which is an $\ell$-core of $\mathcal{C}_0^{\perp}$ is also an $\ell$-core of $\mathcal{C}^{\perp}$, as $\ell\geq r$.  We thus conclude that $\mathcal{C}|_{S_i}$ is an MSR code whose parity check matrix is given by $H_L$. Thus the code $\mathcal{C}$ is an MSR-local code with $(r,\delta)$ all-symbol locality (we were able to assert earlier only that each local code is a sub-code of an MSR code).

Let $G$ denote the generator matrix of $\mathcal{C}$. Note that if $S$ is any $\ell$-core of $\mathcal{C}_0^{\perp}$ (and hence of $\mathcal{C}^{\perp}$ as well), it must be that $\text{Rank}\left(G|_S\right) = \ell\alpha = K$, because this means that there can not be any dependencies in the $\ell$ thick columns of $G|_S$.

We continue under the assumption as above, that any $S$ which is an $\ell$-core of $\mathcal{C}_0^{\perp}$ is also an $\ell$-core of $\mathcal{C}^{\perp}$.  Next, let $T \subseteq [n]$ of size $|T| \geq \ell$ be such that $\text{Rank}\left(G|_T\right) < K$ (Such a $T$ can be constructed by ensuring that $ |T \cap S_i| \geq (r+1)$ for some $i$). Clearly, $T$ does not contain any $\ell$-core of $\mathcal{C}^{\perp}$ (and hence does not contain any $\ell$-core of $\mathcal{C}_0^{\perp}$ as well), which implies that at least for some $i \in [m], |T \cap S_i| \geq (r+1)$. Let the integers $b_{j}, 1 \leq j \leq (\delta -1)$ be defined as follows:
\begin{eqnarray}
b_{j} & \triangleq & \left|\left\{ i \in [m] : \  |S_i \cap T| = r+j \right\}\right|.
\end{eqnarray}
Also, define the sets $Q_i, i \in [m]$ as follows:
\begin{eqnarray}
Q_i & = & \left\{\begin{array}{cc} T \cap S_i, & \text{if} \ |T \cap S_i| \leq r  \\
                                 (\text{any}) \ Q_i' \subset T \cap S_i, \text{ s.t.} \ |Q_i'| = r, &  \ \text{if} \ |T \cap S_i| > r . \end{array}  \right.
\end{eqnarray}
Also, let $Q = \cup_{i = 1}^{m}Q_i$. Note that $|Q| = |T|-\sum_{j = 1}^{\delta -1}j b_{j}$. Clearly, the set $Q$ is a $|Q|$-core of $\mathcal{C}^{\perp}$ and
\begin{eqnarray} \label{eq:msr_allsymbol_proof1}
 |T|-(\delta-1)\left(\sum_{j=1}^{\delta-1} b_j \right) \ \leq |T|-\sum_{j = 1}^{\delta -1}j b_{j} & = & |Q| \  \leq \ell - 1.
\end{eqnarray}
Next, let $M = \{ i \in [m] : |S_i \cap T| \geq r + 1 \}$and note that $|M|=(\sum_{j=1}^{\delta-1} b_j)$.  If we pick $r$ elements from each set $S_i, i \in M$, we will then obtain an $(|M|r)$-core. Thus we have that
\begin{eqnarray} \label{eq:msr_allsymbol_proof2}
 \sum_{j=1}^{\delta-1} b_j \leq \left\lfloor \frac{\ell-1}{r} \right\rfloor = \left \lceil \frac{\ell}{r} \right \rceil -1.
\end{eqnarray}
Combining \eqref{eq:msr_allsymbol_proof1} and \eqref{eq:msr_allsymbol_proof2}, we get that
\begin{eqnarray}
 |T| &\leq& \ell - 1 + (\delta-1)\left(\left\lceil\frac{\ell}{r}\right\rceil -1\right) \\
 &=& \frac{K}{\alpha} - 1 + (\delta-1)\left(\left\lceil\frac{K}{r\alpha}\right\rceil -1\right).
\end{eqnarray}
It follows from Lemma \ref{lem:fact_dmin} that
\bean
d_{\min} & \geq & n- \left( \frac{K}{\alpha} - 1 + (\delta-1)\left(\left\lceil\frac{K}{r\alpha}\right\rceil -1\right) \right),
\eean
and then from the K-bound \eqref{eq:URA_MSR_bound} that the code $\mathcal{C}$ has minimum distance equal to
\begin{eqnarray}
 d_{\min} & = & n - \frac{K}{\alpha} + 1 + (\delta-1)\left(\left\lceil\frac{K}{r\alpha}\right\rceil -1\right).
\end{eqnarray}

It  remains to be proved that one can pick a matrix $H_1$ such that any $S$ which is an $\ell$-core of $\mathcal{C}_0^{\perp}$ is also an $\ell$-core of $\mathcal{C}^{\perp}$. Towards this, consider a set $S$ such that $|S| = \ell$ and let $S^{c}$ denote the set $[n]\backslash S$. Note that $S$ is an $\ell$-core of $\mathcal{C}^{\perp}$ if and only if the square matrix $H|_{S^{c}}$ is full rank, i.e., $\det \left(H|_{S^{c}}\right) \neq 0$. Now, we need to pick $H_1$ such that for all $S\subset [n]$,  $S$  an $\ell$-core of $\mathcal{C}_0^{\perp}$,  $\det (H|_{S^C})\neq 0$.   This can be done using a similar technique as in Theorem \ref{thm:sum_msr_optimality} where we used Lemma \ref{lem:null} to pick the matrix $Q$ in \eqref{eq:last_label_for_Q}. One can show that there exists a matrix $H_1$ such that $H|_{S^{c}}$ is full rank for all $S$, $\ell$-core of $\mathcal{C}_0^{\perp}$. Here also we take take $H_1$ to be a matrix of indeterminates and each $\ell$-core of $\mathcal{C}_0^{\perp}$, gives us a determinant and hence a non-zero polynomial whose evaluation must be nonzero. Noting that $\mathcal{C}_0^{\perp}$ has at-most ${n \choose \ell}$ $\ell$- cores and  using the Combinatorial Nullstellensatz of Lemma \ref{lem:null}  as we did in  Theorem \ref{thm:sum_msr_optimality}, we conclude that such $H_1$ can be picked if the field size  $q > {n \choose \ell}$.

\section{Proof of Theorem \ref{thm:mbr_allsymbol_existence}} \label{app:mbr_allsymbol_existence_proof}

All claims in the theorem are clear with the exception of the claim concerning the minimum distance.
 Since $K_L \mid K$, an upper bound on $d_{\min}$ from \eqref{eq:URA_MBR_bound} is given by
\begin{eqnarray}
d_{\min} & \leq & n-\frac{K}{K_L}  r +1  - \left( \frac{K}{K_L} -1 \right) (\delta-1) \label{eq:mbr_allsymbol_proof1} \\
& = & m(r+\delta-1)-\ell r+1 -(\ell-1)(\delta-1), \\
& = & (m-\ell)n_L + \delta. \label{eq:MBR_all_symbol equality}
\end{eqnarray}
It suffices to show that any pattern of $\delta + (m-\ell)n_L - 1$ erasures can be corrected by the code. Towards this, we note that the scalar code $\mathcal{A}$ employed in Construction \ref{constr:mbr_allsymbol_existence} has minimum distance given by
\bean
D_{\min} & = &(m-\ell)N_L+\Delta_L  \ = \ (m-\ell){n_L \choose 2} + {\delta-1 \choose 2} +1 .
\eean

Now we argue that when any pattern of $(m-\ell)n_L + \delta-1$ vector code symbols are erased, this leads to the erasure of at most $D_{\min}-1$ scalar code symbols. This would imply that the code, $\mathcal{A}$, can recover from this many erasures and hence, so can the vector code $\mathcal{C}$.

As in the proof of Theorem \ref{thm:mbr_info_locality}, for a given pattern of  $(m-\ell)n_L + \delta-1$ vector code symbol erasure, let $\gamma_i, 1\leq i\leq m$ be the number of code-word symbols erased from the $i^{\text{th}}$ local code among these symbols. Note that $0 \leq \gamma_i \leq \delta -1 < n_L,\ 1 \leq i \leq m$ and $\sum_{i=1}^{m}\gamma_i=(m-\ell)n_L+\delta-1$. Thus the number of scalar code symbols lost by the code in this pattern of erasures, $L$ has to be that
\bean
L=\sum_{i=1}^{m} {\gamma_i \choose 2} \leq (m-\ell){n_L \choose 2} + {\delta -1 \choose 2} = D_{\min}-1,
\eean
where we have used the fact that ${a \choose 2} +{b\choose 2} \leq {a+b \choose 2}$. The result follows.

\section{Proof of Theorem \ref{thm:kappa_bound}} \label{app:kappa_bound_proof}

We will make use of the following two facts and Lemma \ref{lem:fact_dmin} to prove the theorem. Their proofs are straightforward and are hence omitted.

\vspace{0.1in}

\begin{lem} \label{lem:infosets_1}
 Consider two sets $S_1$ and $S_2$ such that $S_1 \subset S_2 \subseteq [n]$, and
\begin{eqnarray}
 \text{q-dim}\left(\mathcal{C}|_{S_2}\right) - \text{q-dim}\left(\mathcal{C}|_{S_1}\right) & \triangleq & \Delta \nu \ > \ 0.
\end{eqnarray}
Then, if $\mathcal{I}$ is any minimum cardinality information set for $\mathcal{C}|_{S_2}$, then it must true that $\left|\mathcal{I} \cap \left(S_2\backslash S_1
\right)\right| \geq \Delta \nu$.
\end{lem}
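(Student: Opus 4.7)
The plan is to argue by a substitution/exchange-type idea: replace the portion of $\mathcal{I}$ lying inside $S_1$ by a minimum-cardinality information set of the subcode $\mathcal{C}|_{S_1}$, and then show the resulting set is still a spanning subset of $S_2$, whose size is therefore bounded below by $\text{q-dim}(\mathcal{C}|_{S_2})$. Write $\kappa_i = \text{q-dim}(\mathcal{C}|_{S_i})$, so $|\mathcal{I}| = \kappa_2$ and $\Delta\nu = \kappa_2 - \kappa_1$. Partition $\mathcal{I} = \mathcal{I}_a \cup \mathcal{I}_b$, where $\mathcal{I}_a = \mathcal{I} \cap S_1$ and $\mathcal{I}_b = \mathcal{I} \cap (S_2 \setminus S_1)$, which are disjoint so that $|\mathcal{I}_a| + |\mathcal{I}_b| = \kappa_2$.

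The main steps I would carry out are the following. First, pick any minimum-cardinality information set $\mathcal{J} \subseteq S_1$ for $\mathcal{C}|_{S_1}$, so that $|\mathcal{J}| = \kappa_1$ and $\sum_{i \in \mathcal{J}} W_i = \sum_{i \in S_1} W_i$. Second, set $T = \mathcal{J} \cup \mathcal{I}_b$ and verify that $T$ spans $\mathcal{C}|_{S_2}$: since $\mathcal{I}_a \subseteq S_1$, the subspace $\sum_{i \in \mathcal{I}_a} W_i$ is contained in $\sum_{i \in S_1} W_i = \sum_{i \in \mathcal{J}} W_i$, and therefore
\begin{equation*}
\sum_{i \in T} W_i \ = \ \sum_{i \in \mathcal{J}} W_i + \sum_{i \in \mathcal{I}_b} W_i \ \supseteq \ \sum_{i \in \mathcal{I}_a} W_i + \sum_{i \in \mathcal{I}_b} W_i \ = \ \sum_{i \in \mathcal{I}} W_i \ = \ \sum_{i \in S_2} W_i.
\end{equation*}
Third, note that $T \subseteq S_2$ by construction, so any minimal spanning subset $T' \subseteq T$ is an information set for $\mathcal{C}|_{S_2}$, and hence $|T'| \geq \kappa_2$ by the very definition of $\kappa_2$ as the minimum cardinality over all information sets. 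This gives $|T| \geq |T'| \geq \kappa_2$. Finally, $\mathcal{J}$ and $\mathcal{I}_b$ are disjoint (the former lies in $S_1$ while the latter lies in $S_2 \setminus S_1$), so $|T| = |\mathcal{J}| + |\mathcal{I}_b| = \kappa_1 + |\mathcal{I}_b|$, and chaining yields $|\mathcal{I}_b| \geq \kappa_2 - \kappa_1 = \Delta\nu$, which is the claim.

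The proof is short and has no serious obstacle; the one point requiring care is the use of $\kappa_2$ in the third step. One must invoke that $\kappa_2$ is a global lower bound on the size of \emph{every} information set of $\mathcal{C}|_{S_2}$ (not merely of the distinguished $\mathcal{I}$ one started with), so that any minimal spanning subset of $T \subseteq S_2$ automatically has cardinality at least $\kappa_2$. Beyond this, the argument is a direct application of the spanning/minimality framework developed for the quasi-dimension in Section~\ref{sec:vec_code_prelims}.
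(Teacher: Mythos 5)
Your exchange argument is correct: replacing $\mathcal{I}\cap S_1$ by a minimum-cardinality information set $\mathcal{J}$ of $\mathcal{C}|_{S_1}$ preserves the span of $G|_{S_2}$, and since $T=\mathcal{J}\cup\mathcal{I}_b\subseteq S_2$ contains an information set of $\mathcal{C}|_{S_2}$, the chain $|\mathcal{J}|+|\mathcal{I}_b|=|T|\geq\kappa_2$ gives $|\mathcal{I}_b|\geq\Delta\nu$. The paper states that the proof of this lemma is straightforward and omits it, so there is nothing to compare against; your argument is the natural one and supplies the missing detail cleanly, including the key point that $\kappa_2$ lower-bounds the cardinality of \emph{every} information set of $\mathcal{C}|_{S_2}$, not just the given $\mathcal{I}$.
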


\vspace{0.1in}

\begin{lem} \label{lem:infosets_2}

Consider two sets $S_1$ and $S_2$, such that $S_1 \subseteq S_2 \subseteq [n]$, and $\text{rank}\left(G|_{S_1}\right) =
\text{rank}\left(G|_{S_2}\right) = K$. Then $\text{q-dim}\left(\mathcal{C}|_{S_1} \right) \geq
\text{q-dim}\left(\mathcal{C}|_{S_2} \right)$.
\end{lem}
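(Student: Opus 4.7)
The plan is to prove the inequality by producing a minimum-cardinality information set for $\mathcal{C}|_{S_2}$ whose size is at most the q-dimension of $\mathcal{C}|_{S_1}$. The idea is that, thanks to the hypothesis $\text{rank}(G|_{S_1}) = \text{rank}(G|_{S_2}) = K$, any minimum information set for the restriction to the smaller set also serves as a (minimum-)information set for the restriction to the larger set.

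Concretely, I would proceed as follows. First, let $\mathcal{I}_1 \subseteq S_1$ be a minimum-cardinality information set for $\mathcal{C}|_{S_1}$, so that $|\mathcal{I}_1| = \text{q-dim}(\mathcal{C}|_{S_1})$. By the definition of an information set stated in Section~\ref{sec:vec_code_prelims} (applied to the code $\mathcal{C}|_{S_1}$, whose generator matrix is $G|_{S_1}$), we have $\text{rank}(G|_{\mathcal{I}_1}) = \text{rank}(G|_{S_1}) = K$, and moreover no proper subset $\mathcal{I}' \subsetneq \mathcal{I}_1$ satisfies $\text{rank}(G|_{\mathcal{I}'}) = K$.

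Second, I would observe that both of these properties of $\mathcal{I}_1$ are intrinsic to $G|_{\mathcal{I}_1}$ and its sub-restrictions and do not refer to the ambient support $S_1$ in any way. Since $\mathcal{I}_1 \subseteq S_1 \subseteq S_2$ and $\text{rank}(G|_{S_2}) = K$ by hypothesis, the same two properties certify $\mathcal{I}_1$ as an information set for $\mathcal{C}|_{S_2}$ as well. Hence
\begin{equation*}
\text{q-dim}(\mathcal{C}|_{S_2}) \ \leq \ |\mathcal{I}_1| \ = \ \text{q-dim}(\mathcal{C}|_{S_1}),
\end{equation*}
which is the desired inequality.

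There is no real obstacle here; the lemma is essentially a definitional consequence, and the only thing one must be careful about is not to conflate ``information set'' (a minimal full-rank subset) with an arbitrary full-rank subset, so that the minimality clause transfers correctly from $\mathcal{C}|_{S_1}$ to $\mathcal{C}|_{S_2}$ without any additional work. The equal-rank hypothesis is crucial: it is precisely what ensures that reaching rank $K$ inside $S_1$ is the same condition as reaching rank $K$ inside $S_2$, so that $\mathcal{I}_1$ qualifies in both ambient contexts.
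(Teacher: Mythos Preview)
Your proof is correct and is precisely the straightforward definitional argument the paper has in mind; the paper itself omits the proof, stating only that it is ``straightforward.'' There is nothing to add or compare.
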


\vspace{0.1in}

We will assume in the proof, that we are given an $[n,K,d_{\min},\alpha]$ code $\mathcal{C}$ which  has $(r, \delta)$ information locality. We will construct a set $T \subseteq [n]$ such that $\text{rank}\left(G|_T\right) < K$ using Algorithm \ref{alg:URA_bound}(the same Algorithm which is used in the proof of Theorem \ref{thm:URA_bound}), and then apply Lemma \ref{lem:fact_dmin} to get the required result.

Let the algorithm exit after $J$ iterations, i.e., $j = J$ when the algorithm exits.  Let $S'' = T_{J-1}
\cup S_i$, where $i$ is the index picked in the $J^{\text{th}}$ iteration.  Note that necessarily, $\text{Rank}\left(G|_{S''} \right) = K$.

Let $\mathcal{I}'$ denote a minimum cardinality information set for $\mathcal{C}|_{S''}$. Clearly, it must be true that
\begin{eqnarray} \label{eq:localitythmproof_temp1}
 J & \geq & \left\lceil \frac{|\mathcal{I}'|}{r} \right\rceil.
\end{eqnarray}
Next, for $j \in [J]$, let
\begin{eqnarray}
s_j & = & |T_j| - |T_{j-1}| \ , \nonumber \\
\nu_j & = & \text{q-dim}(\mathcal{C}|_{T_j}) - \text{q-dim}(\mathcal{C}|_{T_{j-1}}). \label{eq:localitythmproof_temp6}
\end{eqnarray}
We claim that for $j \in [J-1]$,
\begin{eqnarray} \label{eq:locality_proof_condition1}
s_j & \geq & \nu_j + (\delta - 1).
\end{eqnarray}
To see this, first note whenever we pick $i \in \mathcal{L}$ such that $V_i \nsubseteq \sum_{\ell \in T_j}W_{\ell}$, since
$d_{\text{min}}\left(\mathcal{C}|_{S_i}\right) \geq \delta - 1$, it must be true that $s_j  \geq  1 + (\delta - 1)=\delta$.
Also, whenever $\nu_j > 0$, Lemma \ref{lem:infosets_1} implies that \eqref{eq:locality_proof_condition1} must be true
and thus we see that \eqref{eq:locality_proof_condition1} is true always.  We also have that
\begin{eqnarray}
 s_J & \geq & \nu_J. \label{eq:locality_proof_condition2}
\end{eqnarray}
Summing up, we obtain
that
\begin{eqnarray}
 |T_J| \ = \  \sum_{1 = 1}^{J}s_j & \geq &  \sum_{1 = 1}^{J}\nu_j \ + \ (J-1)(\delta - 1) \\
 & \geq & \sum_{1 = 1}^{J}\nu_j \ + \ \left(\left\lceil \frac{|\mathcal{I}'|}{r} \right\rceil-1\right)(\delta - 1)
\label{eq:locality_proof_temp3}\\
& \geq & (|\mathcal{I}'| - 1) \ + \ \left(\left\lceil \frac{|\mathcal{I}'|}{r}\right\rceil-1\right)(\delta - 1),
\label{eq:locality_proof_temp4}
\end{eqnarray}
where \eqref{eq:locality_proof_temp3} follows from \eqref{eq:localitythmproof_temp1} and \eqref{eq:locality_proof_temp4}
follows by noting that
\begin{eqnarray} \label{eq:localitythmproof_temp5}
\sum_{1 = 1}^{J}\nu_j & = & \text{q-dim}(\mathcal{C}|_{T_J}) \geq |\mathcal{I}'| - 1,
\end{eqnarray}
which is because of the  maximality of $S_{\text{end}}$ in $S_i$ (i.e., even adding one more element of $S_i$ to $S_{\text{end}}$
in step 8 of Algorithm 1 would result in an accumulated rank of $K$ and thus $ \text{q-dim}(\mathcal{C}|_{T_J}) \ \geq \ |\mathcal{I}'| - 1$.)  Now,
since $\text{rank}(G|_{T_J}) < K$, Lemma \ref{lem:fact_dmin} can be applied to give that
\begin{eqnarray}
d_{\text{min}} & \leq & n - |\mathcal{I}'| + 1 - \left(\left \lceil \frac{|\mathcal{I}'|}{r}\right \rceil - 1\right)(\delta
- 1), \label{eq:locality_proof_I''} \\
 & \leq & n - |\mathcal{I}_0| + 1 - \left(\left \lceil \frac{|\mathcal{I}_0|}{r}\right \rceil - 1\right)(\delta
- 1). \label{eq:locality_proof_temp_I_0}
\end{eqnarray}
where, as $S'' \subseteq \cup_{i \in \mathcal{L}}S_i$ and thus Lemma \ref{lem:infosets_2}, we have that $|\mathcal{I}'| \geq |\mathcal{I}_0|$ leading to \ref{eq:locality_proof_temp_I_0}, where $\mathcal{I}_0$ is as defined in Theorem \ref{thm:kappa_bound}.  The bound in \eqref{eq:locality_looser_kappa}, then follows from Lemma~\ref{lem:infosets_2}.    Further, since $\kappa \geq \left\lceil \frac{K}{\alpha}\right\rceil$, \eqref{eq:locality_looser_kappa} can be upper bounded as
follows:
\bean
 d_{\text{min}} & \leq & n - \kappa + 1 - \left(\left \lceil \frac{\kappa}{r}\right \rceil - 1\right)(\delta - 1) \\
& \leq & n - \left\lceil \frac{K}{\alpha}\right\rceil + 1 - \left(\left \lceil \frac{1}{r} \left\lceil
\frac{K}{\alpha}\right\rceil\right \rceil - 1\right)(\delta - 1) \\
& = & n - \left\lceil \frac{K}{\alpha}\right\rceil + 1 - \left(\left \lceil \frac{K}{r\alpha}\right\rceil- 1\right)(\delta
- 1),
\eean
where the last equation follows since $\left \lceil \frac{1}{r} \left\lceil
\frac{K}{\alpha}\right\rceil\right \rceil = \left \lceil \frac{K}{r\alpha}\right\rceil$.
This concludes the proof of the theorem.

\end{document}